\tikzset{every picture/.style={node distance=4em,->,>=stealth',shorten >=1pt,auto, font=\scriptsize}} 
\tikzset{initial text=, initial distance=1em} 
\tikzset{accepting by double} 
\tikzstyle{every state}=[minimum size=1em, inner sep=0pt, font=\scriptsize]
\newcommand*{\gob}{\textrm{\tiny $\bullet$}}
\newcommand*{\gout}{\raisebox{.5pt}{\kern-1pt\textrm{\tiny $\times$}\kern-1pt}}
\newcommand*{\goc}{\raisebox{-1.2pt}{\textrm{\tiny
      $\bullet$}}\kern-3.7pt\raisebox{2pt}{\textrm{\tiny $\bullet$}}}
\newcommand{\takeout}[1]{}
\newcommand{\calA}{\mathcal{A}}
\newcommand{\calG}{\mathcal{G}}
\newcommand{\calK}{\mathcal{K}}
\newcommand{\calD}{\mathcal{D}}
\newcommand{\calP}{\mathcal{P}}
\newcommand{\bbZ}{\mathbb{Z}}
\newcommand{\bbN}{\mathbb{N}}
\newcommand{\cbot}{\mathit{bot}}
\newcommand{\CKS}{\textsc{cks}\xspace}%
\newcommand{\RCKS}{$\omega$-\textsc{cks}\xspace}%
\newcommand{\RMKS}{$\omega$-\textsc{mks}\xspace}%
\newcommand{\MKS}{\textsc{mks}\xspace}%
\newcommand{\CK}{\mathit{CK}}
\newcommand{\State}{\mathit{Ter}}
\newcommand{\outcome}{\mathrm{outcome}}
\newcommand*{\cut}[1]{\kern-0.1em\restriction_{#1}\kern0.0em}
\newcommand*{\tck}[1]{{#1}^{\tiny \textsc{ck}}}
\newcommand*{\igap}[2]{\llbracket #1,#2 \rrbracket}
\newcommand{\pass}{\mathit{in}}
\newcommand{\call}{\mathit{out}}
\newcommand{\PSPACE}{\textsc{PSpace}}
\newcommand{\NLOGSPACE}{\textsc{NLogSpace}}
\newcommand{\NEXPTIME}{\textsc{NExpTime}\xspace}
\newcommand{\NTIME}{\textsc{NTime}}
\newcommand{\EXPTIME}{\textsc{ExpTime}}
\newtheorem{theorem}{Theorem}[section]
\newtheorem{prop}[theorem]{Proposition}
\newtheorem{lemma}[theorem]{Lemma}
\newtheorem{corol}[theorem]{Corollary}
\theoremstyle{definition}
\begin{document}

\begin{frontmatter}

\title{Infinite games with finite knowledge gaps}

\author[LSV]{Dietmar Berwanger}
\ead{dwb@lsv.fr}

\author[LSV,IMSC]{Anup Basil Mathew}
\ead{anupbasil@imsc.res.in}

\address[LSV]{LSV, CNRS \& ENS Cachan, Universit\'e Paris-Saclay, France}
\address[IMSC]{Institute of Mathematical Sciences, Chennai, India}

\begin{abstract}
Infinite games where several players seek to coordinate under
imperfect information are deemed to be undecidable, unless the
information is hierarchically ordered among the players.

We identify a class of games for which joint winning
strategies can be constructed effectively 
without restricting the direction of information flow. 
Instead, our condition
requires that the players attain common knowledge 
about the actual state of the game 
over and over again along every play.
 
We show that it is decidable whether a given game satisfies the
condition, and prove tight complexity bounds for the  
strategy synthesis problem under $\omega$-regular winning conditions given
by deterministic parity automata.
\end{abstract}

\begin{keyword}
games on graphs \sep infinite games \sep imperfect information \sep
distributed synthesis \sep coordination

\MSC[2010] 05C57 \sep 68M14 \sep 91A06 \sep 91A28 \sep 93B50
\end{keyword}

\end{frontmatter}


\section{Introduction}

Automated synthesis of systems that are correct by construction 
is a persistent ambition of computational engineering. 
One major challenge consists in controlling 
components that have only partial information about the 
global system state. 
Building on automata and game-theoretic foundations,
significant progress has been made
towards synthesising finite-state components that interact 
with an uncontrollable environment either individually, 
or in coordination with other controllable components\,---\,%
provided the information they 
have about the global system is distributed hierarchically~\cite{PnueliRos89,KupfermanVar01}.
For the general case, however, it was shown that 
the problem of coordinating 
two or more components of a distributed system with 
non-terminating executions is
undecidable~\cite{PnueliRos90,ArnoldWal07}. 

The distributed synthesis problem can be formulated alternatively
in terms of games between
$n$~players (the components) that move along the edges of a finite
graph  
(the state-transitions of the global system) with imperfect information about the current position and the moves of the other players. 
The outcome of a play 
is a possibly infinite path (system execution) 
determined by the joint actions of the players 
and moves of Nature (the uncontrollable environment). 
The players have a common winning condition:
to form a path that corresponds to a correct 
execution with respect to 
the system specification, no matter how Nature moves.
Winning conditions may be specified by finite-state automata, temporal logics, 
or in the canonical form of parity conditions.
Thus, distributed synthesis under partial information corresponds to 
the 
problem  of constructing a winning profile of finite-state 
strategies 
in a coordination game with imperfect information. 
This problem was shown to be undecidable by Peterson and 
Reif~\cite{PetersonRei79}, 
already for the basic setting of two players 
with a reachability condition; 
infinitary winning conditions, which lead to higher degrees of
undecidability, have been studied by Janin~\cite{Janin07}.

As in the case of distributed systems, 
decidable classes of coordination games 
rely on restrictions of the
information flow according to an order among the 
players~\cite{AzharPetRei01,Kai06,BerwangerMatVDB15}.
In their survey article on the complexity of multiplayer
games, Azhar, Peterson, and Reif conclude that
``[i]n general, multiplayer games of incomplete information can be
undecidable, unless the information is 
hierarchically arranged''\cite[p. 991]{AzharPetRei01}. 

The undecidability arguments cited above share a basic scenario: 
two players become uncertain 
about the current state of the game, due to moves of Nature.
The structure of the game requires them to 
take into account not only their 
first-order uncertainty about the actual state, 
but also the higher-order uncertainty of 
one player about the knowledge of the other.
Finally, the players can win only by attaining
common knowledge about a property of the actual 
history, which requires them to 
maintain knowledge hierarchies of 
increasing height, as the play proceeds. 
The scenario is set up so that the 
uncertainty never vanishes and the knowledge
hierarchies grow unboundedly, which leads to 
undecidability~\cite{BerwangerKai10}.
  
One systematic approach to characterising
undecidable classes of problems that involve 
multiple players with imperfect information is the
 \emph{information fork} criterion formulated 
by Finkbeiner and 
Schewe~\cite{FinkbeinerSch05}. 
The criterion applies to the distributed synthesis problem 
in the basic setting of Pnueli and Rosner~\cite{PnueliRos90},
for fixed architectures with
synchronous communication channels. 
Intuitively, 
an architecture has an information fork if it allows for 
two players to reach a situation 
in which neither one can infer
the observation received by the other player 
from his own observation.
Under this condition,  
distributed architectures may allow
the knowledge of players to diverge
over an unbounded number of rounds for particular specifications.
Conversely, all architectures 
that do not contain an information fork 
admit an information ordering among the players and are therefore
decidable.

When applied to games, however,
the information fork criterion yields only a 
coarse classification for decidability, 
as the parametrisation over architectures 
has no natural correspondent in terms of 
game graphs.
Indeed, the set of game
instances obtained by modelling a
given family of distributed systems 
may be solvable, in spite of possible
information forks in the underlying architecture. 
This can occur, for instance, if the information flow
affected by the fork is inessential to the players, 
or if the divergence between the knowledge of players 
vanishes after few rounds.

In this article, we identify a new condition for the decidability of 
coordination games with imperfect information that does not rely 
on the hierarchical arrangement of information. 
Similar to the information fork approach, our focus is on
situations in which the knowledge of players
diverges. We use the term
\emph{knowledge gap} to describe an interval of rounds 
at which the players do not attain common knowledge 
about the actual state. Essentially, our condition 
requests that all knowledge gaps of a game are finite, 
or in other words, that the players attain common knowledge 
of the actual game state
infinitely often, along every play.   
In this case we say that the game allows 
for \emph{recurring common knowledge of the state}.

Questions about common knowledge 
in infinite runs are typically hard.  
In their study of 
the model checking complexity of epistemic temporal
logics~\cite{vdMeydenShi99}, 
van der Meyden and Shilov 
point out that already the problem of determining 
whether the players attain common knowledge about 
an atomic property is
undecidable, even for synchronous models as we consider here. 
Indeed, it turns out that
it is undecidable whether 
the players attain
common knowledge of the state at any history within a 
given part of the game graph (Proposition~\ref{prop:ck-existential}). 

Surprisingly, the situation improves 
when we look at the recurring formulation relevant 
for our characterisation: We are able to show 
that the question of whether the common-knowledge property holds
infinitely often, on every play in a game is decidable with low
complexity. This has several favourable consequences
for solving infinite coordination games with imperfect information.

Our results are summarised as follows:
\begin{enumerate}[(1)]
\item The question of whether a game for $n$ players 
  with imperfect information satisfies the condition 
  of recurring common knowledge of the state is decidable in $\NLOGSPACE$.
\item If a coordination game for $n$ players 
  with imperfect information satisfies the condition of recurring
  common knowledge of the state, then the problem of whether a joint 
  winning strategy exists is decidable, and it is 
  $\NEXPTIME$-complete.
\item If there exists a joint winning strategy in a game
  with recurring common knowledge of the state, then there also 
  exists a profile of finite-state strategies 
  of exponential size, which can be synthesised in $2\EXPTIME$.
\end{enumerate}

The conclusions rely on three key arguments. Firstly, we show that
under recurring common knowledge of the state,
the intervals where the current state of the game is 
not common knowledge are 
bounded uniformly. This implies that 
the \textit{perfect-information tracking} of
such a game is finite, which yields decidability of 
the strategy synthesis problem as a consequence 
of a metatheorem from~\cite{BKP11}. Secondly, we characterise 
recurring \emph{common} knowledge in terms of 
recurring \emph{mutual} knowledge.  
This allows us to establish tight complexity bounds.
Finally, we prove that the problem of solving 
imperfect-information games with recurring common knowledge of the
state can be 
reduced to solving parity games with perfect information, at a
relatively low cost in terms of complexity.

\section{Basic notions}

\subsection{Coordination games with imperfect information}

Our game model is close to that of concurrent games \cite{AlurHK02}.
There are $n$ players $1$, \dots, $n$ and a distinguished agent 
called Nature. The \emph{grand coalition} is the set 
$\{1, \dots, n \}$  
of all players. 
We refer to a list of elements
$x=(x^i)_{1\le i \le n}$, one for each player, as a \emph{profile}.  

For each player~$i$, we fix a set $A^i$ of \emph{actions} and a set $B^i$ of \emph{observations}, finite unless stated otherwise.
The \emph{action space} $A$ consists of all action profiles.
A~\emph{game graph} $G = (V, E, (\beta^i)_{1 \le i \le n})$ 
consists of a finite set $V$ of nodes called \emph{states}, 
an edge relation $E \subseteq V \times A \times V$ 
representing simultaneous \emph{moves} 
labelled by action profiles, 
and a profile of \emph{observation} functions $\beta^i: V \to B^i$
that label every state with an observation, for each player.
We assume that for every
state $v$ and every
action profile~$a$ there is an outgoing move $(v, a, v') \in E$. 
For convenience, we will 
include special sink states from which any
outgoing move is a self loop. 

Plays start at an initial state $v_0 \in V$ known to all players and proceed in rounds where each player~$i$ chooses an action $a^i \in A^i$, then  
Nature chooses a successor state $v'$ 
reachable via a move $(v, a, v') \in E$, and
each player~$i$ receives the observation $\beta^i( v')$. 
Notice that the players are not informed about the action
chosen by other players, nor about the state chosen by Nature. 

Formally, a \emph{play} 
is an infinite sequence $\pi = v_0, a_1, v_1, a_2, v_2, \dots$ 
alternating between positions and action profiles 
with $(v_{\ell}, a_{\ell+1}, v_{\ell + 1}) \in E$, 
for all $\ell \ge 0$.
A~\emph{history} is a prefix $\pi = v_0, a_1, v_1, \dots, a_{\ell},
v_{\ell}$ of a play; we refer to $\ell$ as the
\emph{length} of the history.
For convenience, we omit commas in the sequence notation and 
write plays and histories as words 
$\pi = v_0\,a_1 v_1\, a_2 v_2 \dots$. 
Whenever we refer to a
finite prefix $\rho$ of a play or history $\pi$, 
we mean a history $\rho \in V(AV)^*$ such that $\pi = \rho\tau$ for
some $\tau$ in $(AV)^*$ or $(AV)^\omega$; 
further, we call $\pi$ a
\emph{prolongation} and~$\tau$ a
\emph{continuation} of $\rho$.

The observation function is extended from states 
to histories and plays 
by setting $\beta^i( \pi ) := \beta^i( v_0 )\, a_1^i \beta^i (v_1) \dots$
We say that 
two histories $\pi, \pi'$ are \emph{indistinguishable} to Player~$i$, and write 
$\pi \sim^i \pi'$, if $\beta^i(\pi) = \beta^i(\pi')$. 
This is an equivalence relation, and its classes are 
called the \emph{information sets} of Player~$i$. 
A game (graph) 
with \emph{perfect information} is one where all information sets are
singletons. In general, we do not assume that this is the case, so
we speak about games with \emph{imperfect information}.

When viewed as a distributed system in the taxonomy of
Halpern and Vardi~\cite{HalpernVar89}, 
our game model belongs to the
class of \emph{synchronous} systems 
with \emph{perfect recall}. 
This is implicit in our definition of observation functions:
the players are able to distinguish between 
histories of different length (synchronicity), 
and if two histories are indistinguishable 
for a player~$i$ at round~$\ell$, 
then so are they at any previous round~$r < \ell$ 
(perfect recall).

A \emph{strategy} for Player~$i$ is a mapping $s^i: V(AV)^* \to A^i$
from histories to actions
such that  $s^i ( \pi ) = s^i( \pi')$, 
for any pair ~$\pi \sim^i \pi'$ 
of indistinguishable histories.
We denote the set of all strategies of Player~$i$ 
by $S^i$ and the set of all strategy profiles by $S$.
A~history or play $\pi = v_0 \, a_1  v_1 \dots$ 
\emph{follows} the strategy $s^i \in S^i$ if  
$a_{\ell+1}^i = s^i( v_0 \, a_1 v_1 \dots a_{\ell} v_{\ell})$, 
for every  $\ell \ge 0$.
For the grand coalition, the play $\pi$ follows a strategy
profile~$s \in S$ if it follows all strategies~$s^i$.
The set of  possible \emph{outcomes} of a strategy profile~$s$ 
is the set of plays that follow~$s$.

A general \emph{winning condition} over a game graph $G$ is a set 
$W \subseteq (V\kern-2pt A)^\omega$ of plays. 
A~\emph{coordination game} $\calG = (G, W)$ 
is described by a game graph and a winning condition. 
We say that a play $\pi$ on $G$ is winning in~$\calG$ if $\pi \in
W$. 
A~strategy profile $s$ is winning in~$\calG$, 
if all its possible outcomes are so. In this case, we refer to $s$ as
a \emph{joint winning strategy}. 

With a view to effective algorithms for synthesising strategies, 
we focus on finitely presented games where the winning
condition is described by 
a colouring function $\gamma: V \to C$ with a finite
range of colours, and an $\omega$-regular set $W \subseteq C^\omega$
given, e.g., by finite-state automaton. 
Then, a~play $v_0 \, a_1 v_1 \dots$ is winning if
$\gamma( v_0 )  \gamma( v_1) \dots  \in W$.
We generally assume that the colouring is \emph{observable} 
to each player~$i$, 
that is, $\beta^i( v ) \neq
\beta^i( v')$ whenever $\gamma( v ) \neq \gamma( v')$.
Given such a game, 
the \emph{distributed synthesis problem} consists
of two tasks: (1) to decide whether there exists a joint winning strategy, and
(2) to construct a winning profile of finite-state strategies, if
any exist.  These are 
strategies implemented by automata that input observations and output actions.
For more background on finite-state strategy synthesis
we refer to the expository article of Thomas~\cite{Thomas95}.

For lower bounds, we refer to simple safety conditions 
which require the players to avoid an observable sink~$\ominus$.
The technical results on upper bounds are formulated 
in terms of \emph{parity} winning conditions represented by 
a coloring function 
$\gamma: V \to \mathbb{N}$ that maps every state to a number called 
\emph{priority}: A play is winning if the least
priority seen infinitely often along a play is even.
Parity conditions provide a canonical form 
for observable $\omega$-regular winning conditions, 
in the sense that each game with a regular
condition can be reduced to one with a parity condition 
such that the solution of the synthesis problem is preserved.
The reduction for the perfect-information setting described 
by Thomas in the handbook 
chapter~\cite{Thomas90} 
generalises to imperfect-information games 
with observable winning conditions, as pointed out in~\cite{BKP11}.

\subsection{Domino tiling problems}

As a tool for proving lower complexity bounds, 
we use domino tiling problems, which 
allow a more transparent representation of combinatorial
problems than encoding machine models.
Our exposition follows the notation of
B\"orger, Gr\"adel, and Gurevich 
\cite{BoergerGraGur97}.

A \emph{domino system} $\calD = (D, E_H, E_V)$ 
is described by a finite set~$D$ of \emph{dominoes} 
together with horizontal and vertical compatibility relations 
$E_H, E_V \subseteq D \times D$.
The generic domino tiling problem 
is to determine, for a given system~$\calD$, 
whether copies of the dominoes can be arranged to
cover a region
$Z \subseteq \bbZ \times \bbZ$, 
such that any two vertically or horizontally 
adjacent dominoes are compatible. 
Here we consider finite rectangular regions 
  $Z( \ell, m ) := \{0, \dots, \ell+1\} \times \{0, \dots, m\}$
where the first and the last column, and the bottom row are  
dis\-tinguished as border areas to be tiled with 
special dominoes~$\#$ and~$\square$.
The concrete question is 
whether there exists a \emph{tiling} 
$\tau : Z( \ell, m) \to D$ 
that assigns to every point 
in the region a domino, 
subject to the border constraints:
\begin{itemize}[-]
\item
  $\tau( x, 0 ) = \square$, for all $x = 1, \dots, \ell$, and
\item
  $\tau( 0, y ) = \tau( \ell + 1, y ) = \#$, for all $y = 0, \dots, m$,
\end{itemize}
and the compatibility constraints, for all~$x \le \ell$ and $y < m$: 
\begin{itemize}[-]
\item if $\tau(x, y) = d$ and $\tau(x + 1, y) = d'$
then $(d, d') \in E_H$, and
\item if $\tau(x, y) = d$ and $\tau(x, y + 1) = d'$
then $(d, d') \in E_V$.
\end{itemize}
In addition, we may specify constraints on the
\emph{frontier} of the tiling, that is, the sequence 
$\tau( 1, m), \tau(2, m), \dots, \tau( \ell, m)$ 
of dominoes in the top row. 
To ensure that border dominoes do not 
appear at the interior of a correct tiling and to avoid the trivial
tiling, we generally assume that 
$E_V \subseteq D \times (D \setminus \{\#, \square\}) \cup \{ (\#, \#)
\}$ and $(\square, \square) \not\in E_H$.

We use three variants of the domino problem.
Firstly, the \textsc{Corridor Tiling} 
problem takes as input a domino system~$\calD$ together
with a frontier constraint $w \in D^\ell$ and asks
whether there exists a height $m$ such that the region
$Z( \ell, m )$ allows a tiling~$\tau$
that additionally satisfies:
\begin{itemize}[-]
\item 
  $\tau( i, m ) = w_i$, for all $i = 1, \dots, \ell$.
\end{itemize}

Secondly, the \textsc{Corridor Universality} problem takes as input 
a domino system~$\calD$ together with a 
subset of dominoes~$\Sigma \subseteq D$ 
and asks whether for \emph{all} 
frontier constraints $w \in \Sigma^\ell$ of 
arbitrary length~$\ell > 0$, there exists a height~$m$ such that
the region $Z( \ell, m )$ allows
a corridor tiling. 

The basic variant of corridor tiling 
is a well-known~$\PSPACE$-complete problem~\cite{Boas97}. 
One way to explain the complexity of both variants 
is via the correspondence between context-sensitive
languages and
domino systems, established by Latteux and 
Simplot~\cite{LatteuxSim97a,LatteuxSim97b}. 
The \emph{frontier language} of a domino system~$\calD$ 
is the set~$L( \calD )$ of words $w \in D^*$ 
such that $(\calD, w)$ 
yields a positive instance of the \textsc{Corridor Tiling} 
problem. 
We refer to standard notions on
context sensitive languages as found, for instance, in 
the handbook~\cite[Chapter 3]{Salomaa73}.

\begin{theorem}[\cite{LatteuxSim97a,LatteuxSim97b}]\label{thm:cs-domino}
For every context-sensitive language~$L \subseteq \Sigma^*$ 
given as a linear bounded automaton, 
one can
construct in polynomial time a
domino system $\mathcal{D}$  
over a set of dominoes $D \supseteq \Sigma$, such 
that~$L( \calD ) \cap \Sigma^* = L$.
\end{theorem}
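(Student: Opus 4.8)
The plan is to simulate accepting computations of the linear bounded automaton $M$ recognising $L$ by corridor tilings, so that the frontier language of the constructed domino system $\calD$ captures exactly the inputs accepted by $M$. Since an LBA on an input of length $\ell$ operates within the $\ell$ tape cells it occupies, I identify the width of the corridor $Z(\ell,m)$ with the tape length and let each of the rows represent one configuration of $M$, so that passing from one row to an adjacent row corresponds to one computation step and the existentially quantified height $m$ plays the role of the a priori unbounded running time. The frontier in the top row is made to carry the input, while the bottom border row $\tau(\cdot,0)=\square$ is reserved to witness that the simulated run has reached an accepting configuration. With this layout, a word $w\in\Sigma^*$ will lie in $L(\calD)$ precisely when $Z(|w|,m)$ admits a tiling with top row $w$ for some $m$, which I arrange to hold iff $M$ accepts $w$.

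For the domino set I take $D$ to consist of the input letters $\Sigma$, the remaining tape symbols $\Gamma\setminus\Sigma$, head-annotated cells $(q,\gamma)$ for $q$ a state and $\gamma\in\Gamma$, together with the border dominoes $\#$ and $\square$; thus $\Sigma\subseteq D$ as required, and $|D|$ is linear in $|M|$. The horizontal relation $E_H$ guarantees that every interior row is a well-formed configuration\,---\,in particular that it carries at most one head cell\,---\,and that the tape ends are consistent with the $\#$ columns bounding the corridor. The vertical relation $E_V$ encodes the transition relation of $M$: reading a row as a configuration $c$ and the row above it as $c'$, I set $E_V$ so that the vertically adjacent cells are compatible throughout the row exactly when $c$ is a successor of $c'$ under $\delta$. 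The starting configuration with tape content $w$ is pinned down just beneath the frontier by letting a pure-$\Sigma$ row sit, via $E_V$, only above an initial configuration, using the left $\#$ border to locate the tape end and thereby fix the head; acceptance is enforced by admitting the transition into the $\square$ bottom border only from a halting accepting configuration, which is consistent with the standing assumption $E_V\subseteq D\times(D\setminus\{\#,\square\})\cup\{(\#,\#)\}$.

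The step I expect to be the main obstacle is the faithful encoding of a single transition by the \emph{pairwise} relations $E_V$ and $E_H$, since one computation step simultaneously rewrites the cell under the head, moves the head into a neighbouring cell, and leaves every other cell unchanged\,---\,a three-cell effect, whereas $E_V$ compares only one cell with the cell directly above it. The standard remedy, which I would adopt, is to enrich each domino with a directional flag recording whether the head is about to enter this cell from the left or from the right: the vertical relation then checks the purely local update of each cell (symbol rewrite and head departure under the head, head arrival where a flag is set, identity otherwise), while the horizontal relation propagates these flags coherently between adjacent cells so that exactly the intended neighbour receives the head. Some care is also needed where head moves meet the $\#$ border at the two tape ends, and in the treatment of the frontier: as frontiers range over all of $D^*$, spurious tilings whose top row contains encoding dominoes are harmless once we intersect with $\Sigma^*$, which is exactly why the statement reads $L(\calD)\cap\Sigma^*=L$.

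Correctness then splits into the two expected inclusions. An accepting run of $M$ on $w\in\Sigma^*$ is read off row by row as a tiling with frontier $w$, giving $L\subseteq L(\calD)$; conversely, any tiling with a frontier $w\in\Sigma^*$ decodes, through the compatibility constraints, into a valid accepting computation of $M$ on $w$, giving $L(\calD)\cap\Sigma^*\subseteq L$. Finally, the construction only manipulates the states, tape alphabet, and transitions of $M$, so $D$, $E_H$, and $E_V$ all have size polynomial in $|M|$ and can be written down in polynomial time, as claimed.
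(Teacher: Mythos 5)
The paper does not prove this statement at all: Theorem~\ref{thm:cs-domino} is imported verbatim from Latteux and Simplot, and the surrounding text only \emph{uses} it (to transfer $\PSPACE$-hardness and undecidability from context-sensitive-language problems to \textsc{Corridor Tiling} and \textsc{Corridor Universality}). So there is no in-paper proof to compare yours against; what you have written is a reconstruction of the external result, and it follows the standard route: rows of the corridor encode LBA configurations, vertical adjacency encodes one computation step, the frontier carries the input, the $\square$ bottom row certifies acceptance, and intersecting with $\Sigma^*$ discards frontiers that mention auxiliary dominoes. This architecture is sound, it is consistent with the paper's border conventions (in particular $E_V \subseteq D \times (D \setminus \{\#, \square\}) \cup \{(\#,\#)\}$ and the fact that the frontier row is horizontally unconstrained), nondeterminism of the LBA is handled for free, and the complexity accounting (polynomially many dominoes, relations computable in polynomial time) is correct.

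One point where your sketch is thinner than soundness requires: the claims that $E_H$ enforces ``at most one head cell'' per row and that the left $\#$ border ``fixes the head'' of the initial configuration. Purely pairwise horizontal constraints between unflagged tape dominoes cannot detect two heads that are far apart (which would let a tiling simulate a run started with the head at an arbitrary position, potentially accepting some $w \notin L$), and a constraint of the form ``the cell adjacent to the left border is a head cell'' would wrongly apply to \emph{every} row, not just the one beneath the frontier, since $E_H$ is row-independent. Both problems are repaired by the same standard refinement that your movement flags almost provide: additionally annotate every non-head domino with a side flag recording whether the head lies to its left or to its right, let $E_H$ force each configuration row into the shape $R^{*}\,H\,L^{*}$ (giving \emph{exactly} one head per row), and let the vertical relation below a $\Sigma$-domino admit only $L$-flagged cells and the initial head cell, so that together with the left-border compatibility the head of the row beneath the frontier is pinned to position~$1$ in state $q_0$. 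Your flags for head arrival then handle the three-cell transition effect exactly as you describe, and both inclusions $L \subseteq L(\calD)$ and $L(\calD) \cap \Sigma^* \subseteq L$ go through; with that refinement made explicit, the proposal is the standard, correct proof of the cited theorem.
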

Via this correspondence, 
the  membership problem for context-sensitive language, 
which is $\PSPACE$-complete, reduces to \textsc{Corridor Tiling}
and the universality problem for context-sensitive languages, 
which is undecidable, 
reduces to \textsc{Corridor Universality}.
Converse reductions from domino tiling to context-sensitive language 
problems can also be done in polynomial 
time, by translating domino systems into linear-bounded automata. 

\newpage

\begin{theorem}[\cite{Boas97}, \cite{LatteuxSim97a,LatteuxSim97b}]%
  \label{thm:corridor-complexity}\leavevmode
\begin{enumerate}[(i)]
\item \textsc{Corridor Tiling} is $\PSPACE$-complete.
\item \textsc{Corridor Universality} is undecidable.
\end{enumerate}
\end{theorem}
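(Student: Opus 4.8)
The plan is to obtain both parts from the Latteux--Simplot correspondence of Theorem~\ref{thm:cs-domino} together with the known complexity of the membership and universality problems for context-sensitive languages; only the upper bound in part~(i) calls for a direct combinatorial argument, the rest being polynomial-time reductions. For the $\PSPACE$ upper bound in~(i), I would exploit that the region $Z(\ell, m)$ has fixed width $\ell + 2$, so a tiling is a vertical stack of rows, each row a word in $D^{\ell+2}$ whose size is polynomial in the input $(\calD, w)$. I view the admissible rows as nodes of a configuration graph, with an edge from one row to the next whenever the two are vertically compatible and each satisfies the horizontal and border constraints. A tiling with frontier $w$ of some height $m$ exists exactly when the row determined by $w$ is reachable in this graph from the bottom row consisting of $\square$ between the two $\#$ columns. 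Since each node has a polynomial-size description and there are at most $|D|^{\ell+2}$ of them, a nondeterministic procedure can guess the rows one at a time, retaining only the current row and a step counter bounded by $|D|^{\ell+2}$, checking compatibility on the fly; this runs in nondeterministic polynomial space, hence in $\PSPACE$ by Savitch's theorem, which is the argument of van Emde Boas~\cite{Boas97}.

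For the $\PSPACE$ lower bound in~(i), I would reduce from the membership problem for context-sensitive languages, which is $\PSPACE$-complete. Given a linear bounded automaton for a language $L \subseteq \Sigma^*$ and an input $w \in \Sigma^*$, Theorem~\ref{thm:cs-domino} produces in polynomial time a domino system $\calD$ over $D \supseteq \Sigma$ with $L(\calD) \cap \Sigma^* = L$. Because $w \in \Sigma^*$, membership $w \in L$ holds iff $w \in L(\calD)$, that is, iff $(\calD, w)$ is a positive instance of \textsc{Corridor Tiling}. The map sending the automaton and $w$ to $(\calD, w)$ is computable in polynomial time, so \textsc{Corridor Tiling} is $\PSPACE$-hard, completing~(i).

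For~(ii), I would reduce from the universality problem for context-sensitive languages, which is undecidable. Given a linear bounded automaton for $L \subseteq \Sigma^*$, I apply Theorem~\ref{thm:cs-domino} to obtain $\calD$ over $D \supseteq \Sigma$ with $L(\calD) \cap \Sigma^* = L$, and take the distinguished subset of dominoes to be $\Sigma$ itself. By the definition of the frontier language, $(\calD, \Sigma)$ is a positive instance of \textsc{Corridor Universality} precisely when every $w \in \Sigma^*$ belongs to $L(\calD)$, i.e.\ when $\Sigma^* \subseteq L(\calD) \cap \Sigma^* = L$, which says exactly that $L = \Sigma^*$. Thus context-sensitive universality reduces to \textsc{Corridor Universality}, which is therefore undecidable.

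The one place requiring genuine care, rather than a direct appeal to the correspondence, is the completeness of the $\PSPACE$ search in~(i): I must argue that bounding the guessed height by the number $|D|^{\ell+2}$ of distinct rows discards no positive instance, since a shortest witnessing stack repeats no row and hence cannot exceed this bound. Everything else reduces to checking that the restriction to $\Sigma^*$ in Theorem~\ref{thm:cs-domino} is compatible with the frontier constraints drawn from $\Sigma$ in the two problems, which is immediate from the definitions and ensures that border dominoes cannot spuriously enter the frontier words being tested.
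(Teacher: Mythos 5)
Your proposal is correct and takes essentially the same route as the paper, which does not reprove this theorem but cites it and sketches precisely your argument: the $\PSPACE$-complete membership problem and the undecidable universality problem for context-sensitive languages reduce to \textsc{Corridor Tiling} and \textsc{Corridor Universality}, respectively, via the polynomial-time correspondence of Theorem~\ref{thm:cs-domino}, with your row-by-row nondeterministic search supplying the upper bound as in van Emde Boas. One negligible nit: since \textsc{Corridor Universality} quantifies only over nonempty frontier constraints, your reduction actually decides $\Sigma^+ \subseteq L$ rather than $L = \Sigma^*$; both are undecidable for context-sensitive languages (emptiness of the difference by a single, decidable membership test for the empty word), so nothing breaks.
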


Finally, the \textsc{Exp-Square Tiling} problem 
takes as input a domino
system together with a number~$\ell \in \mathbb{N}$ in binary encoding, 
and asks whether the region
$Z( \ell, \ell )$ allows a correct tiling.
The problem was first studied by F\"urer~\cite{Fuerer84}.

\begin{theorem}[\cite{Fuerer84}]\label{expsquare-complexity} 
\textsc{Exp-Square Tiling} is $\NEXPTIME$-complete.
\end{theorem}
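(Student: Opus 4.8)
The plan is to establish both directions of the claim: membership in $\NEXPTIME$ and $\NEXPTIME$-hardness. For the upper bound, note that since $\ell$ is given in binary, the region $Z(\ell,\ell)$ has $(\ell+2)(\ell+1) = O(\ell^2)$ cells, which is exponential in the length $n$ of the input $(\calD,\ell)$. A nondeterministic algorithm can guess an assignment $\tau$ of dominoes to all cells in exponential time and then verify, by inspecting every horizontally and vertically adjacent pair, that $\tau$ satisfies the border constraints and the compatibility relations $E_H$ and $E_V$. This places \textsc{Exp-Square Tiling} in $\NEXPTIME$.

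For the lower bound I would reduce an arbitrary language $L \in \NEXPTIME$ via the classical correspondence between tilings and space--time diagrams of Turing machines. Fix a nondeterministic machine $M$ deciding $L$ within time $2^{p(n)}$ for a polynomial $p$; on an input $w$ of length $n$ its space usage is then also bounded by $2^{p(n)}$. Set $\ell := 2^{p(n)}$, whose binary encoding has length $p(n)$ and is thus computable in polynomial time. I would design a domino system $\calD$ whose dominoes encode a tape symbol together with, at the head position, a control state, so that a correct tiling of the $\ell \times \ell$ interior of $Z(\ell,\ell)$ reads, row by row from bottom to top, as $\ell = 2^{p(n)}$ successive configurations of $M$, each of width $\ell = 2^{p(n)}$ --- that is, as a full computation tableau.

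The local constraints encode the machine as follows. The relation $E_H$ forces each row to be a well-formed configuration, and $E_V$ forces the configuration in each row to be a valid successor of the one below it: since a cell of a Turing configuration is determined by the three cells beneath it, this transition check is local (the standard Cook--Levin window argument) and translates into finitely many admissible domino pairs. The bottom border row of $\square$'s anchors the tableau: through $E_V$ it forces the first interior row to be an initial configuration, on whose leftmost $n$ cells the input $w$ is imposed by a chain of $n$ horizontal constraints starting at the left border $\#$, the remaining cells being forced to carry blank-tape symbols and their number being fixed by the geometry of the region. Finally, by making an accepting state of $M$ persistent and letting it migrate to a fixed corner, acceptance becomes expressible as a single local constraint at the top border. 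One then checks that $M$ has an accepting run on $w$ if and only if $Z(\ell,\ell)$ admits a correct tiling, and the whole construction is polynomial in $n$.

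The main obstacle is the lower bound, and within it two points require care. First, the domino system must remain of polynomial size even though the grid is exponentially large; this is exactly why the input is encoded by a short chain of constraints together with uniform blank padding, rather than cell by cell, with the exact width enforced by the dimensions of $Z(\ell,\ell)$. Second, the inherently global acceptance requirement must be reduced to a purely local border condition, which is achieved by propagating a persistent accept marker to a designated corner. Once the window encoding is fixed, verifying the bijection between correct tilings and accepting runs is routine.
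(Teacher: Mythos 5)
The paper does not prove this statement at all: it is imported as a known result of F\"urer, so your proposal is measured against the canonical argument rather than anything in the text. Your membership argument is fine, and your hardness plan (computation tableau of a time-$2^{p(n)}$ nondeterministic machine, rows as configurations, input anchored via the bottom border and a horizontal chain from the left border $\#$, blank padding fixed by the geometry) is indeed the standard route. One elision worth noting: since $E_V$ relates only a cell to the single cell above it, the Cook--Levin observation that a cell depends on the \emph{three} cells beneath it does not directly ``translate into finitely many admissible domino pairs''; you must enrich the alphabet so that each domino carries an overlapping window (e.g.\ a cell together with its horizontal neighbours), with $E_H$ enforcing overlap consistency, after which $E_V$ can check the transition pairwise. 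This is standard, but it is exactly the point of why pairwise constraints suffice, and it should be said.

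The genuine gap is your acceptance check. \textsc{Exp-Square Tiling}, as defined here, has \emph{no} frontier constraint: the border constraints and the relations $E_H, E_V$ apply uniformly at every row, and the top row of $Z(\ell,\ell)$ is distinguished by nothing except the absence of cells above it. Hence there is no such thing as ``a single local constraint at the top border'': any adjacency pattern occurring in row $\ell$ (e.g.\ a cell next to the border column $\#$) occurs in every row, so a constraint intended to fire only at the top would be imposed everywhere or nowhere; and with polynomially many dominoes you cannot index exponentially many rows to tell the top row apart. The standard repair uses the one available asymmetry: a cell \emph{not} in the top row must admit some $E_V$-successor above it, a top-row cell need not. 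Pad the time bound so that every run of $M$ halts strictly before step $\ell$, give dominoes carrying a halting non-accepting state no $E_V$-successors, and let the accepting configuration reproduce itself vertically forever. Then a rejecting run gets stuck strictly below row $\ell$ and no correct tiling of the full square exists, while an accepting run propagates its accepting configuration up to the top row; so $Z(\ell,\ell)$ is tileable if, and only if, $M$ accepts $w$. With this mechanism substituted for the ``persistent marker migrating to a corner, checked at the top border,'' your reduction goes through.
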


\subsection{Common knowledge}

We use the notion of knowledge in the sense of having information. 
That Player $i$ knows proposition $\varphi$ at history $\pi$ 
should mean that, from the structure of the game graph 
and the sequence 
$\beta^i( \pi )$ of observations she received, 
it can be inferred that $\varphi$ holds. 
Specifically, we are interested in propositions about play histories. 
To formalise knowledge and uncertainty, we rely on the 
standard semantic model introduced by Aumann \cite{Aumann76} and
follow the treatment of Osborne and
Rubinstein~\cite[Chapter~5]{OsborneR94}. 
For an extensive account of distributed knowledge in computational
systems, we refer the reader to the 
book of Fagin, Halpern, Moses, and Vardi
~\cite[Chapters 10, 11]{FaginHMV95} and, as a standard reference on 
common knowledge, to the handbook chapter of 
Geanakoplos~\cite{Geanakoplos94}. 
The enlightening article of~\cite{Barwise88}
addresses foundational issues about the formalisation of
common-knowledge.

Let us fix a game graph $G$ and denote by $\Omega$ 
the set of histories. The \emph{possibility} correspondence 
$P^i : \Omega \to \calP(\Omega)$
associates to each history $\pi$ its information set:
\begin{align*}
  P^i( \pi ) := \{ \pi' \in \Omega~\mid~ \pi' \sim^i \pi \}, \quad 
  \text{for every player~$i$.}
\end{align*}
Thus, at history $\pi$, Player~$i$
knows that the actual history is in $P^i(\pi)$, but he may be
uncertain which one it is.
The sets $P^i(\pi)$ form a partition of $\Omega$. 
Observe that each information set $P^i( \pi )$ 
consists of histories of the same length as $\pi$, hence it is finite.

An \emph{event} is a subset $F \subseteq \Omega$. We say that $F$
\emph{occurs} at history $\pi$ if $\pi \in F$. The
\emph{knowledge} operator $K^i: \calP(\Omega) \to \calP(\Omega)$
associates to every event $F$ the set of histories at which Player~$i$
knows that $F$ occurs: 
\begin{align*}
  K^i( F ) := \{ \pi \in \Omega ~\mid~ P^i( \pi ) \subseteq F \},   
  \quad \text{for every player~$i$.}
\end{align*}
Note that $K^i( F )$ is itself an event. If $\pi \in K^i( F )$, 
then (the occurrence of) $F$ is private knowledge 
of Player~$i$ at $\pi$: 
For any $\pi' \sim^i \pi$, it holds that $\pi' \in F$.
If, moreover, $\pi \in K^i ( F )$, for all players~$i$,  
we say that $F$ is \emph{mutual knowledge}
among the players at $\pi$. 
In this case, every player knows that $F$ occurs,
although one player may be uncertain about 
whether another player knows it.

An event $F \subseteq \Omega$ is \emph{common knowledge} at $\pi$ 
if for every sequence of histories $\pi_1, \dots, \pi_k$ and players 
$i_1, \dots, i_k$ such that 
$\pi \sim^{i_1} \pi_1 \sim^{i_2} \dots \sim^{i_k} \pi_k$,
it is the case that $\pi_k \in F$. In other words, 
$\pi$ belongs to the image of $F$ under every iteration 
$K^{i_1}(K^{i_2}( \dots
K^{i_k} ( F ) \dots ))$ of knowledge operators. 
\takeout{We denote by 
  $\CK( F )$ the set of histories at which the event 
  $F$ is common knowledge. Admitting a lax notation, this can be written as:  
  \begin{align*}
    \CK( F ) = \bigcap \{\, K^{i_1}(K^{i_2}( \dots K^{i_k} ( F ) \dots
    ))~|~ 1 \le i_1, \dots, i_k \le n \,\}. 
  \end{align*}
}

We will use an alternative characterisation in terms of 
shared information. 
An event $F$ is \emph{self-evident} if it is mutual knowledge among
the players at every history in $F$, 
that is, if $F \subseteq K^i( F )$ for all players~$i$. As 
the converse inclusion $K^i( F ) \subseteq F$ always holds, 
this amounts to saying that $F$ is a simultaneous 
fixed point of the player's knowledge operators. 
Self-evident sets allow an interpretation of common knowledge 
that coincides with the iterated-knowledge interpretation, if 
the situation model is sufficiently simple 
(see Barwise~\cite{Barwise88}, for a thorough analysis), 
particularly if the sample space~$\Omega$ is finite.
Although in our setting~$\Omega$ is infinite, 
for histories~$\pi$ of length~$\ell$, only
the finite space~$\Omega_\ell$ of histories of the same length matters: 
An event $F \subseteq \Omega$ is mutual or common knowledge at~$\pi$
if, and only if, 
this holds for the event $F \cap \Omega_\ell$.
Therefore, the argument for the finite setting
given, for instance, in the handbook 
chapter of Geanakoplos~\cite[Section 6]{Geanakoplos94}, 
justifies the following characterisation.

\begin{theorem}[\cite{Aumann76}]\label{thm:ck-evident}
An event $C \subseteq \Omega$ is common knowledge at history $\pi$, 
if and only if, there exists a self-evident event $F \subseteq C$ with
$\pi \in F$.
\end{theorem}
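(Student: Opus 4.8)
The plan is to prove both implications directly from the definitions: the backward direction by induction along indistinguishability chains, and the forward direction by exhibiting the canonical self-evident set, namely the connected component of $\pi$ under the union of the players' indistinguishability relations.

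For the backward direction, suppose $F \subseteq C$ is self-evident with $\pi \in F$, and fix any chain $\pi \sim^{i_1} \pi_1 \sim^{i_2} \cdots \sim^{i_k} \pi_k$. I would argue by induction on $j$ that $\pi_j \in F$, setting $\pi_0 := \pi$. The base case holds by assumption. For the step, if $\pi_j \in F$ then self-evidence gives $\pi_j \in K^{i_{j+1}}(F)$, that is $P^{i_{j+1}}(\pi_j) \subseteq F$; since $\pi_{j+1} \sim^{i_{j+1}} \pi_j$ we have $\pi_{j+1} \in P^{i_{j+1}}(\pi_j) \subseteq F$. Hence $\pi_k \in F \subseteq C$, which is exactly the condition for $C$ to be common knowledge at $\pi$.

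For the forward direction, let $\sim^\ast$ denote the reflexive--transitive closure of $\bigcup_i \sim^i$ and set $F := \{\pi' \in \Omega \mid \pi \sim^\ast \pi'\}$. I would verify the three required properties. First, $\pi \in F$ by reflexivity. Second, $F \subseteq C$: any $\pi' \in F$ is reachable from $\pi$ by a finite indistinguishability chain, so the common-knowledge hypothesis at $\pi$ yields $\pi' \in C$. Third, $F$ is self-evident: given $\pi' \in F$ and $\pi'' \sim^i \pi'$, transitivity of $\sim^\ast$ gives $\pi \sim^\ast \pi' \sim^i \pi''$, hence $\pi'' \in F$; thus $P^i(\pi') \subseteq F$, i.e.\ $\pi' \in K^i(F)$, and so $F \subseteq K^i(F)$ for every player~$i$.

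The construction is the classical ``meet of partitions'', and there is no deep obstacle; the only point requiring care is that $\Omega$ is infinite. This is handled by the observation that each $\sim^i$ relates only histories of equal length, so the whole component $F$ is contained in the finite set $\Omega_\ell$ of histories of length~$\ell$, where $\ell$ is the length of $\pi$. Consequently the reflexive--transitive closure is realised by chains within a finite space, which both legitimises the fixed-point reasoning above and matches the finite-$\Omega$ setting under which the iterated-knowledge and self-evident formulations of common knowledge are known to coincide.
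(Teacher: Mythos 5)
Your proof is correct, but it takes a different route from the paper, which in fact offers no proof of this statement at all: the paper justifies Theorem~\ref{thm:ck-evident} by citation, observing that an event is common (or mutual) knowledge at a history~$\pi$ of length~$\ell$ if and only if this holds for its restriction to the finite slice~$\Omega_\ell$, and then deferring to the finite-sample-space argument of Geanakoplos~\cite{Geanakoplos94}, going back to Aumann~\cite{Aumann76}. You instead prove the equivalence directly: induction along indistinguishability chains for the backward direction, and the $\sim^\ast$-component of~$\pi$ (the classical meet-of-partitions construction) as the canonical self-evident witness for the forward direction. Both directions are sound, and your construction is precisely the argument the paper outsources to the handbook chapter, so the mathematical content agrees; what your version buys is self-containedness. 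It also yields a sharpening you do not quite claim: since the paper defines common knowledge through \emph{finite} chains $\pi \sim^{i_1} \pi_1 \sim^{i_2} \cdots \sim^{i_k} \pi_k$, both directions of your argument manipulate only finite chains, so the proof is valid verbatim over the infinite space~$\Omega$. Your closing paragraph about restricting to the finite set~$\Omega_\ell$ is therefore not needed to legitimise anything in your own reasoning; that reduction matters only on the paper's route, where one wants to invoke a result stated in the literature for finite sample spaces.
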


\section{Uncertainty and coordination}
\label{sec:uncertainty}

Under perfect information, coordination games are trivial to solve, 
by considering the two-player zero-sum game where 
the observations and the actions of the grand coalition are attributed 
to the first player and the role of Nature is played by the second player.
Then, any winning strategy of the first player can be viewed
as a joint winning strategy and vice versa. 
Intuitively, players of the coalition can act as one
because each player
knows the actual history when he chooses an action.
Unlike the case considered in~\cite{Berwanger09}, 
where players choose their strategy independently, 
in the setting of distributed strategies
there is no risk of discoordination due to strategic uncertainty: 
The joint strategy is centrally planned, 
it is common knowledge among the players. 

Under imperfect information, 
the problem is more complex,
because players may not know where they are in the game. 
Strategies need to adjust to the uncertainty around the current
history, which is induced by moves of Nature.
The prescribed actions should be suitable in 
any contingency of the unobservable state of nature. 
In the interaction between a single player and
Nature (or, equivalently, between two players with 
strictly conflicting
interests), the knowledge relevant to this task is of first order:
it regards only the set of possible contingencies,
that is, the information set.
 
Yet, in games among several players,
whether an action of one player  
is suitable or not at a particular
history may depend on whether another
player chooses a matching action at the same history. 
He, the one player, should thus be certain 
that she, the other player, would play the matching action, 
according to her commonly known strategy, 
which, however, responds to the observations 
she received on her own side. 
In other words, to avoid discoordination, 
he needs to know about what she knows about the current history.
In contrast to the one-player or the two-player conflict 
case, here it is relevant to consider
higher-order knowledge, i.e.,
knowledge about the knowledge of other players. 

The role of knowledge is particularly obvious in 
coordination games where the actions of players must agree at every
history. Formally, we call a \emph{consensus game}  
a coordination game with a set of actions 
that is common to all players, and where 
every move $(v, a, w) \in E$ in which two players $i \neq j$ 
disagree on their actions $a^i \neq a^j$ leads to 
a special sink state $\ominus$
from which no play is winning.
A~necessary condition for a strategy profile~$s$ to be winning in a consensus game is 
that, for every history~$\pi$ that follows~$s$, all components prescribe the
same action, that is, ~$s^i( \pi ) = s^j( \pi )$, for all players~$i,j$. 
When speaking about a winning strategies in such a game, 
we may therefore identify 
any strategy~$s^i$ of an individual player 
with the profile~$s$ of strategies where all components are equal to $s^i$, 
without loss of generality.


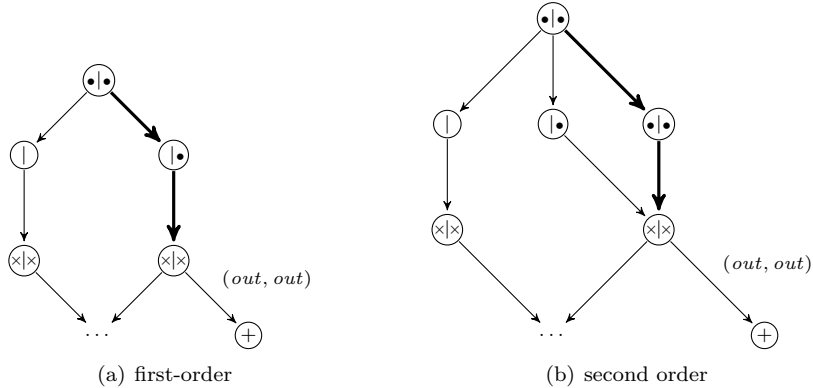
\begin{figure}[t]
\begin{center}
  \subfigure[first-order]{ 
    \label{fig:uncertainty-1a}
   \begin{tikzpicture}
      \node[state]       (0)              {$\gob|\gob$};
      
      \node[state]       (A) [below left of=0] {$~|~$};
      \node[state]       (B) [below right of=0] {$~|\gob$};
      \node[state]       (AA) [below of=A] {$\gout|\gout$};
      \node[state]       (BA) [below of=B] {$\gout|\gout$};
      \node[state, draw=none]       (G) [below right of=AA] {\dots};
      \node[state]       (Z) [below right of=BA] {$+$};
      \path (0) edge(A)
      edge [very thick] (B)
      (A) edge (AA)
      (B) edge [very thick] (BA);

      \path (AA) 
        edge (G);
        
        \path (BA) 
        edge (G)
        edge[bend right=0] node {$(\call,\call)$} (Z);

      \end{tikzpicture}
    }
    \hspace*{1cm}
    \subfigure[second order]{
      \label{fig:uncertainty-1b}
      \begin{tikzpicture}
        \node[state]                (0)              {$\gob|\gob$};
        
        \node[state]                (B) [below of=0] {$~|\gob$};
        \node[state]                (A) [left of=B] {$~|~$};
        \node[state]                (C) [right of=B] {$\gob|\gob$};

        \node[state]                (AA) [below of=A] {$\gout|\gout$};
        \node[state]                (BB) [below of=C] {$\gout|\gout$};
        \node[state, draw=none]    (H)  [below of=B] {};
        \node[state, draw=none]    (HH)  [right of=BB] {};
        \node[state]                (Z) [below of=HH] {$+$};
        \node[state, draw=none]                (G) [below of=H] {\dots};
        \path (0) edge(A)
        edge(B)
        edge [very thick] (C)
        (A) edge (AA)
        (B) edge (BB)
        (C) edge [very thick] (BB);
        
        \path (AA) 
        edge node[left] {} 
        (G);
        
        \path (BB) 
        edge node[right] {$$} 
        (G)
        edge[bend right=0] node {$(\call,\call)$}
        (Z);
        
      \end{tikzpicture}
    }
  \end{center}
  \caption{Lacking knowledge to coordinate} 
  \label{fig:uncertainty-1}
\end{figure}

Figures~\ref{fig:uncertainty-1} and~\ref{fig:uncertainty-2}
show examples of consensus games for
two players, Player 1~(he) and~2 (she), 
with actions $\pass$ and $\call$; unlabelled arcs represent
moves where both players choose $\pass$.
The observations $\bullet$, $\circ$, and $\times$ 
are indicated in split
states: he sees the left side, she the right side.
Apart from the unsafe sink~$\ominus$ that also collects 
the moves along any unrepresented action profiles,
there is a safe sink~$\oplus$, 
from which all plays are winning; the sinks are observable to both players. 
The dots on the bottom stand for an
arbitrary continuation. 
For each of these games, we consider the
situation where the actual history corresponds to the rightmost path,
marked by thicker arrows leading to the $\call$ state where playing
$(\call,\call)$ would lead to an immediate win.
Along the examples, we will discuss the question 
of whether the information that players have
at the marked history allows them to infer with certainty that
$(\call,\call)$ is a safe action.

In Figure~\ref{fig:uncertainty-1a}, 
at the marked history, Player~$2$ knows about being in the 
$\call$ state which requires the action $(\call, \call)$
to win. However, Player~$1$ cannot distinguish 
the current history from the one along the left path
$(\bullet|\bullet)(\circ|\circ)(\times|\times)$, 
where playing $\call$ would be losing.
So it occurs that at the current state
$(\call,\call)$ is the right joint move, but 
Player~$1$ lacks first-order knowledge about it, 
whereas Player~$2$ has the relevant first-order knowledge, 
but not the second-order knowledge to ascertain 
that Player~$1$ will play $\call$. 
At best, the players could coordinate on
$(\pass,\pass)$, based on their common knowledge that this 
leads to continuing the game, and not straight to the 
$\ominus$ sink.

In Figure~\ref{fig:uncertainty-1b}, both players know that they are in
the $\call$ state. 
Nevertheless, Player~$2$ is uncertain about 
whether Player $1$ knows it, 
because according to her observation, the current history may be 
$(\bullet|\bullet)(\circ|\bullet)(\times|\times)$, in which case,
just as in Figure~\ref{fig:uncertainty-1a}, Player~$1$ would
consider the history 
$(\bullet|\bullet)(\circ|\circ)(\times|\times)$ 
possible, and thus not play $\call$. 
So both players have first-order knowledge about
being at the $\call$ state, 
Player~$1$ even has the second-order knowledge that Player~$2$ 
knows it, and still 
they cannot coordinate with certainty, because Player~$2$ does not
know that Player~$1$ knows it.
Moreover, in Figure~\ref{fig:uncertainty-2a}, both players 
know that the play is at the $\call$ state
and each of them knows that they both know it. 
But Player~$1$ regards it as possible that
Player $2$ observed $\bullet\bullet\circ\times$,  
again raising the uncertainty of Figure~\ref{fig:uncertainty-1b}.   
\takeout{
  in which case she would consider that he
  may have observed $\bullet\circ\circ\times$ 
  leading him to consider that the current
  history was $(\bullet|\bullet)(\circ|\circ)(\circ|\circ)$, and hence
  not safe for $\call$.
}
Here, the reason why the players cannot coordinate is that 
Player~$1$ does not know that Player~$2$ knows
that Player~$1$ knows about being in a $\call$ state. 

The argument can be lifted to arbitrary levels of the knowledge
hierarchy. 
This is illustrated in Figure~\ref{fig:uncertainty-2b}, where
the loop around the observation $(\bullet|\bullet)$ may be
unravelled $n$ times to obtain an instance where coordination on the
winning action fails 
in spite of the players having mutual knowledge of order $n$ about
being in a state where this action is safe.

Indeed, the examples embody the \emph{coordinated
attack problem}, a parable that illustrates the intricacy 
of coordination via unreliable communication.
The story features two generals camped with their armies on two hills 
surrounding a fortification that they plan to attack. As
either one alone would lose the battle, 
they need to agree on attacking simultaneously. 
However, they can only communicate by sending messengers, 
which may be captured on the way. 
The challenge is to attain common knowledge about being in a state
where the attack can occur after a finite history of message
exchanges, given that whenever a general receives a message, 
he is uncertain of whether the sender knows that he received it.
Proofs that this cannot be achieved have been given for
different settings, e.g., by
Gray \cite{Gray78}, and
Halpern and Moses~\cite{HalpernMos90},
in the distributed-systems literature,  
and by Rubinstein \cite{Rubinstein89} in game theory.
In our setting, it is Nature that induces 
nondeterministically a possible loss of one message
between the two generals who would attack if, and only if, 
they had common knowledge of being in the $\call$-state.
The analyses put forward the paradigm that
that 
if common knowledge
is not attainable, then coordination is impossible in spite
of an arbitrarily high level of mutual knowledge.

As our examples suggest, already
in the simple case of consensus games 
with a safety condition (avoid the $\ominus$-sink), coordination
games with imperfect information are sensitive to common knowledge,
and thus vulnerable to problems caused by its inapproximability
through finite levels of mutual knowledge.
Still, one may argue that the problem of synthesising a joint
winning strategy does not invoke the reasoning process of
individual players. In the end, strategies only rely on 
first-order knowledge. 
Nevertheless, we will show in the remainder of this article 
that the problem of attaining common
knowledge about certain events\,---\,namely, the game state 
at the actual history\,---\,%
is relevant for solving coordination games, 
in the following sense.
\begin{enumerate}[(i)]
\item There exists a family of games that admit a
  solution if, and only if, 
  the players attain common knowledge
  about the state at a particular history.
\item Every game of infinite duration 
  where common knowledge of the actual state is attained infinitely
  often along every play can be solved effectively.
\end{enumerate}

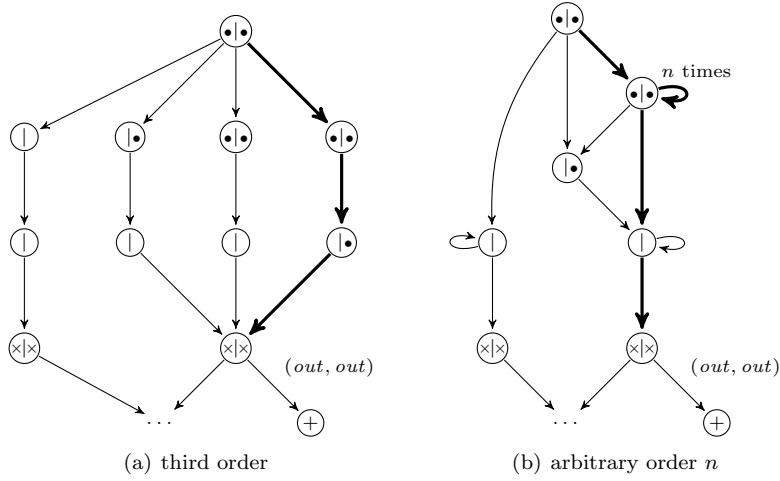
\begin{figure}[t]
\begin{center}
    \subfigure[third order]{
      \label{fig:uncertainty-2a}

      \begin{tikzpicture}
        \node[state]                (0)              {$\gob|\gob$};
        
        \node[state]              (A11) [below of=0] {$\gob|\gob$};
        \node[state]              (A01) [left of=A11] {$~|\gob$};
        \node[state]              (A00) [left of=A01] {$~|~$};
        \node[state]              (A12) [right of=A11] {$\gob|\gob$};
 
        \node[state]              (B11) [below of=A11] {$~|~$};
        \node[state]              (B01) [below of=A01] {$~|~$};
        \node[state]              (B00) [below of=A00] {$~|~$};
        \node[state]              (B12) [below of=A12] {$~|\gob$};
 
        \node[state]                (AA) [below of=B00] {$\gout|\gout$};
        \node[state]                (BB) [below of=B11] {$\gout|\gout$};   
        \node[state, draw=none]     (Z) [below left of=BB] {$\dots$};
        \node[state]                (G) [below right of=BB] {$+$};

        \path (0) edge(A00)
        edge(A01)
        edge (A11)
        edge [very thick](A12)
        (A00) edge (B00)
        (A01) edge (B01)
        (A11) edge (B11)
        (A12) edge [very thick] (B12)
         (B00) edge (AA)
        (B01) edge (BB)
        (B11) edge (BB)
        (B12) edge [very thick] (BB);
         
        \path (AA) 
        edge node[left] {} 
        (Z);
        
        \path (BB) 
        edge node[right] {$$} 
        (Z);
        
        \path (BB)
        edge node {$(\call,\call)$}
        (G);
        
      \end{tikzpicture}
    }
    \hspace*{.3cm}
    \subfigure[arbitrary order $n$]{
      \label{fig:uncertainty-2b}

      \begin{tikzpicture}
        \node[state]           (0)              {$\gob|\gob$};
        
        \node[state]           (A11) [below right of=0] {$\gob|\gob$};
        \node[state]           (A01) [below left of=A11] {$~|\gob$};

        \node[state]           (B1) [below right of=A01] {$~|~$};
        \node[state]           (A00) [below left of=A01] {$~|~$};

        \node[state]           (BB) [below of=B1] {$\gout|\gout$};   
        \node[state]           (AA) [below of=A00] {$\gout|\gout$};
        \node[state, draw=none] (G) [below left of=BB] {$\dots$};
        \node[state]           (Z) [below right of=BB] {+};

        \path (0) edge [bend right=20] (A00)
        edge(A01)
        edge [very thick] (A11)
        (A00) edge (AA)
        (A01) edge (B1)
        (A11) edge (A01) 
        edge [very thick] (B1)
        (B1) edge [very thick](BB);

        \path (AA) 
        edge (G);
        
        \path (BB) 
        edge (G);

        \path (A11) edge [very thick, loop right] node[above,
        xshift=4pt, yshift=2pt] {$n$ times} (A11);
        \path (A00) edge [loop left] node[above] {} (A00);
        \path (B1) edge [loop right] node[above] {} (B1);
        
        \path (BB) 
        edge node {$(\call,\call)$}
        (Z);

      \end{tikzpicture}
    }
  \caption{Coordinated attack}
  \label{fig:uncertainty-2}
  \end{center}
\end{figure}

\section{Common knowledge of the state}


Let~$G$ be a game graph, and let $\Omega$ be the set of
histories in~$G$. For a history~$\pi$, we denote by 
$\State( \pi ) \subseteq \Omega$ the set of 
all histories that end at the same state as~$\pi$. 
We say that the players 
attain \emph{common knowledge of the state} (\CKS) at
history~$\pi$, if $\State( \pi )$ is common knowledge at $\pi$.

To develop familiarity with the notion, 
we first show that attaining \CKS at a
particular history in a game graph
is equivalent to having a joint winning 
strategy in an associated coordination game, 
more precisely, a consensus game.
For simplicity, we detail the case where 
Nature controls all moves in the original game graph,
i.e., the players have only one, trivial action; 
the general case requires only notational changes.
 
Given a game~$G$, let  
$\pi$ be a history 
that starts at the initial state  $v_0$ and 
ends at some state $z \in V$. 
We construct a consensus game 
$\calG_\pi$ on the the disjoint union of $G$ and (the unravelling of) $\pi$ 
as follows.  
The players are the same as in~$G$, and they have  
a common set  $\{ \pass, \call \}$ of actions.
The state set of $\calG_\pi$ consists of 
copies of the states in $G$ and a fresh state~$\hat{\tau}$ for every history~$\tau$ 
in~$\pi$\,---\,note that 
the copy of the initial state~$v_0 \in V_0$ is distinct from 
the initial history~$\hat{v}_0$.
For each state copy, the observations get inherited from the corresponding 
state in $G$, 
and for each history from its last state.
Likewise, the moves from $G$ and along $\pi$ 
get inherited with the action label  
$\pass$ for all players (in consensus).
The initial-history state~$\hat{v}_0$ is
designated as the initial state of $\calG_\pi$, 
and we add $\pass$ moves from $\hat{v}_0$ 
to every successor of~$v_0$ in~$G$.
There is an unsafe sink $\ominus$ and a safe sink $\oplus$; the
winning condition requires to avoid~$\ominus$.
Finally, we add $\call$ moves (in consensus) to the safe sink $\oplus$
from the copy of state $z$ in $G$ and from 
the state $\hat{\pi}$ corresponding to the history~$\pi$.
Moves with any other action profile lead to the unsafe sink~$\ominus$, namely
any action profile that is not in consensus, the action $\call$ 
from any state other than $z$ or $\hat{\pi}$,  
and the action $\pass$ from $\hat{\pi}$.

\begin{prop}\label{prop:ck-game}
For a game graph $G$ and a history $\pi$,
the players have a joint winning strategy in the game $\calG_\pi$
if, and only if, they attain common knowledge 
of the state at $\pi$ in $G$.
\end{prop}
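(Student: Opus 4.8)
I would work throughout in the simplified setting where Nature controls all moves of $G$, so that a play of $\calG_\pi$ is determined by the action profiles over $\{\pass,\call\}$ chosen in consensus together with Nature's successor choices, and a consensus strategy may be identified with a single observation-based function. The preliminary structural fact to record is that, \emph{before any $\call$ is played, the only moves avoiding the sinks are the inherited $\pass$-moves}. Writing $\ell$ for the length of $\pi$ and $z$ for its last state, every history of $\calG_\pi$ of length $\ell$ that has not yet entered $\oplus$ or $\ominus$ (a \emph{live} history) is therefore reached by a consensus-$\pass$ play, and is either the path history $\hat\pi$ or the lift of a genuine length-$\ell$ history of $G$ (such lifts exist because of the added $\pass$-edges from $\hat v_0$ to all successors of $v_0$). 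This yields a map $\phi$ from live length-$\ell$ histories of $\calG_\pi$ onto the length-$\ell$ histories of $G$, sending $\hat\pi$ to $\pi$; since observations are inherited state by state, $\phi$ both preserves and reflects each relation $\sim^i$. Moreover a live history $\rho$ is $\call$-safe (i.e.\ $\call$ leads to $\oplus$) exactly when $\phi(\rho)$ ends at $z$, that is, $\phi(\rho)\in\State(\pi)$.

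Using this dictionary I would reduce the claim to a combinatorial statement. Let $\widetilde R\subseteq\Omega$ be the closure of $\{\pi\}$ under $\bigcup_i\sim^i$. Being closed under every $\sim^i$, the set $\widetilde R$ is self-evident, and by Theorem~\ref{thm:ck-evident} it is the least self-evident set containing $\pi$; hence the players attain common knowledge of the state at $\pi$ if, and only if, $\widetilde R\subseteq\State(\pi)$. For the easy implication I assume $\widetilde R\subseteq\State(\pi)$ and let the players use the consensus strategy that plays $\call$ on the $\phi$-preimage of $\widetilde R$ and $\pass$ everywhere else. This is observation-based because $\widetilde R$ is a union of $\sim^i$-classes for each $i$, and it is winning because every $\call$ is played at a $\call$-safe history (landing in $\oplus$) while every $\pass$ follows a safe inherited move, the sole dangerous $\pass$ being the one at $\hat\pi$, which is avoided since $\hat\pi\in\widetilde R$.

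The substantive direction is the converse, argued by contraposition. Suppose common knowledge of the state fails and fix a \emph{shortest} chain $\pi=\tau_0\sim^{i_1}\tau_1\sim^{i_2}\cdots\sim^{i_k}\tau_k$ with $\tau_k\notin\State(\pi)$; by minimality $\tau_0,\dots,\tau_{k-1}$ all end at $z$ and only $\tau_k$ ends at some $y\neq z$. Lifting through $\phi$ gives histories $\rho_0=\hat\pi,\rho_1,\dots,\rho_k$ of $\calG_\pi$ with $\rho_{j-1}\sim^{i_j}\rho_j$. Given any winning strategy $s$, driving the play along the path forces a consensus $\pass$ at every proper prefix of $\hat\pi$ and then a consensus $\call$ at $\hat\pi$. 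I would then propagate the obligation to play $\call$ along the chain: measurability of $s^{i_j}$ carries $\call$ from $\rho_{j-1}$ to $\rho_j$, and \emph{provided $\rho_j$ is reachable under $s$}, the winning condition forces all players to agree on $\call$ there, feeding the next step of an induction whose hypothesis is that $\rho_{j}$ is reachable and played by a consensus $\call$.

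The main obstacle is exactly this reachability proviso: a winning strategy need not play $\pass$ throughout, since it may legitimately \emph{escape} to $\oplus$ by playing $\call$ at some proper prefix that already ends at $z$, thereby rendering an intermediate $\rho_j$ unreachable. I expect to dispose of this by \emph{perfect recall}. If $\rho_j$ were unreachable, then at the shortest deviating prefix $\rho_j'$ the strategy would have to play a consensus $\call$ at a history ending at $z$; but by perfect recall $\rho_j'$ is $\sim^{i_j}$-indistinguishable from the corresponding proper prefix of the reachable history $\rho_{j-1}$, at which $s$ is forced to play $\pass$ — contradicting measurability of $s^{i_j}$. Hence every $\rho_j$ is reachable, the induction goes through, and finally $\rho_k$ is reachable with all players obliged to play $\call$ even though $\rho_k$ ends at $y\neq z$. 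This produces an outcome reaching $\ominus$, contradicting that $s$ is winning, and thereby completes the proof.
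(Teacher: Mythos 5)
Your proof is correct and follows essentially the same route as the paper's: the easy direction plays $\call$ exactly on a self-evident (component-closed) set of histories ending at $z$ that contains $\pi$, and the hard direction propagates the forced consensus $\call$ along an indistinguishability chain from $\hat{\pi}$, using measurability, the consensus-winning condition, and synchronous perfect recall. If anything, your explicit treatment of the reachability proviso\,---\,ruling out an escape to $\oplus$ at a proper prefix by playing perfect recall against measurability\,---\,makes rigorous a step the paper's proof passes over when it simply asserts that any history indistinguishable from one following $s$ also follows $s$.
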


\begin{proof}
To see that winning in~$\calG_\pi$ 
implies attaining \CKS at $\pi$, 
suppose that there exists a joint winning strategy~$s$ 
in~$\calG_\pi$. Since this is a consensus game, 
we may assume that all components of 
$s$ are equal and can identify the profile~$s$ 
and its component strategies.
Now, let $C$ be the set of histories~$\rho$ in $G$
that follow~$s$ and are assigned $s( \rho ) = \call$. 
We argue that $C$ satisfies the
conditions of Theorem~\ref{thm:ck-evident} 
to witness that the players attain
\CKS at~$\pi$:
\begin{itemize}
  \item $C$ is a self-evident event, for each player~$i$: for every 
    history $\rho \in C$, any indistinguishable history 
    $\rho' \sim^i \rho$ follows~$s$ and is
    assigned the same action 
    $s( \rho') = s( \rho ) = \call$, which means $\rho' \in C$.
  \item $\pi \in C$: the action $\pass$ is losing at $\hat{\pi}$, 
    and since no winning strategy can avoid~$\hat{\pi}$,  
    we must have $s( \hat{\pi} ) = \call$. As $\pi$ and
    its copy ending at $\hat{\pi}$ are indistinguishable to all players, 
    it follows that
    $s( \pi ) = \call$. 
\item $C \subseteq \State(\pi)$: the action $\call$ 
is losing at all states except for $z$ and $\hat{\pi}$. 
As we assumed that $s$ is a winning strategy, 
all histories in $C$ must end at~$z$.
\end{itemize}

For the converse, assume that, at the history $\pi$ in~$G$,
the players attain common knowledge of $\State( \pi )$.
We define a function~$s$
that associates to any history~$\rho$ 
in~$G$ the action $s( \rho) := \call$ 
if $\rho$ has the same length as $\pi$ and 
$\State( \pi )$ is common knowledge
at $\rho$, and otherwise $s( \rho ) := \pass$.

First, let us verify that~$s$ 
is a valid strategy on the game graph~$G$, for each player~$i$. 
Notice that, if an event 
is common knowledge at a history $\rho$, then 
it is also common knowledge
at every indistinguishable history $\rho'\sim^i \rho$
(each history that is accessible
from $\rho'$ via a sequence of pairwise indistinguishable histories
is also accessible from $\rho$ via the same sequence preceded by
$\rho \sim^i\rho'$). 
In particular, whenever  
$s( \rho ) = \call$, for a history~$\rho$, that is,
when the event $\State( \pi )$ is common knowledge at $\rho$, 
it is also common knowledge at every history 
$\rho' \sim^i \rho$, hence 
$s( \rho' ) = \call$. 
Consequently $s( \rho ) = s(\rho')$ for every pair $\rho \sim^i \rho'$.

Now, we extend the strategy $s$ on $G$ 
to the graph of $\calG_\pi$ in the only consistent way, 
by assigning to every history that ends at
a state $\hat{\rho}$ corresponding to a history~$\rho$ in~$\pi$, the
action~$s( \rho)$ prescribed in~$G$.
We argue that $s$ is a winning strategy in the consensus game $\calG_\pi$:
The unsafe sink~$\ominus$ can only be reached 
by taking a wrong move in one of the
following two situations: either choosing $\pass$ at state $\hat{\pi}$, 
or choosing $\call$ at a state different from $z$ and $\hat{\pi}$.
The former situation is excluded, as
$s(\hat{\pi}) = s( \pi ) = \call$ holds by definition of~$s$.
The latter situation cannot occur either: On the one hand, 
at all histories $\rho$ in~$G$ with 
$s(\rho) = \call$, the players attain common knowledge of $\State(\pi)$,
so in particular, $\rho \in \State( \pi )$ 
(by definition of the knowledge operator, 
players can only know an event if it actually occurs). 
On the other hand, among the histories along the copy of $\pi$, 
only the one ending at~$\hat{\pi}$ 
has the same length as $\pi$, which is necessary to be asigned~$\call$.
In conclusion, all plays that follow $s$ are winning in~$\calG_\pi$.
\end{proof}

The argument illustrates that the need for
(common) knowledge about the actual game state
is a source of computational 
complexity in coordination games with imperfect
information. In general, there may be further sources.
One class of games, where the issue is exactly
whether the players attain 
common knowledge about reaching a certain state set, 
are \emph{consensus acceptors} 
investigated in~\cite{BvdB15}.
Essentially, these are
consensus games with a simple safety condition (avoid the sink~$\ominus$) 
where the players have only one nontrivial decision
in every play.
Which decision to take thus depends on
the common knowledge of the players about the actual state.
The games~$\calG_\pi$ from Proposition~\ref{prop:ck-game} 
as well as those represented in Figures~\ref{fig:uncertainty-1} 
of the previous section 
are examples of consensus acceptor games.
The complexity analysis 
for consensus acceptors in~\cite{BvdB15}
sheds light on the problem of attaining common knowledge of the state 
in an arbitrary game graph.  
For completeness, we reproduce the part of the analysis relevant for
our setting.

\begin{prop}\label{prop:PSPACE}
Given a game graph and a history $\pi$,  
the problem of deciding whether the players attain common knowledge of
the state at $\pi$ is $\PSPACE$-complete. 
\end{prop}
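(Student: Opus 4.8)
The plan is to establish both directions: membership in $\PSPACE$ by reducing \CKS to a reachability question in a finite but exponentially large graph, and $\PSPACE$-hardness by a reduction from \textsc{Corridor Tiling}.

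For the upper bound I would first invoke the observation recorded before Theorem~\ref{thm:ck-evident}, namely that whether $\State(\pi)$ is common knowledge at a history $\pi$ of length~$\ell$ depends only on the finite space $\Omega_\ell$ of histories of the same length (this is where synchronicity enters: every $\sim^i$ preserves length). Working inside $\Omega_\ell$, I would consider the undirected \emph{indistinguishability graph} $H_\ell$ whose vertices are the histories of length~$\ell$ and whose edges link any two histories with $\rho \sim^i \rho'$ for some player~$i$. The connected component $M$ of $\pi$ in $H_\ell$ is closed under each $\sim^i$, hence it is the least self-evident event containing~$\pi$; by Theorem~\ref{thm:ck-evident}, $\State(\pi)$ is common knowledge at~$\pi$ exactly when $M \subseteq \State(\pi)$. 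Equivalently, \CKS \emph{fails} at~$\pi$ iff there is a path in $H_\ell$ from~$\pi$ to some history ending at a state different from the last state of~$\pi$.

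It then remains to argue that this reachability test runs in polynomial space. Each vertex of $H_\ell$ is a history of length $\ell \le |\pi|$, hence a string of polynomial size in the input, and adjacency of two such strings is decidable in polynomial time (check that both are legitimate histories of~$G$ and that $\beta^i(\rho) = \beta^i(\rho')$ for some player~$i$). Although $H_\ell$ may have exponentially many vertices, a nondeterministic procedure can search for a path from~$\pi$ to a \emph{bad} history by guessing the path one vertex at a time, keeping in memory only the current vertex and a step counter bounded by $|\Omega_\ell|$, i.e.\ polynomially many bits. This places the problem of deciding whether \CKS fails in \textsc{NPSpace}, which equals $\PSPACE$ by Savitch's theorem; since $\PSPACE$ is closed under complement, deciding \CKS itself is in $\PSPACE$.

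For the lower bound I would reduce from \textsc{Corridor Tiling}, which is $\PSPACE$-complete by Theorem~\ref{thm:corridor-complexity}(i). Given a domino system $\calD$ of width~$\ell$ with its border and frontier constraints, the idea is to build a two-player game graph in which the histories of a suitable $\Omega_k$ encode candidate rows of a tiling, and in which the two observation functions are arranged so that Player~$1$'s indistinguishability links horizontally consistent rows while Player~$2$'s links vertically compatible successive rows (or a symmetric split of the consistency checks). The transitive closure of $\bigcup_i \sim^i$ that governs common knowledge then traces along the rows of a legal tiling, so that the component of a designated start history~$\pi$ reaches a distinguished final state if, and only if, the corridor admits a correct tiling; by the component characterisation above, \CKS fails at~$\pi$ precisely when a tiling exists, and composing with complementation (again using that $\PSPACE$ is closed under complement) yields hardness. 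I expect the construction of this game graph to be the main obstacle: the observation functions must be designed so that the common-knowledge chain simulates the propagation of the horizontal and vertical compatibility constraints faithfully, guaranteeing both that every legal tiling induces a connecting chain (completeness) and, more delicately, that no spurious identification of histories creates a connecting chain in the absence of a tiling (soundness). This is exactly the portion of the analysis of consensus acceptors in~\cite{BvdB15} that specialises to the present setting.
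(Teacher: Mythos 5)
Your upper bound is correct and is essentially the paper's own argument: the paper also decides the complement by nondeterministically guessing, step by step, a chain $\pi = \rho_1 \sim^{i_1} \rho_2 \sim^{i_2} \cdots \sim^{i_k} \rho_k$ until some $\rho_k$ ends at a different state, storing only the current history (which has the same length as $\pi$, by synchronicity), and then appeals to Savitch plus closure of $\PSPACE$ under complement. Your reformulation via the connected component of $\pi$ in the indistinguishability graph and Theorem~\ref{thm:ck-evident} is a legitimate equivalent packaging of the same procedure.

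The genuine gap is in the hardness direction: you state the plan (reduce from \textsc{Corridor Tiling}, make indistinguishability chains trace tilings, complement at the end) but you explicitly defer the actual construction, and the construction \emph{is} the proof. Concretely, the paper's gadget works as follows, and none of its three essential mechanisms appears in your sketch. First, horizontal compatibility is \emph{not} encoded by any player's indistinguishability, as you propose; it is enforced by the edge relation itself, so that every history of the game spells out a horizontally consistent row $v_0 d_1 d_2 \cdots d_\ell \oplus$ (your split ``Player~1 links horizontal, Player~2 links vertical'' would need histories related by $\sim^1$ to encode two horizontally adjacent rows, which is a different and unworked-out encoding). Second, vertical adjacency between two rows $x$ and $y$ is encoded by \emph{pair states} $(d,b) \in E_V$, at which Player~1 observes $d$ and Player~2 observes $b$; a pair-state history $\rho$ then satisfies $\pi_x \sim^1 \rho \sim^2 \pi_y$, so a correct tiling becomes a chain $\pi_w \sim^1 \rho_1 \sim^2 \pi_2 \sim^1 \cdots \sim^2 \pi_{\cbot}$ walking from the frontier row down to the bottom row. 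Third\,---\,and this is the device your sketch lacks entirely\,---\,something must make \CKS actually \emph{fail} at the end of that chain: the paper adds a second sink $\ominus$ reachable only from the bottom-domino state $\square$, so that the bottom-row history $\pi_\cbot$ (ending at $\oplus$) has an indistinguishable twin $\hat{\pi}_\cbot$ (ending at $\ominus$). Without such an end-state ambiguity, every history in the chain ends at the same sink $\oplus$, the component of $\pi_w$ stays inside $\State(\pi_w)$, and \CKS would hold regardless of whether a tiling exists, so the reduction would be vacuous. Soundness (your worry about spurious identifications) then falls out of the construction: every history ending at a sink decodes to one or two horizontally consistent rows, and every $\sim^1$/$\sim^2$ step decodes to vertical adjacency, so any chain reaching $\hat{\pi}_\cbot$ reassembles into a correct corridor tiling. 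Deferring all of this to~\cite{BvdB15} leaves the $\PSPACE$-hardness claim unproved.
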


\begin{proof}

For membership, consider the procedure that
takes a game graph and a history $\pi$ as input, and iterates the
following loop: guess nondeterministically 
a player $i$ and a history $\rho \sim^i \pi$; 
accept if $\pi$ and $\rho$ end at different states, 
otherwise repeat with $\rho$ as the new value of $\pi$. 
\footnote{We adopt the convention that machines reject by looping; 
Hopcroft and Ullman \cite{HopcroftUll69} showed that
lower space bounds 
above $\NLOGSPACE$ do not change when
 dropping the halting assumption. 
} 
As any two indistinguishable histories have the same length, 
the procedure requires only linear space. 
It accepts if, and only if, there exists a sequence 
$i_1, \dots, i_k$ of players and 
histories 
$\rho_1 \sim^{i_1} \rho_2 \sim^{i_2} \dots \sim^{i_k} \rho_k$
with $\pi = \rho_1$ and $\rho_k \not \in \State( \pi )$, that is, 
if $\State(\pi)$ is not common knowledge at $\pi$. 
Hence, we have a nondeterministic $\PSPACE$ procedure for deciding the
complement problem which asks whether the players do \emph{not} have \CKS
common knowledge of the state 
at the given history. Since nondeterministic
and deterministic $\PSPACE$ are equal, 
it follows that the original 
problem can be solved in $\PSPACE$.
Even more, the argument shows that for any game 
there exists a linear-bounded automaton that recognises the set of 
histories at which the players attain~\CKS. 

To prove hardness, we describe a reduction from (the complement of) the 
\textsc{Corridor Tiling} problem.
Given a domino system $\calD = (D, E_V, E_H)$ with 
a frontier constraint $w \in D^\ell$ 
we construct, in polynomial time,
a game graph~$G$ for two players and a history~$\pi$, 
such that the players attain \CKS
at~$\pi$ if, and only if, the domino-problem instance $(\calD, w)$ 
is negative,
i.e., there does not exist a height $m \in \bbN$ 
such that the rectangle $Z(\ell, m)$ can be tiled
with~$w$ in the top row. 

The two players in~$G$
have one trivial action and their
observations correspond to the dominoes in~$D$.
The set of states consists of
 \emph{singleton} states 
$d \in D \setminus \{ \# \}$, \emph{pair} states $(d, b) \in E_V$, 
an initial state $v_0$, and 
two sinks $\oplus$ and $\ominus$.
At each singleton state $d$, 
both players receive the same observation $d$, 
whereas at each pair state $(d, b)$, the first player observes $d$ 
and the second player $b$;
at the initial state and the two sinks, 
both players receive the observation $\#$
corresponding to the vertical border domino.

The singleton states are connected by moves 
$d \to d'$ for every $(d, d')$ in $E_H$, and
the pair states by moves
$(d,b) \to (d',b')$ whenever $(d, d')$
and $(b,b')$ are in~$E_H$. 
From the initial state~$v_0$, there are moves to all 
singleton states $d$ with $(\#, d)$ in $E_H$, 
and all pair states $(d, b)$ with $(\#, d)$ and $(\#,
b)$ in $E_H$. 
Conversely, the sink~$\oplus$ is reachable from all 
singleton states $d$ with $(d, \#)$ in $E_H$, 
and from all pair states $(d, b)$ with $(d, \#)$ and 
$( b, \#)$ in $E_H$; the sink $\ominus$ is reachable only from
the singleton bottom-domino state~$\square$. 
Clearly, the game graph~$G$ can be constructed from~$\calD$ 
in linear time.

Note that any sequence 
$x = d_1, d_2, \dots, d_\ell \in D^\ell$ 
that forms a horizontally consistent row 
(omitting the borders) in a corridor tiling 
corresponds to a history 
$\pi_x = v_0 d_1 d_2 \dots d_\ell \oplus$ in the game
graph. For the special case of the bottom row~$\cbot$ 
with $d_1 = d_2 = \dots = d_\ell = \square$, 
apart of $\pi_\cbot$, we also have the history  
$\hat{\pi}_\cbot = v_0 d_1 d_2 \dots d_\ell \ominus$.
On the other hand, every history in $G$ that ends at a sink
corresponds either to one consistent row, 
in case Nature chooses a singleton state
in the first move, or to two rows, in case Nature chooses a pair
state.
Moreover, a row $x = d_1, d_2, \dots, d_\ell$ can appear 
below a row $y = b_1, b_2, \dots, b_\ell$ 
in a correct tiling 
if, and only if, 
there exists a history $\rho$  in $G$ such that 
$\pi_x \sim^1 \rho \sim^2 \pi_y$, namely 
$\rho = v_0 \, (d_1, b_1)\, (d_2,b_2) \dots (d_\ell,
b_\ell)\, \oplus$.

Now, we claim that  
the players attain \CKS at the history~$\pi_w$ 
corresponding to the frontier constraint~$w \in D^\ell$  
if, and only if, there exists no corridor tiling for the 
instance~$(\calD, w)$.
According to our observation, if there exists
a correct tiling of the corridor, then
there exists a sequence
of rows corresponding to histories $\pi_1, \dots, \pi_m$,
and a sequence of witnessing histories 
$\rho_1, \dots, \rho_{m-1}$ such that 
\begin{align*}
  \pi_w = \pi_{1} \sim^1 \rho_1 \sim^2 \pi_{2} \dots \sim^1
\rho_{m-1} \sim^2 \pi_{m} = \pi_\cbot.
\end{align*} 
However, the history $\pi_{\cbot}$ is indistinguishable from
$\hat{\pi}_\cbot$, for both players.
As these two histories end at different states, it 
follows that $\State( \pi_w )$
is not common knowledge at $\pi_w$. 
Conversely, if $\State( \pi_w )$ is not common knowledge at
$\pi_w$, then there exists a sequence of pairwise indistinguishable 
histories that leads from~$\pi_w$ to $\hat{\pi}_\cbot$, from which
we can extract a correct corridor tiling.

\textsc{Corridor Tiling} is $\PSPACE$-hard, 
according to Theorem~\ref{thm:corridor-complexity}. 
Thus, the reduction 
shows that
the problem of deciding whether the players attain
\CKS at a given history is $\textrm{Co\,-}\PSPACE$
hard, and since $\PSPACE$ is closed under complement, 
it is hard for $\PSPACE$.
\end{proof}

The above argument can be extended to prove that it is undecidable 
whether the players can ever attain \CKS along a given
set of plays in a game. 
We say that a history~$\pi$ is \emph{within}~a subset $S\subseteq V$ of states
if all states that occur in~$\pi$ belong to~$S$.

\begin{prop}\label{prop:ck-existential}
It is undecidable whether, for a 
given game graph with a designated subset~$S \subseteq V$ of states, 
there exists a nontrivial history within~$S$ at which the
players attain common knowledge of the state.
\end{prop}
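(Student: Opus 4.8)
The plan is to reduce from \textsc{Corridor Universality}, which is undecidable by Theorem~\ref{thm:corridor-complexity}(ii), reusing the game graph constructed in the proof of Proposition~\ref{prop:PSPACE}. Given an instance $(\calD, \Sigma)$, I build the same graph $G$ from the domino system $\calD$ (the construction there does not depend on any particular frontier), and I designate $S$ to consist of the singleton states $d$ with $d \in \Sigma$, together with the initial state $v_0$ and the safe sink $\oplus$. As in the standard encoding, I may assume that the dominoes in $\Sigma$ are pairwise horizontally compatible and avoid the bottom border $\square$, so that every word $w \in \Sigma^*$ spells out a genuine history $\pi_w = v_0\, d_1 \cdots d_\ell\, \oplus$ in $G$. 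The aim is to show that a nontrivial history within $S$ at which the players attain \CKS exists if, and only if, $(\calD, \Sigma)$ is a \emph{negative} instance, i.e.\ some frontier $w \in \Sigma^*$ admits no corridor tiling.

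The histories within $S$ are exactly the prefixes $v_0\, d_1 \cdots d_k$ of rows over $\Sigma$ and their completions $v_0\, d_1 \cdots d_\ell\, \oplus$ ending at the safe sink. For the easy direction, I would argue that if the instance is negative, with witness $w$, then the completed history $\pi_w$ lies within $S$ and attains \CKS. This is precisely the analysis of Proposition~\ref{prop:PSPACE}: \CKS at $\pi_w$ fails only when an indistinguishability chain reaches $\hat{\pi}_{\cbot}$, and such a chain encodes a full tiling with frontier $w$, which does not exist.

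The main work is the converse, and this is where I expect the difficulty. I must show that if \emph{every} $w \in \Sigma^*$ is tileable, then \CKS fails at every nontrivial within-$S$ history, \emph{including the proper prefixes} $\pi = v_0\, d_1 \cdots d_k$ that never reach $\oplus$. The key observation is that universality applies to the prefix $d_1 \cdots d_k$ read as a frontier in its own right: a tiling with this top row has height at least one (since $\Sigma$ avoids $\square$), hence a row $a_1 \cdots a_k$ sitting immediately below it. This row yields a pair history $\rho = v_0\, (a_1,d_1) \cdots (a_k,d_k)$ with $\pi \sim^2 \rho$ whose last state is the pair state $(a_k,d_k) \neq d_k$, so that \CKS already fails at $\pi$ in a single step. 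For the completed histories $\pi_w$, tileability of $w$ again produces, via Proposition~\ref{prop:PSPACE}, the chain down to $\hat{\pi}_{\cbot}$ that refutes \CKS.

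Combining the two directions, the existence of a nontrivial within-$S$ history with \CKS is equivalent to the non-universality of $(\calD, \Sigma)$; since corridor universality is undecidable and undecidability is preserved under complementation, the stated problem is undecidable. The delicate point throughout is the bookkeeping ensuring that the pair-below witness $\rho$ is a legitimate history carrying the same observation for Player~$2$ as $\pi$\,---\,matching the border condition $(\#, a_1) \in E_H$ and the vertical compatibilities $(a_i, d_i) \in E_V$ inherited from the tiling\,---\,so that it genuinely breaks common knowledge at every proper prefix.
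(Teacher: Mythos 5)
Your reduction is correct in substance and starts from the same place as the paper---\textsc{Corridor Universality} encoded through the game graph of Proposition~\ref{prop:PSPACE}---but it resolves the one genuine difficulty, namely the within-$S$ histories that have not yet reached a sink, by a different mechanism. The paper takes \emph{two disjoint copies} of the domino graph and identifies the initial state and the two sinks; as a result, before a sink is reached no player can tell which copy the play is in, so every non-sink history fails \CKS for purely structural reasons, and only the completed histories of the form $\#w\#$ remain to be analysed via the tiling correspondence. You keep a single copy and instead make the universality hypothesis itself do this work: if every word of $\Sigma^*$ is tileable, then each prefix $d_1\cdots d_k$ is a tileable frontier in its own right, the tiling has height at least one because $\Sigma$ avoids $\square$, and the row $a_1\cdots a_k$ beneath the frontier supplies a pair history $\rho = v_0\,(a_1,d_1)\cdots(a_k,d_k)$ with $\rho \sim^2 \pi$ ending at a pair state; so even \emph{mutual} knowledge of the state fails at every proper prefix. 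This is sound---the pair states $(a_i,d_i)\in E_V$ and the moves you need are exactly the compatibilities the tiling provides, and the argument is invoked only in the direction (``universal implies no \CKS history'') where the hypothesis is available. What the paper's gluing buys is unconditional coverage of the prefix case and a shorter case analysis; what your variant buys is a smaller graph and the sharper by-product that universality already destroys first-order knowledge at prefixes. The one point you should not wave away is the assumption that $\Sigma$ is pairwise horizontally compatible (also with the border $\#$) and $\square$-free: it is needed for every $w \in \Sigma^*$ to induce a history $\pi_w$ at all, and it is not literally without loss of generality. It is, however, repairable in one line: if the assumption fails, some word of $\Sigma^*$ is trivially non-tileable, so the universality instance is negative and this is decidable by inspecting $E_H$; your reduction can then output any fixed game possessing a within-$S$ \CKS history. (The paper's own proof leans on the same implicit realizability of $\#w\#$ when it exhibits the witness for a negative instance, so this is a shared debt rather than a defect of your approach.) Finally, both you and the paper leave implicit the within-$S$ histories that loop at $\oplus$ after completion; these inherit the failure of \CKS by appending sink self-loops to the indistinguishability chains, so nothing breaks.
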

\begin{proof}

We proceed by reduction from \textsc{Corridor Universality}:
Given a domino system $\calD$ 
with a designated subset~$\Sigma \subseteq D$, 
we construct a
game graph~$G$ with a subset~$S$ of states 
such that the players attain \CKS at some nontrivial 
history~$\pi$ within $S$
if, and only if, there exists a frontier constraint~$w \in \Sigma^\ell$ 
that does not allow a corridor tiling with~$\calD$;
actually, the sequence of observations along~$\pi$ will 
yield such a constraint.

The construction of~$G$ is as in the proof of 
Proposition~\ref{prop:PSPACE} except that
we take two disjoint copies of the game graph associated to~$\calD$
and identify the two copies of the initial state~$v_0$, 
and those of the sinks $\ominus$, $\oplus$, respectively. 
In this way, no player knows the
actual state of any history that does not reach a sink.
The set~$S$ consists of the initial state, the two sinks, and 
the states corresponding to singleton dominoes in~$\Sigma$ (from both copies). 

Now, every nontrivial history $\pi$ 
that ends at a sink
corresponds either to a single row or to a 
pair of horizontally consistent rows, 
depending on whether~$\pi$ proceeds through singletons 
or pair states. This holds by the same argument as in the proof of 
Proposition~\ref{prop:PSPACE}.
Likewise, it follows that the 
two players do not attain \CKS at a history $\pi$ 
if, and only if, the row corresponding to~$\pi$, or either one of the two rows, 
appears in the frontier of 
some correct corridor tiling with~$\calD$. 
For histories within $S$, 
when~$\pi$ is of the form $\#w\#$ for some (nonempty) word $w \in
\Sigma^*$, this means that the players do not attain \CKS if, and only
if,  there exists a correct corridor tiling with
$w$ in the frontier.

In conclusion, there exists a history within $S$ at which the 
players attain \CKS in~$G$
if, and only if, the \textsc{Corridor Universality} instance $(\calD, \Sigma)$
at the outset is negative\,---\,an undecidable problem, by
Theorem~\ref{thm:corridor-complexity}.
\end{proof}

The results of Latteux and 
Simplot stated in Theorem~\ref{thm:cs-domino}
describe a correspondence 
between context-sensitive languages and domino systems. 
Via our construction for proving the lower bounds in
Propositions~\ref{prop:PSPACE} and \ref{prop:ck-existential},
this correspondence is extended to game graphs, as 
detailed in~\cite{BvdB15} for the more general case 
of consensus acceptor games. 
For our setting, 
it implies that any context-sensitive language~$L \subseteq \Sigma^*$
can be translated into a game graph $G$ 
such that a word $w$ belongs to $L$
if, and only if, the players do not attain \CKS at a 
particular history~$\pi_w$  in $G$ which yields $w$ 
as an observation to both players. 
Conversely, the nondeterministic linear-space procedure witnessing the
upper bound in Propositions~\ref{prop:PSPACE} shows that
the set of histories at which the players do not attain \CKS is
recognisable by a nondeterministic linear-bounded automaton, and it is
hence context-sensitive.  
Thus, we can formulate the following
corollary which implies, in particular, 
that common knowledge of the state is
not a finite-state property in arbitrary games, i.e.,  
the set of histories at which the players attain
\CKS is not regular.

\begin{corol}[\cite{BvdB15}]\label{cor:ck-not-reg}
\begin{enumerate}[(i)]
\item 
In any game, 
the set of histories at which the players attain
\CKS forms a context-sensitive language. 
\item
For every context-sensitive language~$L \subseteq \Sigma^*$, 
we can construct a game in which an observation history belongs to~$L$
if, and only if, 
the players attain \CKS at the history.
\end{enumerate}
\end{corol}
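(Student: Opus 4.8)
The plan is to read off both statements directly from the two propositions already established, the only genuinely external ingredient being that the class of context-sensitive languages is closed under complement (Immerman--Szelepcs\'enyi).

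For part~(i), I would reuse the upper-bound construction in the proof of Proposition~\ref{prop:PSPACE} verbatim. Fixing the finite alphabet $V \cup A$ and viewing each history as a word over it, that proof already describes a nondeterministic procedure which, on input~$\pi$, repeatedly guesses a player~$i$ together with a history $\rho \sim^i \pi$ and accepts as soon as $\rho$ ends at a state different from the one $\pi$ ends at. Since indistinguishable histories have equal length, the guessed~$\rho$ never exceeds the input in length, so the procedure runs in linear space and is a nondeterministic linear-bounded automaton recognising precisely the histories at which the players do \emph{not} attain \CKS. Hence this complement set is context-sensitive, and since context-sensitive languages are closed under complement, so is the set of histories at which the players attain \CKS.

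For part~(ii), I would start from a context-sensitive $L \subseteq \Sigma^*$ and first pass to its complement: by closure under complement, $\overline{L} := \Sigma^* \setminus L$ is again context-sensitive. Applying Theorem~\ref{thm:cs-domino} to $\overline{L}$ yields, in polynomial time, a domino system $\calD$ over a set $D \supseteq \Sigma$ with $L(\calD) \cap \Sigma^* = \overline{L}$. I would then build the game graph~$G$ out of~$\calD$ exactly as in the proof of Proposition~\ref{prop:PSPACE}, so that every word $w \in \Sigma^*$ determines the singleton history $\pi_w = v_0\, d_1 \dots d_\ell\, \oplus$ whose observation sequence, delivered identically to both players, is~$w$ flanked by the border symbol~$\#$. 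By that proof, the players attain \CKS at $\pi_w$ if, and only if, $(\calD, w)$ admits no corridor tiling, i.e.\ if, and only if, $w \notin L(\calD)$. Restricting to $w \in \Sigma^*$, this reads: the players attain \CKS at $\pi_w$ precisely when $w \notin \overline{L}$, that is, when $w \in L$. Thus the observation history~$w$ lies in~$L$ exactly when the players attain \CKS at the corresponding history, as required.

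The step I expect to carry the weight is the single appeal to complementation, which swaps the polarity of both statements relative to the \CKS-as-non-tiling characterisation of Proposition~\ref{prop:PSPACE}; everything else is bookkeeping, the only points needing care being the fixing of a finite alphabet for the history language in~(i) and the verification that the singleton histories $\pi_w$ faithfully present~$w$ as a common observation in~(ii).
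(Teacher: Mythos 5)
Your proposal is correct and takes essentially the same route as the paper: part~(i) reuses the nondeterministic linear-space procedure from the upper bound of Proposition~\ref{prop:PSPACE}, and part~(ii) combines Theorem~\ref{thm:cs-domino} with the domino-to-game construction of Proposition~\ref{prop:PSPACE}, exactly as the paper's discussion preceding the corollary does. The only difference is that you make the appeal to closure of context-sensitive languages under complement (Immerman--Szelepcs\'enyi) explicit in both parts, whereas the paper leaves this polarity flip between ``attain'' and ``do not attain'' \CKS implicit, deferring details to the cited reference.
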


\section{Recurring common knowledge}
\label{sec:recurring}
Let us now turn to the use of 
common knowledge in infinite plays.
We say that a play~$\pi$ allows 
for \emph{recurring} common knowledge of the state (\RCKS) if  
there are infinitely many histories in $\pi$ 
at which the players attain \CKS.
Likewise, we say that a game graph $G$, or a game over~$G$,  
allows for \RCKS
if this is true for every play in~$G$.

A \emph{knowledge gap} in a play $\pi$ is an interval 
$\igap{\ell}{t}$ with $t \ge \ell > 0$, 
such that the players do not attain \CKS 
in $\pi$ at any round in $\igap{\ell}{t}$.
The \emph{length} of the gap is $t - \ell + 1$.
Hence, a play allows
for \RCKS if the length of every knowledge gap in it 
is finite. 
The \emph{gap size} (for \CKS) of a play $\pi$ 
is the least upper bound on the length of knowledge gaps
in $\pi$.
Likewise, the gap size of a game (graph) 
is the least upper bound on the gap size
of its plays. 

Intuitively, 
the uncertainty of players about the game state 
progresses along knowledge gaps, and it vanishes
at every history at which \CKS is attained.
If we reindex the histories of a game~$G$ by forgetting 
any prefix history at which the players attain \CKS, the 
knowledge of players about the game state is preserved.
Concretely, for the case of private knowledge, 
let $\pi$, $\pi'$ be two histories   
at which the players attain \CKS,
and let~$v$ be the state at which they end.
Then, for any two continuations $\tau$, $\tau'$ 
and every player~$i$,  
we have
$\pi \tau \sim^i \pi' \tau'$ if, and only if, 
$v \tau \sim^i v \tau'$ in the game~$G$ with~$v$ as initial state.
The preservation for the case of common knowledge follows immediately; 
we formulate it for further reference.   

\begin{lemma}\label{lem:reindexing}
For a game~$G$, let~$\pi$ be a history 
at which the players attain \CKS, and let~$v$ 
be its last state.
Then, the players attain \CKS 
at a prolongation history $\pi\tau$ in~$G$ if, and only if, 
they attain $\CKS$ at history $v \tau$ in the 
game~$G$ with initial state $v$. 
\end{lemma}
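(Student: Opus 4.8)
The plan is to reduce both directions to a single observation about how indistinguishability behaves under concatenation, using the definition of common knowledge in terms of indistinguishability chains (Theorem~\ref{thm:ck-evident} records the equivalent self-evident form): \CKS holds at a history $\sigma$ precisely when every history reachable from $\sigma$ by a chain of pairwise indistinguishable histories ends at the same state as $\sigma$. Write $\pi\tau$ for the prolongation and let $v'$ be its last state; let $G_v$ denote the game $G$ with initial state $v$. Since $\pi$ ends at $v$, the continuation $\tau$ is a legal history of $G_v$ with the same last state $v'$, so the two sides of the equivalence talk about failure of \CKS at histories with a common last state.

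The workhorse is the following elementary identity, which I would state and verify first. Because observations are synchronous and accumulate round by round, for any history $\rho$ ending at $v$ and any continuation $\upsilon$ from $v$ one has $\beta^i(\rho\upsilon) = \beta^i(\rho)\cdot w_\upsilon^i$, where $w_\upsilon^i$ is the action--observation tail determined by $\beta^i(v\upsilon) = \beta^i(v)\,w_\upsilon^i$ in $G_v$. Consequently, for histories $\rho,\rho'$ of equal length ending at $v$ and continuations $\upsilon,\upsilon'$ of equal length, the split at position $|\rho|$ is unique, so $\rho\upsilon \sim^i \rho'\upsilon'$ holds if, and only if, both $\rho \sim^i \rho'$ and $v\upsilon \sim^i v\upsilon'$ in $G_v$. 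Keeping track of lengths is the only delicacy here: indistinguishable histories have equal length, so every chain preserves length and the decompositions above are well defined; the overlap at the shared state $v$ must be counted exactly once, which is what the identity arranges.

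For the forward implication I would take a chain $\pi\tau = \sigma_0 \sim^{i_1}\sigma_1 \sim^{i_2}\cdots\sim^{i_k}\sigma_k$ in $G$ with $\sigma_k$ ending at a state $\neq v'$, and project it to $G_v$. Let $\rho_j$ be the length-$|\pi|$ prefix of $\sigma_j$ and $\upsilon_j$ its remaining tail. Restricting each indistinguishability to its prefix and to its tail via the identity yields two chains, $\pi = \rho_0 \sim^{i_1}\cdots\sim^{i_k}\rho_k$ in $G$ and $v\tau = v\upsilon_0 \sim^{i_1}\cdots\sim^{i_k}v\upsilon_k$ in $G_v$. Here \CKS at $\pi$ enters crucially: it forces every $\rho_j$ to end at $v$, which is exactly what makes each tail $v\upsilon_j$ a legal history of $G_v$ and guarantees that the last state of $v\upsilon_j$ equals that of $\sigma_j$. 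In particular $v\upsilon_k$ ends at a state $\neq v'$, so \CKS fails at $v\tau$.

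The converse is simpler and needs no appeal to \CKS at $\pi$: given a chain $v\tau = \eta_0 \sim^{i_1}\cdots\sim^{i_k}\eta_k$ in $G_v$, write $\eta_j = v\upsilon_j$ and suppose $\eta_k$ ends at $\neq v'$. I would prepend $\pi$ to every history, obtaining the histories $\pi\upsilon_j$ of $G$, legal since $\pi$ ends at $v$. By the concatenation identity applied with the common prefix $\pi$, consecutive members remain indistinguishable, so $\pi\tau = \pi\upsilon_0 \sim^{i_1}\cdots\sim^{i_k}\pi\upsilon_k$ is a chain in $G$ whose last history ends at the same state as $\eta_k$, namely $\neq v'$; hence \CKS fails at $\pi\tau$. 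Taking contrapositives in both directions gives the stated equivalence. The only genuine obstacle is the bookkeeping in the forward direction---ensuring the rebased prefixes legitimately terminate at $v$---and this is dispatched precisely by the hypothesis that \CKS holds at $\pi$.
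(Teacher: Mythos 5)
Your proof is correct and takes essentially the same route as the paper: the paper's entire argument is the private-knowledge concatenation identity stated immediately before the lemma (indistinguishability of prolongations decomposes at a \CKS history into indistinguishability of the prefixes and of the continuations rebased at $v$), from which the common-knowledge case is said to follow immediately. Your write-up just supplies the chain-projection and chain-lifting details the paper leaves implicit, including the key point that \CKS at $\pi$ forces every prefix appearing in a connecting chain to end at $v$.
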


Notice that for a play in an arbitrary game, 
the length of knowledge gaps 
may be unbounded, even if the play allows for \RCKS;
its gap size is then infinite.
Nevertheless, we show that, if a \emph{game} allows for \RCKS, 
then there exists a uniform, finite bound on the 
length of the knowledge gaps in its plays.

\begin{prop}\label{prop:ck-bounded}
If a game graph allows for 
recurring common knowledge of the state, 
then its gap size is finite.
\end{prop}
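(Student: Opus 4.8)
The plan is to argue by contraposition: assuming the gap size is infinite, I would construct a single play along which \CKS is attained only finitely often, contradicting \RCKS. The first step is to recast the unbounded gaps, which may occur at arbitrary rounds in different plays, as \emph{initial} gaps in finitely many derived games. Every gap extends leftwards to a maximal one $\igap{\ell}{t}$; since round $0$ always enjoys \CKS (the initial state is known to all players), we have $\ell-1 \ge 0$, and by maximality \CKS holds at round $\ell-1$, say at a state $v$. By Lemma~\ref{lem:reindexing}, the failure of \CKS throughout $\igap{\ell}{t}$ in the play is equivalent to the failure of \CKS throughout rounds $1, \dots, t-\ell+1$ in the game $G_v$ obtained from $G$ by taking $v$ as initial state; thus a maximal gap of length $t-\ell+1$ corresponds to an equally long gap starting at round $1$ in $G_v$. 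As there are only finitely many states, an infinite gap size forces, by the pigeonhole principle, some fixed state $v$ to host initial gaps of unbounded length in $G_v$.

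For this $v$, I would then consider the tree $T_v$ whose nodes are the histories $v\,a_1 v_1 \dots a_k v_k$ of $G_v$ along which \CKS fails at every round $1, \dots, k$. This set is prefix-closed, hence a genuine tree rooted at $v$, and it is finitely branching because the action space and the set of successors of each state are finite. By the previous step $T_v$ contains nodes of arbitrarily large depth, so it is infinite, and K\"onig's lemma yields an infinite branch: an infinite play $\sigma$ in $G_v$ that attains \CKS at no round $\ge 1$, i.e.\ only at the initial round.

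It remains to transfer this back to $G$. Since $v$ is the last state of some \CKS history $\pi$ of $G$ (the prefix of length $\ell-1$ realising the maximal gap), I would prepend $\pi$ to $\sigma$ to obtain a play $\pi\sigma$ of $G$. Applying Lemma~\ref{lem:reindexing} once more, the absence of \CKS after the initial round of $\sigma$ translates into the absence of \CKS at every round beyond the finite prefix $\pi$; hence $\pi\sigma$ attains \CKS only finitely often, contradicting the assumption that $G$ allows for \RCKS. I expect the main obstacle to be conceptual rather than computational: setting up the reindexing so that gaps occurring at arbitrary rounds are uniformly recast as initial gaps in the finitely many games $G_v$ is exactly what makes a single compactness (K\"onig) argument applicable, while the finiteness of the branching, guaranteed by the finite action and observation alphabets, is what keeps that argument valid.
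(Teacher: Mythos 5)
Your proof is correct, and it rests on the same two pillars as the paper's argument: the reindexing Lemma~\ref{lem:reindexing}, which turns a gap occurring after a \CKS history into an initial gap of the game restarted at the commonly known state, and K\"onig's lemma applied to the finitely branching tree of hereditarily ambiguous histories rooted at such a state. The difference lies in the direction and the bookkeeping. The paper argues directly: for \emph{every} state $v$ it builds the tree $T_v$ of histories of $G,v$ with no intermediate \CKS, concludes from \RCKS and K\"onig's lemma that these trees are finite, and then transfers the resulting bound\,---\,the maximal tree height\,---\,to all plays of $G,v_0$ by gluing the trees into a game graph $\tck{G}$ and invoking a \CKS-preserving bijection between histories of $G,v_0$ and $\tck{G},v_0$. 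You instead argue by contraposition: unbounded gaps are normalised (by maximal leftward extension plus reindexing) into unboundedly long initial gaps, a pigeonhole over the finitely many states isolates a single $v$ whose tree is infinite, and K\"onig's lemma then yields one play violating \RCKS after prepending a \CKS history ending at~$v$. Your route is leaner: it needs neither the glued graph $\tck{G}$ nor the bijection argument, and the pigeonhole automatically restricts attention to states that are endpoints of \CKS histories, which is exactly where Lemma~\ref{lem:reindexing} applies (the paper's blanket claim that \emph{all} trees $T_v$ are finite is, strictly speaking, only needed\,---\,and only justified by the \RCKS hypothesis\,---\,for such states, a point your argument sidesteps). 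What the paper's version buys in exchange is reusable infrastructure and an explicit bound: the trees $(T_v)_{v \in V}$ and the graph $\tck{G}$ are used again to define the abridged game in Section~\ref{sec:synthesis}, and the maximal tree height is a concrete (if crude) gap bound, later sharpened to $m^2$ in Theorem~\ref{thm:ck-bounded}, whereas your contrapositive argument certifies finiteness without exhibiting any bound.
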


\begin{proof}
Let $G$ be a game graph that allows for \RCKS. 
Without loss of generality, we assume that all states are
reachable from the initial state~$v_0$.

For each state $v \in V$, we construct a tree $T_v$ 
that may be understood as the unravelling 
of $G$ from $v$, up to common knowledge.
The nodes of~$T_v$ correspond to the histories in $G,v$ 
that have no strict, nontrivial 
prefix at which the players attain \CKS. 
The edges are labelled with
action profiles and correspond to moves in $G$: 
for any history $\rho$ in the domain of~$T_v$  
at which the players do not attain \CKS, or for $\rho = v$, 
we have an edge 
$(\rho, a, \rho a w)$ whenever $(u, a, w) \in E$,
for the last state $u$ of $\rho$.
The leaves of $T_v$ thus correspond to the 
histories in $G,v$ at which
the players attain \CKS for the first time 
(not counting the initial history).
Finally, we associate to every history the observations 
of its last state.

Notice that each of the constructed trees
has finite branching and all its paths are finite, 
according to our assumption that all plays allow for 
\RCKS.
Hence by K\"onig's lemma, 
every tree in the collection ($T_v)_{v \in V}$ is finite. 
We claim that the maximal height of a tree in this 
collection is an upper bound for the length of 
knowledge gaps in the plays of $G, v_0$.

To show this, 
we construct a game graph $\tck{G}$ over the
disjoint union of all unravelling trees~$T_v$,
where we identify every leaf history
with the root of the tree associated to its last state.
Formally, in each tree $T_v$, we replace every edge 
$(\rho, a, \pi)$, where $\pi$ is a leaf history 
ending at $w$,
with an edge $(\rho, a, w)$ leading to the root of the tree~$T_w$.
This induces a natural bijection $h$ between histories of 
$G, v_0$ and $\tck{G}, v_0$, which
is also a bisimulation\,---\,clearly, the two game graphs have the same
infinite unravelling. 
\takeout{
Another way to view $\tck{G}$ 
is to consider the infinite unravelling of
$G,v_0$ and to 
identify, for each state $v \in V$, 
all 
histories of $G$ that end at $v$ and at which 
the players attain \CKS. 
}
The bijection $h$ preserves \CKS: 
By the reindexing argument of Lemma \ref{lem:reindexing},
the players attain \CKS at a history $\pi$ in $G, v_0$,
if, and only if, they attain \CKS at the image $h( \pi )$ in 
$\tck{G}, v_0$.
\takeout{
Indeed, the \emph{if} part holds in an even stronger sense: 
if two histories 
$\pi$ and $\pi'$ end at the same state in $G, v_0$ but 
this is not common knowledge, 
then the (images of the) histories end at
different states in $\tck{G}, v_0$.
}
As a consequence it follows that, on the one hand, every history 
in $\tck{G}, v_0$ at which the players attain \CKS ends at the root of 
some tree~$T_v$, and
on the other hand, for every knowledge gap, i.e., 
every sequence of consecutive 
histories $\pi^1, \pi^2, \dots, \pi^t$ in 
$G, v_0$ at which the players do not attain \CKS, the image 
$h(\pi^1), h(\pi^2), \dots, h(\pi^t)$ 
describes a sequence of consecutive histories in $\tck{G}$ 
that never visit the root of any tree~$T_v$. 
Hence, the length $t$ of such a sequence is bounded by 
the maximal length of a path in any of the trees $(T_v)_{v \in
  V}$. This concludes the proof.
\end{proof}

The insight that games with recurring common knowledge of the state 
have finitely bounded knowledge gaps  
allows us to conclude that these games are decidable
via a generic argument which, however, does not yield meaningful
complexity bounds.

\begin{theorem}~\label{thm:synthesis-decidable}
For games that allow for recurring common knowledge of the
state, with observable $\omega$-regular winning conditions,
\begin{enumerate}[(i)]
\item
it is decidable whether there exists a joint winning
strategy, and
\item 
if it is the case, there also
exists a finite-state winning strategy, 
which can be constructed effectively. 
\end{enumerate}
\end{theorem}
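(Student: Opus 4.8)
The plan is to reduce the problem to solving a finite game of perfect information, relying on the bounded-gap consequence of recurring common knowledge together with a metatheorem of~\cite{BKP11}. By Proposition~\ref{prop:ck-bounded}, a game graph that allows for \RCKS has a finite gap size; so I first fix a uniform bound~$k$ on the length of its knowledge gaps. The strategy is then to track, along every play, the epistemic information about the current state that the grand coalition collectively possesses, and to argue that this tracking draws on only finitely many distinct values, so that it induces a finite perfect-information game equivalent to the original coordination game.

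More precisely, I would invoke the perfect-information tracking construction of~\cite{BKP11}, which associates to each history an abstract epistemic state recording the players' (higher-order) knowledge about the current position, in such a way that joint winning strategies in the coordination game correspond to winning strategies in the resulting perfect-information game. The key point is that whenever the players attain \CKS, this epistemic state collapses to the singleton consisting of the commonly known current state: by the reindexing argument of Lemma~\ref{lem:reindexing}, common-knowledge histories act as reset points from which the tracking restarts as if the game began afresh at the last state. Consequently, the epistemic state can accumulate uncertainty only over the span of a single knowledge gap, whose length is at most~$k$. Since the game graph is finite and~$k$ is a fixed bound, only finitely many epistemic states can arise, and hence the perfect-information tracking is a finite object.

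It then remains to apply the metatheorem of~\cite{BKP11}: for coordination games whose perfect-information tracking is finite, the existence of a joint winning strategy is decidable, and whenever one exists a finite-state winning strategy exists and can be constructed effectively. Combined with the regular-to-parity reduction for observable winning conditions noted earlier (also from~\cite{BKP11}), this reduces the synthesis problem to solving a finite perfect-information parity game, from which finite-state strategies are read off by standard means. I expect the main obstacle to be the second step: pinning down the tracking states precisely and verifying that the uniform bound~$k$ on knowledge gaps genuinely bounds their number\,---\,in particular, confirming that attaining \CKS collapses the epistemic structure to the known current state, so that no unbounded growth of the knowledge hierarchy survives across gaps. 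Note that this argument is deliberately generic and gives no useful complexity bound, since the number of bounded-depth epistemic states may be enormous; refined bounds are the subject of the subsequent development.
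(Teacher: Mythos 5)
Your proposal follows essentially the same route as the paper's proof: bound the knowledge gaps via Proposition~\ref{prop:ck-bounded}, invoke the epistemic tracking (``epistemic unfolding'') of~\cite{BKP11}, observe that histories with \CKS collapse the epistemic model to (one equivalent to) a singleton carrying the commonly known state, and conclude finiteness of the tracking, hence decidability and effective finite-state synthesis by the metatheorem of~\cite{BKP11}. The only detail you leave informal\,---\,the precise sense in which the tracking ``resets'' and takes finitely many values\,---\,is exactly what the paper makes precise via homomorphic equivalence of epistemic models (Theorem~9 of~\cite{BKP11}), so your identified ``main obstacle'' is resolved the way you anticipate.
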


\begin{proof}
The argument relies on the tracking construction 
from~\cite{BKP11} that reduces the problem of solving 
coordination games with imperfect information for $n$ players against
Nature to that of solving a zero-sum game for two players 
with perfect information. 
The construction proceeds via
an unravelling process 
that generates epistemic models of the player's information along the
rounds of a play,
and thus encapsulates their uncertainty. 

This process
described as ``epistemic unfolding''
in the paper \cite[Section~3]{BKP11} is outlined as follows.
An \emph{epistemic model} for a game graph $G$ with the usual
notation, is a 
Kripke structure $\calK = (K, (Q_v)_{v \in V}, (\sim^i)_{1 \le i \le n})$ 
over a set $K$ of histories of the same length in in $G$, 
equipped with predicates $Q_v$ designating the histories that end in state 
$v \in V$ and with 
the players' indistinguishability relations $\sim^i$.
The construction keeps track of how the knowledge 
of players about the actual history is updated during a round,
by generating for each epistemic model $\calK$
a set of new models, one for 
each assignment of an action profile~$a_k$ 
to each history $k \in K$ such that  
the action assigned to any player~$i$  
is compatible with his knowledge, i.e.
for all $k, k' \in K$ with
$k \sim^i k'$, we have $a_k^i = a_{k'}^i$. 
The update of a model~$\calK$ with such an action 
assignment~$(a_k)_{k\in K}$ leads to a new, 
possibly disconnected epistemic model $\calK'$ 
over the universe 
\begin{align*}
  K' = \{k a_k w \mid k \in K \cap Q_v \text{ and } (v,a_k,w) \in
  E \},
\end{align*} 
with predicates $Q_w$ designating the histories $k a_k w \in K'$,
and with $k a_k w \sim^i k' a_k w'$ whenever
$k \sim^i k'$ in~$\calK$ and $w \sim^i w'$ in~$\calG$.
By taking the connected components of this updated model under 
the coarsening $\sim := \bigcup_{i = 1}^{n}\!\!\sim^i$, 
we obtain the 
set of epistemic successor models of $\calK$ in the unfolding. 
The tracking construction starts from 
the trivial model that consists only of the initial 
state of the game~$\calG$. 
By successively applying the update, it unfolds a tree 
labelled with  
epistemic models, which corresponds to a two-player game $\calG'$
of perfect 
information
where the strategies of one player translate into 
joint strategies of the grand coalition in~$\calG$
and vice versa, such that a strategy in~$\calG'$ is winning if and
only if the corresponding 
joint strategy in~$\calG$ is so~\cite[Theorem 5]{BKP11}.
 
The construction can be exploited algorithmically if the
perfect-information tracking of a
game can be folded back into a finite game. 
A homomorphism from an epistemic model 
$\calK$ to $\calK'$ is a function $f: K \to K'$ that preserves the
state predicates and the indistinguishability relations, that is, 
$Q_v(k) \Rightarrow Q_v (f(k)) $ and 
$k \sim^i k' \Rightarrow f(k) \sim^i f(k')$. 
The main result of~\cite{BKP11} shows that, whenever two nodes of the unfolded 
tree carry homomorphically equivalent labels, 
they can be identified without changing the 
(winning or losing) status of the game~\cite[Theorem 9]{BKP11}. 
This holds for all imperfect-information games 
with $\omega$-regular winning conditions that are observable. 
Consequently, the strategy synthesis problem is decidable for a
class of such games, whenever  
the unravelling process of any game in the class 
is guaranteed to generate only finitely many
epistemic models,  
up to homomorphic equivalence.

Let us now consider the tracking of a coordination 
game $\calG$ with observable $\omega$-regular winning condition that
allows for \RCKS.  
We claim that every history~$\pi$ where the players attain \CKS
corresponds to an epistemic model 
that is homomorphically equivalent to one with a single element
labelled with the (commonly known) state at which the history~$\pi$ ends. 
This is because, by our hypothesis of \CKS, 
in the $\sim$-connected component of any epistemic model 
containing~$\pi$, all histories end at the same state. 
On the other hand, when updating an epistemic model~$\calK$,
there are only finitely many successor models 
and each of them can be at most exponentially larger 
than $\calK$, for any fixed action space.
Accordingly, the number of updating rounds in which the 
models can grow is bounded by the gap size
of $\calG$, which is finite, 
according to Proposition~\ref{prop:ck-bounded}.

Therefore, every game with \RCKS has a finite tracking
quotient under homomorphic equivalence. 
By~\cite[Theorems 9 and 11]{BKP11}, this implies that the winner determination problem is decidable for such games, 
and finite-state winning strategies can be effectively synthesised 
whenever the players have a joint winning strategy.  
\end{proof}

\section{Characterisation via mutual knowledge}

Proposition~\ref{prop:ck-existential} 
leaves little hope for deciding
whether a game allows for \RCKS by checking that the property holds 
within parts of the game graphs or on individual plays.
In general, histories 
at which the players do not attain \CKS may be connected to 
arbitrarily long 
chains of indistinguishable histories that end at the same
state, before reaching one with a different end state
to witness the lack of \CKS.
Fortunately, there is a way around this obstacle. 
It turns out that in any game that allows for
\RCKS we can find, for each play~$\pi$, an associated play~$\pi'$
that aligns witnesses for all the histories in~$\pi$ that lack \CKS; 
in particular, whenever 
the players lack common knowledge of the state at some round in~$\pi$, 
there is one player that lacks first-order 
knowledge of the state in~$\pi'$. 
This will allow us to characterise games 
with recurring
common knowledge of the state as those where mutual knowledge of the
state is attained over and over again, along every play.

We say that the players 
attain \emph{mutual knowledge of the state} (\MKS) at
a history~$\pi$ in a game 
if $\State( \pi )$ is mutual knowledge at $\pi$, that
is, if all indistinguishable histories $\rho\sim^i\pi$ 
end at the same state as $\pi$, for all players~$i$.
A~play~$\pi$ allows for recurring mutual knowledge of the state 
(\RMKS) if the players attain \MKS at
infinitely many histories along $\pi$, and 
a game (graph) $G$ allows for \RMKS 
if all plays in~$G$ do.


The link between common and mutual knowledge is made 
by the notion of connected ambiguous histories.
We say that two histories, or plays, $\pi$ and $\pi'$
are \emph{connected} 
if there exists a sequence of  
histories or plays $\pi_1, \dots, \pi_{k}$ and a sequence of players 
$i_1, \dots, i_{k+1}$ such that 
\begin{align*}
  \pi \sim^{i_1} \pi_1 \sim^{i_2} \dots \sim^{i_{k}} \pi_{k}
  \sim^{i_{k+1}} \pi'.
\end{align*}
In the special case when two histories $\pi$ and $\pi'$ end at the
same state~$v$ and are 
connected via a sequence of histories that also end at~$v$, we say
that $\pi$ and $\pi'$ are \emph{twins}.
Clearly, the relations of connectedness and twins
are equivalences between histories. 
Moreover, if two histories~$\pi$ and $\rho$ that end at the same state~$v$
are connected or twins, then for every move $(v, a, v') \in E$ 
the prolongation histories $\pi a v'$ and $\rho a v'$ 
are in the same relation.

A history $\pi$ is \emph{ambiguous} if
the players do not attain \MKS at~$\pi$, 
that is, if 
there exists an indistinguishable history
$\rho \sim^i \pi$, for some player~$i$, 
which ends at a different state. 
In this case, we refer to $\rho$ is an \emph{ambiguity witness} for~$\pi$ 
and we say that $\rho, \pi$ is an \emph{ambiguous pair}.  
Notice that the  players do not attain  
\CKS at a history $\pi$ 
if, and only if, there exists
an ambiguous twin of~$\pi$.

Our goal is to show that every play in which the players do not attain
recurring common knowledge of the state 
is witnessed by one where they do
not attain recurring mutual knowledge of the state.
Towards this, we first prove that if the
players never attain \CKS in a play, 
there exists a witnessing play
in which all histories are ambiguous.

\begin{lemma}\label{lem:never-ck}
  For any game, if there exists a
  play $\pi$ along which the players never attain 
  \emph{common} knowledge of the state 
  (except for the initial state), then there 
  also exists a play $\pi'$ along which they never attain \emph{mutual}
  knowledge of the state. 
  Moreover, the plays $\pi$ and $\pi'$ are connected.
\end{lemma}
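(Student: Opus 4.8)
The plan is to work with the combinatorial reformulation supplied just before the statement. Write $\pi = v_0\,a_1 v_1\,a_2 v_2 \dots$ and let $\pi^\ell$ denote its length-$\ell$ prefix. Recall that the players fail to attain \CKS at $\pi^\ell$ exactly when $\pi^\ell$ has an \emph{ambiguous twin}, and that they fail to attain \MKS at a history precisely when that history is itself ambiguous. Thus the hypothesis says that for every $\ell \ge 1$ the prefix $\pi^\ell$ admits an ambiguous twin, while the conclusion asks for a single play $\pi'$, connected to $\pi$, \emph{all} of whose nontrivial prefixes are ambiguous. The whole difficulty is that ambiguity is guaranteed only one round at a time, on possibly unrelated twins, whereas we must line these witnesses up along one infinite branch.

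First I would reduce the construction of $\pi'$ to an application of K\"onig's lemma. Consider the tree $\mathcal{T}$ whose nodes at level $\ell$ are the histories $\sigma$ of length $\ell$ that are connected to $\pi^\ell$ and all of whose nontrivial prefixes are ambiguous, with the parent of a node being its length-$(\ell-1)$ prefix (which is again a node, so $\mathcal{T}$ is genuinely rooted at $v_0$). Since the action space and the state set are finite, each node has only finitely many one-step prolongations, so $\mathcal{T}$ is finitely branching. If $\mathcal{T}$ contains nodes of every depth, K\"onig's lemma yields an infinite branch, that is, a play $\pi'$ each of whose nontrivial prefixes is ambiguous; this is exactly a play along which \MKS is never attained.

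The heart of the proof\,---\,and the step I expect to be the main obstacle\,---\,is to show that $\mathcal{T}$ is infinite, i.e.\ that for every $\ell$ there is a history connected to $\pi^\ell$ whose prefixes are \emph{uniformly} ambiguous. The hypothesis only delivers, for each level separately, an ambiguous twin $\tau_\ell$ of $\pi^\ell$ together with an ambiguity witness $\rho_\ell \sim^{i_\ell} \tau_\ell$ ending in a different state; neither the prefixes of $\tau_\ell$ nor its $\pi$-prolongation $\tau_\ell\,a_{\ell+1}v_{\ell+1}$ need remain ambiguous, so the witnesses at different levels live on a priori unrelated histories. To align them I would exploit the two preservation principles recorded above: that the relations of \emph{connectedness} and of being \emph{twins} are inherited by prolongations under a common move, together with the reindexing of Lemma~\ref{lem:reindexing}. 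Concretely, since every twin of $\pi^\ell$ extends, via the move of $\pi$, to a twin of $\pi^{\ell+1}$, and since any two twins of $\pi^{\ell+1}$ are themselves twins of one another (twinhood being an equivalence), the ambiguous twins across successive levels are tied together through chains of equal-length, equal-endpoint histories. The plan is to push this coherence through a pigeonhole on the finite set of end-states\,---\,and on the divergence type of each witness\,---\,to extract arbitrarily long, nested, uniformly ambiguous histories. I expect this synchronisation, turning per-round ambiguity into a single connected play that is ambiguous at \emph{every} round, to be the genuinely delicate part.

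Finally, connectedness of $\pi$ and $\pi'$ should come essentially for free from the construction. Every node of $\mathcal{T}$ is connected to the corresponding prefix of $\pi$ by definition, so each finite prefix $\pi'^{\ell}$ is connected to $\pi^{\ell}$; I would carry the connecting chains explicitly as part of the tree data, so that the K\"onig branch delivers, in the limit, a finite chain of plays linking $\pi'$ to $\pi$ and thereby witnesses that the two plays are connected, as required.
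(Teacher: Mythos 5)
Your reduction to K\"onig's lemma matches the paper's construction exactly: your tree $\mathcal{T}$ coincides with the paper's set $T_\pi$ of histories all of whose nontrivial prefixes are ambiguous and connected to the prefix of $\pi$ of the same length (by perfect recall, connectedness of a node to the length-$\ell$ prefix $\pi_\ell$ passes down to all of its prefixes, so the two definitions agree), and the finite-branching/infinite-branch argument, as well as the closing remark on connectedness of $\pi$ and $\pi'$, are the same as in the paper. But the step you yourself flag as the heart of the proof\,---\,that $\mathcal{T}$ contains nodes of every depth\,---\,is precisely what you do not prove, and the plan you sketch for it does not work. A pigeonhole on end states (or ``divergence types'') of per-level witnesses $(\tau_\ell, \rho_\ell, i_\ell)$ can only produce infinitely many levels whose witnesses share a signature; it creates no prefix relation among these histories, hence no nesting, and nesting is the entire difficulty: ambiguity is not hereditary, so neither the prefixes nor the prolongations of an ambiguous twin need be ambiguous, and agreement of end states across unrelated levels does nothing to repair this.

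What is missing is a strengthened induction invariant that makes the one-round extension possible. The paper proves, by induction on $\ell$: if \CKS fails along $\pi$ up to round $\ell$, then for \emph{every} ambiguous pair $\tau \sim^i \rho$ connected to $\pi_\ell$ there is a pair $\tau' \sim^i \rho'$ with $\tau' \in T_\pi$ a \emph{twin} of $\tau$ and $\rho'$ ending at the \emph{same state} as $\rho$. The extra data\,---\,an ambiguity witness with prescribed end state, attached to the in-tree twin\,---\,is exactly what allows ambiguity to be pushed one more round. Given an ambiguous pair $\tau a v \sim^i \rho c w$ with $v \neq w$ connected to $\pi_{\ell+1}$, perfect recall gives $\tau \sim^i \rho$, and one argues by cases: if $\tau$ and $\rho$ end at different states, the inductive $\rho'$ supplies the new witness, since $\tau' a v \sim^i \rho' c w$ and $\tau' a v$ remains in $T_\pi$; if $\tau$ and $\rho$ end at the same state, then the failure of \CKS at $\tau$ supplies an ambiguous twin of $\tau$, the induction hypothesis lets one take that twin inside $T_\pi$ as $\tau'$, and then $\tau' c w$ itself (a legal history precisely because $\tau'$ and $\rho$ end at the same state) witnesses the ambiguity of $\tau' a v$. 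Your proposal carries neither the universal quantification over all ambiguous pairs connected to $\pi_\ell$ nor the end state of the witness through the levels, and without both pieces of data neither case of this extension can be completed; this is the concrete gap between your sketch and a proof.
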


\begin{proof}

For an arbitrary game $G$ and a play $\pi = v_0 \,a_1 v_1\, \dots$,   
we consider the set $T_\pi$ of all histories $\tau$ in $G$ 
such that
every nontrivial prefix history of $\tau$ is ambiguous and 
connected to 
the history of the same length in~$\pi$.
As the set $T_\pi$ is closed under 
prefix histories, we can view it as  
a finitely branching tree. 
We wish to show that 
if the players do not attain \CKS along $\pi$, 
then every history in $\pi$ is connected to some history in 
$T_\pi$, and therefore~$T_\pi$ contains an infinite play in $G$ 
along which the players never attain $\MKS$. 

We prove a stronger property, for every 
history $\pi_\ell$ of lenght~$\ell \ge 1$ in $\pi$:
If the players do not attain \CKS along $\pi_\ell$ (except for the
trivial history), then for every ambiguous pair $\tau \sim^i \rho$
connected to $\pi_\ell$ 
there exists a pair $\tau' \sim^i \rho'$, 
such that 
\begin{enumerate}[(i)]
\item $\tau' \in T_\pi$ is a twin of $\tau$, and
\item $\rho'$ ends at the same state as $\rho$.
\end{enumerate}

For the base case with $\ell = 1$, 
if the players do not attain \CKS 
at the history $\pi_1 := v_0 \, a_1 v_1$, then 
there exist ambiguous histories connected to $\pi_1$,
and they all belong to $T_\pi$, because the only preceding
history is trivial. 
Hence, for any ambiguous pair $\tau \sim^i \rho$, 
already $\tau' = \tau$ and $\rho' = \rho$ witness the statement.

For the induction step, suppose the statement holds 
for $\ell \ge 1$ and 
assume that the players
do not attain \CKS up to (and including) the history
$\pi_{\ell+1}$ of length $\ell + 1$. 
In particular, this means that there exist ambiguous 
histories connected to $\pi_{\ell+1}$;
among these, let us pick an ambiguous pair 
$\tau a v \sim^i \rho c w$, with $v \neq w$. 
Due to perfect recall, we have $\tau \sim^i \rho$.

We distinguish two cases. (1) 
If $\tau$ and $\rho$ end at different states $v' \neq w'$, 
by induction hypothesis, there exists a twin $\tau' \in T_\pi$ of $\tau$ 
and a history $\rho' \sim^i \tau'$ that ends at $w'$. 
On the one hand $\tau' a v$ is a twin of $\tau a v$.
On the other hand, 
by definition of the observation function, 
$\tau' a v  \sim^i \rho' c w$. Since $\tau' \in T_\pi$ and $v \neq w$, 
this also implies $\tau' a v \in T_\pi$.
(2) Otherwise, suppose $\tau$ and $\rho$ end at the same state.
As the histories are connected to $\pi_\ell$, the players do not attain \CKS 
at~$\tau$. 
Hence, there exists an ambiguous twin $\tau'$ of $\tau$, and by induction  
hypothesis, we can choose~$\tau' \in T_\pi$. 
On the one hand, $\tau' a v$ is a twin of $\tau a v$.
On the other hand, as $\tau'$ and $\rho$ end at the same state, 
so $\tau' c w$ is a valid history in $G$, and we have $\tau' a v \sim^i \tau' c w$. 
Again, since $\tau' \in T_\pi$, and $v \neq w$ 
it follows $\tau' a v \in T_\pi$. 
This completes the induction argument.

In conclusion, 
for a play $\pi$ in which the players do 
not attain \CKS at any round, there exist
histories in $T_\pi$ that are connected to arbitrarily 
long histories of $\pi$. As the tree $T_\pi$ is finitely branching, 
it follows from K\"onig's Lemma that it 
has an infinite path $\pi'$. 
By construction, 
each nontrivial prefix of $\pi'$ is an ambiguous history, and it is  
connected to the history of $\pi$ of the same length. 
Hence,~$\pi'$ describes a play connected to~$\pi$ 
along which the players never attain
mutual knowledge of the state.
\end{proof}


We are now ready to formulate
our characterisation result that will be instrumental 
for the algorithmics of games with~\RCKS.

\begin{theorem}\label{thm:ck-char-mutual}
A game allows for recurring common knowledge of the state
if, and only if, it allows for recurring mutual 
knowledge of the state.
\end{theorem}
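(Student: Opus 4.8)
The plan is to prove the two implications separately, the forward one being immediate and the converse going through the contrapositive, where I reuse Lemma~\ref{lem:never-ck} together with the reindexing argument underlying Lemma~\ref{lem:reindexing}.

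For the direction from \RCKS to \RMKS, I would simply observe that attaining \CKS at a history entails attaining \MKS at it: mutual knowledge is exactly the first iteration of the knowledge operators in the definition of common knowledge, so $\State(\pi)$ common knowledge at $\pi$ implies $\pi \in K^i(\State(\pi))$ for every player~$i$. Hence the set of \MKS histories of a play contains its \CKS histories, and any play with infinitely many \CKS histories has infinitely many \MKS histories. A game that allows for \RCKS therefore allows for \RMKS.

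For the converse I would argue by contraposition, assuming the game does \emph{not} allow for \RCKS and exhibiting a play that does not allow for \RMKS. By hypothesis there is a play~$\pi$ with only finitely many histories at which the players attain \CKS; since the trivial initial history always attains \CKS, I can pick the \CKS history~$\pi_N$ of greatest length, ending at some state~$v$. By the reindexing Lemma~\ref{lem:reindexing}, the players attain \CKS at a prolongation $\pi_N\tau$ in~$G$ exactly when they attain \CKS at $v\tau$ in the game $G,v$ with initial state~$v$; as no history of~$\pi$ beyond~$\pi_N$ attains \CKS, the continuation of~$\pi$ after~$\pi_N$, read in $G,v$, is a play along which the players never attain \CKS except at the trivial history. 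Applying Lemma~\ref{lem:never-ck} to $G,v$ then yields a play~$\sigma'$ there along which the players never attain \MKS.

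It remains to transport this failure of \MKS back into~$G$. Grafting~$\sigma'$ onto~$\pi_N$ produces a play $\pi' = \pi_N\sigma'$ in~$G$, and I would establish the mutual-knowledge counterpart of Lemma~\ref{lem:reindexing}: the players attain \MKS at $\pi_N\tau$ in~$G$ if, and only if, they attain \MKS at $v\tau$ in $G,v$. This follows from the same private-knowledge reindexing that justifies Lemma~\ref{lem:reindexing}, using that~$\pi_N$ attains \CKS and that the \CKS property is closed under~$\sim^i$: any history indistinguishable from $\pi_N\tau$ has, by perfect recall, a length-$N$ prefix indistinguishable from~$\pi_N$, which therefore ends at~$v$ and attains \CKS, so its continuation corresponds to one of $v\tau$ in $G,v$ with the same end state. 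Consequently~$\pi'$ attains \MKS at no history beyond round~$N$, hence has only finitely many \MKS histories, witnessing that the game does not allow for \RMKS. The main obstacle is precisely this transport step: verifying that the indistinguishability structure of continuations of~$\pi_N$ in~$G$ coincides with that of continuations of~$v$ in $G,v$, so that the \emph{absence} of first-order knowledge of the state is preserved faithfully across the reindexing.
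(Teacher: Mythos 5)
Your proof is correct and follows essentially the same route as the paper: the trivial forward direction, then a contrapositive argument that restarts the game at the last \CKS history, applies Lemma~\ref{lem:never-ck} in $G,v$, and grafts the resulting play back onto the prefix. The ``transport step'' you flag as the main obstacle is precisely what the paper asserts (without elaboration) in its final sentence, and your justification of it via the private-knowledge reindexing underlying Lemma~\ref{lem:reindexing} is sound\,---\,indeed, for the contrapositive you only need the easy lifting direction, that $v\sigma \sim^i v\tau$ in $G,v$ implies $\pi_N\sigma \sim^i \pi_N\tau$ in $G$.
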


\begin{proof}
The \emph{only if} direction is trivial: common knowledge of an event
implies mutual knowledge. 

For the converse, let us consider a game $G$ 
that does not allow for \RCKS. Then, there exists a play~$\pi$ 
in which the players attain \CKS at some round $\ell$, but not at any 
later history. 
Accordingly, in the game $G, v$ starting from the 
(commonly known) state $v$ that is reached in round~$\ell$ of~$\pi$, 
there exists a play along which the
player never attain \CKS, except for the initial state.
Then, by Lemma~\ref{lem:never-ck}, there exists a
play~$\pi'$ in $G,v$ along which the players never 
attain mutual knowledge of the state.  
Furthermore, in the play that follows $\pi$ for
the first $\ell$ rounds and, upon reaching $v$, proceeds like $\pi'$, 
the players do not attain \MKS at the infinitely many
histories from round $\ell$ onwards.
Hence, the game $G$ 
does not allow for \RMKS, which concludes the proof.
\end{proof}

Before turning to algorithmic questions, 
let us state the following corollary of 
arguments from the proofs of 
Lemma~\ref{lem:never-ck} and 
Theorem~\ref{thm:ck-char-mutual},
which will be useful for bounding
the gap size of games
in Section~\ref{sec:decidability}.

\begin{corol}\label{cor:connect-ck-mk}
For any game~$G$, if the players 
do not attain common knowledge of
the state in a play $\pi$ along
a sequence of rounds
$\ell+1, \dots, \ell+t$, 
then there exists a play $\pi'$ in $G$ that is 
connected to $\pi$ and on which the players 
do not attain mutual knowledge of the state
along the rounds $\ell+1, \dots, \ell+t$.
\end{corol}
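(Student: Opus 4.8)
The plan is to adapt the proof of Lemma~\ref{lem:never-ck} so that it applies to a \emph{finite} window of rounds rather than to an entire play. Corollary~\ref{cor:connect-ck-mk} is essentially the ``localised'' version of the combination of Lemma~\ref{lem:never-ck} and Theorem~\ref{thm:ck-char-mutual}: instead of assuming the players never attain \CKS after some point, we only assume they fail to attain \CKS over the finite block of rounds $\ell+1, \dots, \ell+t$. First I would isolate the relevant segment by considering the history $\pi_\ell$ that $\pi$ reaches at round~$\ell$, and work relative to the continuations of length up to~$t$. The key point inherited from Lemma~\ref{lem:never-ck} is that the inductive construction of the ambiguous witness proceeds round by round: the statement proved there is an \emph{inductive invariant} over the length of histories, so it can be stopped at any finite length without requiring that \CKS fails forever.

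The main steps are as follows. First, I would restate the inductive invariant from the proof of Lemma~\ref{lem:never-ck}, but now indexed by rounds in the window $[\ell+1, \ell+t]$: for each such round $r$, assuming the players do not attain \CKS in $\pi$ at rounds $\ell+1, \dots, r$, every ambiguous pair $\tau \sim^i \rho$ connected to the round-$r$ history of $\pi$ admits a pair $\tau' \sim^i \rho'$ with $\tau'$ a twin of $\tau$, $\tau'$ lying in the appropriate ambiguity tree, and $\rho'$ ending at the same state as $\rho$. Second, I would replay the two-case induction step verbatim (the case where $\tau, \rho$ end at distinct states, using perfect recall and the observation-function definition, and the case where they end at the same state, where the hypothesis that \CKS fails at the preceding round supplies an ambiguous twin). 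Since the induction only ever invokes the \CKS-failure hypothesis at rounds already inside the window, nothing outside $[\ell+1, \ell+t]$ is needed. Third, I would apply K\"onig's Lemma to the finitely branching tree of ambiguous histories built over this window to extract a finite sequence of ambiguous histories, one per round, all connected to the corresponding histories of~$\pi$; I would then extend this arbitrarily (following any play of~$G$ from the endpoint) to obtain a full play $\pi'$ connected to~$\pi$ along which \MKS fails precisely at rounds $\ell+1, \dots, \ell+t$.

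The main obstacle I anticipate is bookkeeping the \emph{connectedness} relation so that the extracted play $\pi'$ is genuinely connected to the original $\pi$, not merely connected to the shifted game $G,v$. In Theorem~\ref{thm:ck-char-mutual} this was handled by splicing $\pi'$ onto the common prefix of $\pi$ up to the commonly-known state~$v$ reached at round~$\ell$; here there is no such commonly-known anchoring state, since \CKS may well have failed before round $\ell+1$ as well as after round $\ell+t$. I would therefore need to track connectedness directly through the prefix of length~$\ell$, using the fact (noted in Section~\ref{sec:recurring}) that connectedness of two histories ending at the same state is preserved under appending a common move, and more generally that the ambiguity tree $T_\pi$ keeps each of its histories connected to the same-length history of~$\pi$ by construction. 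Once this connectedness bookkeeping is in place, the finiteness of the window makes the K\"onig's Lemma step immediate and the conclusion follows without further work.
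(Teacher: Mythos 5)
Your proposal is correct, and its engine is the same as the paper's: the tree of hereditarily ambiguous histories, connected level by level to~$\pi$, together with the two-case induction of Lemma~\ref{lem:never-ck}; your key structural observation\,---\,that this induction is an invariant which only ever invokes the failure of \CKS at rounds already inside the window, and can therefore be cut off at any finite length\,---\,is exactly right. Where you genuinely diverge is at the left endpoint. You assert that ``there is no commonly-known anchoring state'' and consequently re-derive a localised invariant starting at round $\ell+1$, threading connectedness through the length-$\ell$ prefix by hand. The paper instead manufactures an anchor: enlarge the window leftwards to the last round $\le \ell$ at which the players attain \CKS (round $0$ in the worst case, since the initial state is common knowledge). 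Proving the statement for the enlarged window implies it for the original one, because failing \MKS on a superset of rounds is a stronger conclusion. After this harmless normalisation, the state~$v$ reached at round~$\ell$ is commonly known, so by the reindexing argument of Lemma~\ref{lem:reindexing} one passes to the game $G,v$, applies the construction of Lemma~\ref{lem:never-ck} verbatim to the suffix of~$\pi$, and splices the resulting ambiguous witness back onto the prefix. The paper's route buys brevity\,---\,it cites the lemma's induction instead of replaying it\,---\,while yours is more self-contained at the cost of redoing the bookkeeping; both are valid.

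Three minor remarks. K\"onig's Lemma is not needed for a finite window: the tree is prefix-closed, so any node at level $\ell+t$ (which exists by the invariant, since \CKS fails at $\pi_{\ell+t}$) already carries the whole ambiguous segment. Your phrase that \MKS fails \emph{precisely} at rounds $\ell+1, \dots, \ell+t$ overclaims: ambiguity at other rounds is neither excluded nor relevant, and the corollary only asks for failure \emph{along} those rounds. Finally, both you and the paper establish connectedness of~$\pi'$ to~$\pi$ only prefix-wise (each history of~$\pi'$ up to round $\ell+t$ is connected to the same-length history of~$\pi$); this matches the paper's own level of rigour in Lemma~\ref{lem:never-ck}, and the connectedness clause is in any case not used in the downstream application, Theorem~\ref{thm:ck-bounded}.
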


\begin{proof}
Let $G$ be a game graph and 
let $\pi$ be a play with the stated property, for some $\ell, t >
0$. We assume, without loss of generality, 
that the players attain \CKS at round $\ell$ in $\pi$.
For the game $G, v$ starting at the state $v$ reached in this round, 
we consider the suffix $\tau$ of
$\pi$ from round $\ell$ onwards, and construct 
the tree~$T_\tau$ of hereditarily ambiguous histories connected to $\tau$, 
as in the proof of Lemma~\ref{lem:never-ck}. 
The induction argument from the
proof then shows that the history of length~$t$ in~$\tau$ is connected to some
ambiguous history $\tau' \in T_\tau$. 
The histories of $\tau'$
from round $1$ to $t$ are ambiguous and each of them
is connected to the
history of the same length in~$\tau$. 
Hence, the play $\pi'$ that follows~$\pi$ for the 
first~$\ell$ rounds, 
then proceeds like~$\tau'$ for~$t$ rounds, and then again
follows $\pi$ satisfies the required properties: 
$\pi'$ is connected to $\pi$ and the players
do not attain \MKS
along the rounds $\ell+1, \dots, \ell+t$.
\end{proof}

\section{Recognising recurring mutual and common knowledge}
\label{sec:decidability}

An automaton for recognising
the plays that allow for \RMKS
could easily be designed using the powerset construction 
described by Reif~\cite{Reif84}
for solving one-player games with imperfect information.
This would yield a $\PSPACE$-procedure for deciding whether 
a game allows for \RMKS and thus for \RCKS. 
To obtain a sharper complexity bound, 
we will show that ambiguity witnesses along a play can be
represented efficiently, by a tree of very low width, 
which allows to reduce the complexity to $\NLOGSPACE$.

Let us fix an arbitrary game graph $G$.
A \emph{fork tree} for a play $\pi$ is a prefix-closed set~$T$ of 
histories that contains, for every~level $\ell \ge 0 $, 
\begin{enumerate}
\item[(i)]\label{it:fork1} 
  the history $\pi_\ell$ of $\pi$ in round $\ell$, and 
\item[(ii)]\label{it:fork2}
  at most one history~$\rho_\ell \neq \pi_\ell$ with
  $\rho_\ell \sim^i \pi_\ell$, for some player~$i$.
\end{enumerate}
A fork tree~$T$ is \emph{complete}, if it additionally satisfies, 
for every level $\ell$:
\begin{enumerate}
\item[(iii)]\label{it:fork3}
  if $\pi_\ell$ is ambiguous, then~$T$ contains an ambiguity 
  witness $\rho_\ell$ of $\pi_\ell$.
\end{enumerate}
We can view fork trees
as induced subtrees in the unravelling of~$G$  
that contain~$\pi$ as a central branch and 
have width at most two, that is, at most two elements on each level.
For convenience, we let~$\rho_\ell$ refer to $\pi_\ell$ whenever~$T$ 
contains only~$\pi_\ell$ at level~$\ell$.      
In case $\pi_\ell$ and $\rho_\ell$ end at different states, we say that
the level $\ell$ is a \emph{doubleton}, 
else it is a \emph{singleton}. 

If we consider an  
arbitrary family of ambiguity witnesses 
to the histories of a play,
the subtree induced in the game unravelling
can have unbounded width.
Nevertheless, the following lemma states that every play~$\pi$ 
admits a complete family of 
witnesses~$\rho_\ell$, one for each of its ambiguous 
histories $\pi_\ell$, that forms
a fork tree.

\begin{lemma}\label{lem:complete-fork}
For every play in an arbitrary game there exists  
a complete fork tree.
\end{lemma}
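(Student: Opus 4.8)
The plan is to construct a complete fork tree for a given play $\pi$ inductively, level by level, maintaining at each level at most one witness history $\rho_\ell$ alongside $\pi_\ell$. The key design decision is how to choose the single witness so that completeness can be maintained: at each level where $\pi_\ell$ is ambiguous, I need some $\rho_\ell \sim^i \pi_\ell$ ending at a different state, and I must be able to extend the construction upward to level $\ell+1$ while keeping width two. First I would set $\rho_0 = \pi_0$ (the initial history is trivial and unambiguous by convention). Then, assuming $\rho_\ell$ has been chosen, I would describe how to produce $\rho_{\ell+1}$.

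The main subtlety, and what I expect to be the hard part, is that the witness $\rho_{\ell+1}$ required at level $\ell+1$ need not be a prolongation of the witness $\rho_\ell$ chosen at level $\ell$; a naive greedy choice may force the tree to branch and exceed width two. The way around this is to exploit perfect recall together with the prefix-closure requirement. My plan is to distinguish cases according to whether $\pi_{\ell+1}$ is ambiguous. If $\pi_{\ell+1}$ is unambiguous, I simply extend $\rho_\ell$ by the move that $\pi$ takes (or, if $\rho_\ell = \pi_\ell$, keep the tree a singleton), so that level $\ell+1$ remains a singleton and property (iii) is vacuous. If $\pi_{\ell+1}$ is ambiguous, there is some player~$i$ and some history $\sigma \sim^i \pi_{\ell+1}$ ending at a different state; I want to realise such a $\sigma$ as the prolongation of the already-chosen $\rho_\ell$.

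The delicate point is that $\sigma$'s length-$\ell$ prefix $\sigma_\ell$ satisfies $\sigma_\ell \sim^i \pi_\ell$ by perfect recall, so $\pi_\ell$ is itself ambiguous (if $\sigma_\ell$ ends at a different state) or at least $\sigma_\ell$ is an indistinguishable history at level~$\ell$. To keep the tree of width two I would argue that I may take the witness $\rho_{\ell+1}$ to be a prolongation of $\rho_\ell$ after possibly \emph{re-selecting} $\rho_\ell$ retroactively within the same information class; concretely, the right formulation is a backward induction or a global selection rather than a purely forward greedy one. I anticipate the clean solution is to define the fork tree by choosing, for each $\ell$, a witness $\rho_\ell$ and simultaneously requiring $\rho_\ell$ to be a prefix of $\rho_{\ell+1}$ whenever both are doubletons, which forces the witnesses to lie along a single branch; establishing that such a coherent choice always exists is precisely where perfect recall and the observation that indistinguishability is preserved under common prolongation (stated in the paragraph on connectedness and twins) do the work.

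Concretely, I would argue that the set of histories indistinguishable from some prefix of $\pi$ that end at a state differing from the corresponding state of $\pi$ forms, under the prolongation relation, a structure in which one can select a single infinite branch together with its finite ``restarts'' so that at most one witness survives per level. The cleanest route is to observe that whenever $\rho_\ell$ ends at the same state as $\pi_\ell$ but $\pi_{\ell+1}$ becomes ambiguous, I can extend $\rho_\ell$ by the \emph{same} move a witness uses, using the fact that $\rho_\ell$ and $\pi_\ell$ share an end state and hence admit the same outgoing moves; and whenever $\rho_\ell$ already ends at a different state, perfect recall lets its prolongation continue to witness ambiguity. The main obstacle is showing these two extension rules never conflict, i.e.\ that a single $\rho_\ell$ can serve both to witness ambiguity at level $\ell$ and to be extendable to a witness at level $\ell+1$; I expect to resolve this by the twins/connectedness preservation property already proved, which guarantees that prolonging indistinguishable equal-state histories by a common move keeps them indistinguishable, so the chosen branch never needs to split.
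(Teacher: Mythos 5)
You have correctly isolated the crux: a forward greedy choice of witnesses can get stuck, because the ambiguity witness needed at level $\ell+1$ need not prolong the witness chosen at level $\ell$. However, the step you use to resolve this is false. You claim that ``whenever $\rho_\ell$ already ends at a different state, perfect recall lets its prolongation continue to witness ambiguity.'' Perfect recall gives exactly the converse implication: if $\rho_{\ell+1} \sim^i \pi_{\ell+1}$, then the prefixes satisfy $\rho_\ell \sim^i \pi_\ell$; it says nothing about prolonging indistinguishable histories forward. A witness $\rho_\ell \sim^i \pi_\ell$ ending at a different state may be a dead end: the state it ends in may admit no move producing the observation $\beta^i(v_{\ell+1})$, while some \emph{other} witness of $\pi_\ell$ (possibly for a different player) does prolong to a witness of $\pi_{\ell+1}$. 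In that situation your rule produces no admissible $\rho_{\ell+1}$, and adding the prolongation of the other witness would violate prefix-closure or the width-two requirement. The twins/connectedness preservation property you invoke cannot repair this, since it applies only to histories ending at the \emph{same} state, which is precisely not the case for ambiguity witnesses.

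What is missing is a mechanism for choosing $\rho_\ell$ with foresight. The paper's proof strengthens the completeness condition to a look-ahead invariant: whenever some history witnesses ambiguity of $\pi$ along an entire gap $\igap{\ell}{t}$, a prolongation of the chosen $\rho_\ell$ must also witness that gap. This invariant is maintained by considering, at each step, the set $R$ of prolongations of $\pi_\ell$ or $\rho_\ell$ that witness ambiguity along gaps $\igap{\ell+1}{t}$, and selecting a witness of \emph{maximal length} in $R$; when $R$ is infinite, K\"onig's lemma yields an infinite witnessing play and the construction terminates. Your two extension rules then reappear as the two cases in verifying this invariant (restarting from the central branch when the new witness's prefix ends at the same state as $\pi_\ell$; appealing to the induction hypothesis when it ends at a different state), but without the maximality/K\"onig selection the induction does not close. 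Your gestures toward ``retroactive re-selection'' or ``global selection'' point in the right direction but are not carried out; note also that a naive backward induction fails, since the prefixes of a witness for $\pi_t$ need not themselves be ambiguity witnesses at earlier levels.
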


\begin{proof}

It is convenient to extend the notion of ambiguity 
witness to knowledge gaps in histories.
For a history~$\pi$ and an interval $\igap{\ell}{t}$,
we say that a history~$\pi'$ is an ambiguity
witness 
\emph{along the gap $\igap{\ell}{t}$} 
if $\pi$ and $\pi'$ have length at
least $t$, and~$\pi'_r$ is an ambiguity witness for $\pi_r$, 
for every round $\ell \le r \le t$.
Likewise, for a play~$\pi$, 
we say that a play~$\pi'$ is an ambiguity witness
\emph{from round~$\ell$ onwards} 
if $\pi'_r$ is an ambiguity witness for $\pi_r$ 
for every $r \ge \ell$.

Now, consider an arbitrary game $G$ and a play $\pi$.
By induction on the number of rounds $\ell$, we 
construct a finite or infinite sequence  
of trees $T_\ell$
that satisfy the fork-tree conditions~(i) and~(ii) 
for the first~$\ell$ levels, 
and, in addition, the following
strengthening of the completeness condition (iii) for the last level~$\ell$: 
\begin{enumerate}[(i)]
\item[(iii)${}^*$]
If, for some
$t \ge \ell$, there exists a history in $G$ that is 
an ambiguity witness 
for $\pi$ along the gap $\igap{\ell}{t}$, 
then there also exists a prolongation history 
of~$\rho_\ell$
that is such a witness.
\end{enumerate}
In particular, this implies that whenever the history $\pi_\ell$ is
ambiguous, the level~$\ell$  in $T_{\ell}$ is a doubleton. 

Each tree $T_{\ell+1}$ is finite and 
extends its predecessor $T_{\ell}$ by one
level, except if the sequence ends at some stage $\ell+1$, 
in which case $T_{\ell+1}$ extends $T_\ell$ 
with the (infinite) prolongation of $\pi_\ell$ to $\pi$ 
and with a prolongation play $\rho$ of either~$\pi_\ell$ or~$\rho_\ell$
that is an ambiguity witness for $\pi$ 
from round $\ell$ onwards. 
 
For the base case, we take the tree $T_0$ consisting only 
of the initial history~$v_0$. 
For the induction step, suppose that a tree
$T_\ell$ with $\ell$ levels satisfying the conditions 
(i), (ii), and (iii)${}^*$ has been constructed.
To extend it to $T_{\ell+1}$, we
look at the set $R$ 
of histories $\tau$ that prolong either $\pi_\ell$ or $\rho_\ell$, 
and are ambiguity witnesses for $\pi$ along the gap 
$\igap{\ell + 1}{t}$ up to the length $t$ of $\tau$.
Now we distinguish three cases.
(1) If $R$ is empty, 
we set $\rho_{\ell+1} := \pi_{\ell+1}$,  that is,
$\ell+1$ is a singleton level.
(2) If $R$ is nonempty, but finite, we pick a history $\tau \in R$ 
of maximal length, and add 
$\rho_{\ell+1} := \tau_{\ell+1}$ together with $\pi_{\ell + 1}$ as a
new level to~$T_{\ell}$. 
(3) Finally, if $R$ is infinite, there exists an infinite play $\tau$ 
in $G$ such that all its histories
from round $\ell$ onwards are in $R$. This follows from K\"onig's
Lemma, since the histories in $R$ form an infinite tree that is 
finitely branching 
(indeed, a subtree of the unravelling of $G$).
In this case, we add the histories $\pi_r$ and 
$\rho_r := \tau_r$, for all levels $r > \ell$ and
terminate the sequence with this infinite tree $T_{\ell+1}$.

In any case, $\rho_{\ell+1}$ is a history in $G$ and
is indistinguishable from $\pi_{\ell+1}$ which is also contained on 
level $\ell+1$. 
Condition~(iii)* holds trivially in case (3), we shall verify 
that it is also maintained in case~(1) and (2).

For case (1) assume, towards a contradiction, 
that $R$ is empty and there exists a history $\pi'$ of length
$\ell+1$ that is an ambiguity witness for $\pi_{\ell+1}$.
If $\pi'_\ell$ ends at the same state as $\pi_\ell$, 
then the last action-state pair $(a'_{\ell+1} v'_{\ell+1})$ of $\pi'$ yields
a prolongation $\tau = \pi_\ell a'_{\ell+1} v'_{\ell+1}$ 
that should be included in~$R$, a
contradiction. 
Else, if $\pi'_\ell$ ends at a different state than $\pi_\ell$, 
by perfect recall, $\pi'_\ell$ is an ambiguity witness for 
$\pi_{\ell}$ along the gap $\igap{\ell}{\ell+1}$, 
which, by induction hypothesis, implies that there
also exists such a witness 
that prolongs $\rho_{\ell}$ and is thus contained in~$R$, again
in contradiction to our
assumption that $R=\emptyset$.

For case (2), consider a history $\pi'$ of length $t > \ell$ 
that is an ambiguity witness for $\pi$ 
along the gap $\igap{\ell+1}{t}$.
We claim that there also exists a 
prolongation of $\rho_{\ell+1}$ 
with this property. There are two situations to distinguish:
If $\pi'_\ell$
reaches the same state as~$\pi_\ell$, then the history $\pi''$ that
follows $\pi$ until round $\ell$ and then continues like $\pi'$ 
belongs to $R$, 
and is at most as long as 
the witness $\tau$ chosen to construct $\rho_{\ell+1}$. Hence,
$\tau$ prolongs $\rho_{\ell+1}$ and is an ambiguity
witness for $\pi$ along the gap $\igap{\ell+1}{t}$. 
Otherwise, if $\pi'_\ell$ reaches a different state than $\pi_\ell$, 
then, by perfect recall, we have $\pi'_\ell \sim^i \pi_\ell$, 
for some player~$i$, and hence $\pi'$ is
already an ambiguity witness for $\pi$ along the gap 
$\igap{\ell}{t}$.  
By induction hypothesis, there exists an
ambiguity witness $\pi''$ for $\pi$ 
along the gap $\igap{\ell}{t}$ that prolongs
$\rho_\ell$. Hence, $\pi'' \in R$ and, as  
the history $\tau \in R$ 
chosen 
to construct $\rho_{\ell+1}$ is of maximal 
length,~$\tau$ prolongs
$\rho_{\ell+1}$  and is also an ambiguity
witness for $\pi$ along the gap $[\ell + 1, t]$.

Clearly, each tree $T_{\ell}$ constructed along the induction
satisfies the conditions of a complete fork tree  and 
agrees with its successor $T_{\ell+1}$, up to level $\ell$.
In conclusion, the sequence converges and the 
infinite tree $T$ obtained at the limit is a
complete fork tree for $\pi$.
\end{proof}

Fork trees for a fixed play~$\pi$
can be represented by $\omega$-words. 
We say that a word $\tau \in V(AV)^\omega$ 
is a \emph{fork sequence} for a play~$\pi$
if it starts with $\tau_0 = v_0$ and 
there exists a fork tree~$T$ for~$\pi$ such that
$\tau_\ell$ is 
the last action-state pair of $\rho_\ell$ in $T$, 
for every $\ell>0$.
In the following we construct, for any arbitrary game, 
an $\omega$-word automaton that takes as input an (infinite) 
play~$\pi$ in~$G$ and 
guesses in every run a fork sequence for~$\pi$. 
The automaton is equipped with a co-B\"uchi acceptance condition:
a run is accepting if it visits only 
finitely many non-accepting states. 
For background on the automaton model, 
we refer to the survey~\cite[Chapter 1]{GraedelThoWil02}.

\begin{prop}\label{prop:co-buechi}
For any game with $m$ states, 
the set of plays 
that do not allow for recurring mutual knowledge of
the state is recognisable by a
nondetermistic co-B\"uchi automaton with $m^2$ 
states.
\end{prop}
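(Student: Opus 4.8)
The plan is to build a nondeterministic co-Büchi automaton that reads a play $\pi$ and guesses, level by level, a fork sequence for $\pi$ as constructed in Lemma~\ref{lem:complete-fork}. The key observation is that a fork tree has width at most two: at each level it carries the history $\pi_\ell$ of the input play together with at most one alternative history $\rho_\ell$ with $\rho_\ell \sim^i \pi_\ell$ for some player~$i$. Since by synchronicity and perfect recall indistinguishable histories have the same length, the only information the automaton needs to retain about the pair $(\pi_\ell, \rho_\ell)$ is the pair of their last states. I would therefore take the state space to be $V \times V$: the first component tracks the current state of the input play, and the second component records the guessed last state of the witness branch $\rho_\ell$. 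This gives exactly $m^2$ states.

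First I would specify the transitions. On reading an input letter $a_{\ell+1} v_{\ell+1}$ from $\pi$, the automaton moves its first component to $v_{\ell+1}$ (a deterministic update forced by the input), and it nondeterministically guesses the continuation of the witness branch: from state $(v_\ell, u_\ell)$ it may move to $(v_{\ell+1}, u_{\ell+1})$ provided there is a move $(u_\ell, c, u_{\ell+1}) \in E$ together with an action profile such that the prolongations $\rho_\ell c u_{\ell+1}$ and $\pi_\ell a_{\ell+1} v_{\ell+1}$ remain indistinguishable to some player~$i$, i.e.\ the observation update is consistent with $\sim^i$. By the remark following the definition of twins, maintaining such an $i$-indistinguishable continuation depends only on the local observations of the two last states, so this guard is well-defined at the level of $V \times V$. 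The initial state is $(v_0, v_0)$.

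Next I would fix the acceptance condition. A level~$\ell$ is a \emph{doubleton} precisely when the two last states differ, i.e.\ $v_\ell \neq u_\ell$, which is exactly the condition that $\rho_\ell$ is an ambiguity witness for $\pi_\ell$, so $\pi_\ell$ is ambiguous. I would designate the diagonal states $(v, v)$ as the non-accepting states and the off-diagonal states $(v, u)$ with $v \neq u$ as accepting; the co-Büchi condition then accepts a run if and only if it is eventually confined to off-diagonal states, i.e.\ the guessed witness branch is ambiguous at all but finitely many levels. By Lemma~\ref{lem:complete-fork} and the stronger condition~(iii)$^*$ in its proof, a play that does not allow for \RMKS admits a complete fork tree whose witness branch is eventually an ambiguity witness at every level from some round onwards, yielding an accepting run; conversely, any accepting run exhibits, for all but finitely many $\ell$, an ambiguity witness $\rho_\ell \sim^i \pi_\ell$ ending at a different state, so $\pi_\ell$ is ambiguous and the players fail \MKS there, whence $\pi$ does not allow for \RMKS.

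The main obstacle I expect is the correctness of the forward direction, namely that whenever $\pi$ lacks \RMKS there is a \emph{single} co-Büchi run along which the witness branch stays ambiguous cofinitely, rather than needing a fresh, possibly unrelated witness at each level. This is precisely what Lemma~\ref{lem:complete-fork} secures: the fork sequence provides one coherent branch $(\rho_\ell)_\ell$ of width two, and the reindexing/connectedness machinery (Lemma~\ref{lem:reindexing}, Corollary~\ref{cor:connect-ck-mk}) guarantees that ambiguity can be propagated consistently along it. I would therefore lean on Lemma~\ref{lem:complete-fork} to translate a failure of \RMKS into an accepting run, and verify that my transition guards faithfully encode the ``ambiguity witness along a gap'' relation used there, so that runs of the automaton are in bijection with fork sequences of the input play.
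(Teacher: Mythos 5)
Your construction matches the paper's in almost every respect: the state space $V\times V$ tracking the pair (current input state, guessed witness state), the initial state $(v_0,v_0)$, the co-B\"uchi condition making exactly the diagonal non-final, and the appeal to Lemma~\ref{lem:complete-fork} for the completeness direction. However, your transition relation omits one disjunct that the paper's has, and the omission is fatal. In the paper's automaton, from state $(u,u')$ on input $(a,v)$ the new witness state $v'$ may be reached either by a move from the old witness state $u'$ \emph{or by a move from the old input state $u$}; you allow only the former. The second option is needed because a fork tree is merely a prefix-closed set of width at most two: the witness $\rho_{\ell+1}$ need not prolong $\rho_\ell$ --- it may instead prolong $\pi_\ell$, i.e.\ branch off the main play afresh (this is exactly why the set $R$ in the proof of Lemma~\ref{lem:complete-fork} consists of prolongations of \emph{either} $\pi_\ell$ \emph{or} $\rho_\ell$). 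Consequently a fork sequence is in general not a path in the game graph, your claimed bijection between runs and fork sequences fails, and the fork sequence delivered by Lemma~\ref{lem:complete-fork} need not be a run of your automaton at all.

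This is not merely a presentational slip; your automaton recognises a strictly smaller language, so the proposition as you prove it is false. Consider one player, one action, states $v_0, z_1, z_2, w_1, w_2, d$, observations $\beta(z_j)=\beta(w_j)=\alpha_j$ for $j=1,2$ with fresh observations for $v_0$ and $d$, and moves $v_0 \to z_1$, $v_0 \to w_1$, $z_1 \to z_2$, $z_1 \to w_2$, $z_2 \to z_1$, $z_2 \to w_1$, $w_1 \to d$, $w_2 \to d$, $d \to d$. On the play $\pi = v_0\, z_1 z_2 z_1 z_2 \dots$ (actions omitted), every history $\pi_\ell$ is ambiguous: the history that follows $\pi$ for $\ell-1$ rounds and then moves to the decoy $w_j$ with the matching observation is an ambiguity witness ending at $w_j \neq z_j$. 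So $\pi$ never attains \MKS and must be accepted. But in your automaton any run whose second component enters some $w_j$ is stuck at the next step, because the only successor $d$ has the wrong observation; hence every infinite run keeps its second component equal to the first, stays on the diagonal forever, and is rejected. The paper's automaton accepts $\pi$ via the run $(z_1,w_1)(z_2,w_2)(z_1,w_1)\dots$, which reforks off the main branch at every step using precisely the disjunct you dropped. The fix is simply to add that disjunct to your transition guard; with it, your proof coincides with the paper's.
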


\begin{proof}
Let us fix an arbitrary game graph $G$. 
We construct an $\omega$-word automaton~$A$
with co-B\"uchi acceptance condition 
that recognises the set of histories~$\pi$ in~$G$,
for which there exists a fork tree with only 
finitely many singleton
levels. 
To witness this, 
the automaton guesses non-deterministically a 
fork sequence $\tau$ 
for $\pi$ and accepts if the states at
$\tau_\ell$ and $\pi_\ell$ are different,  
for all but finitely many rounds $\ell$.

The states of the automaton are pairs of game states from $V$: the
first component keeps track of the input play,
the second one is used for guessing the fork sequence $\tau$.
The transition function 
ensures that the two components evolve according to the moves
available in the game graph and that the current input symbol 
yields the same observation as the second component 
to some player~$i$. 

Concretely, the co-B\"uchi automaton $A$ is defined over the input
alphabet  $A \times V$
on the state set $V \times V$ 
with initial state $(v_0, v_0)$
and transitions from state $( u, u')$ 
on input $(a, v)$
to state $(v, v')$ whenever $(u, a, v ) \in E$ 
and $\beta^i( v' ) = \beta^i( v )$, for some player~$i$, 
and either $(u', a', v') \in E$ or $(u, a', v') \in E$, 
for some action $a' \in A$ with $a'^i = a^i$ for this player.
The set of final states is $Q \setminus \{ (v, v)~|~ v \in V\}$;
the automaton accepts an infinite input word 
if all states that occur infinitely often in a run are 
final.

We claim that an input word 
$\pi \in V(AV)^\omega$ is accepted by~$A$ 
if, and only if,~$\pi$ corresponds
to a play in~$G$, and the players never
attain mutual knowledge of the state along $\pi$,
from some round onwards. 

For the \emph{if} direction, consider a play $\pi$ along which the
players never attain mutual
knowledge of the state from some round onwards. 
By Lemma~\ref{lem:complete-fork},  
there exists a complete fork tree $T$ for $\pi$, in which
all but finitely many levels are doubletons.
Let~$\tau$ be the fork sequence
associated to $T$. Then, 
the sequence $((\pi_\ell, \tau_\ell))_{\ell < \omega}$ 
describes a run of~$A$ on input~$\pi$
in which non-final states $(v, v)$ occur only at the finitely many 
positions $\ell$ corresponding to singleton levels in~$T$, thus
witnessing that $\pi$ is accepted.

For the converse, inputs that do not correspond to
histories in $G$ are rejected, by construction of $A$. 
Furthermore, if an input word $\pi$ corresponds
to a play with infinitely many histories $\pi_\ell$ 
at which the players attain mutual knowledge of the
state, then every run of
the automaton visits a non-final state 
whenever such an input prefix 
$\pi_\ell$ is read. 
As this occurs infinitely often, 
the input~$\pi$ is rejected.
\end{proof}

\begin{theorem}\label{thm:ck-decidable}
The problem of whether a game graph allows 
for recurring common knowledge, or equivalently, 
recurring mutual knowledge of the state 
is $\NLOGSPACE$-complete. 
\end{theorem}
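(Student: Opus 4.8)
The plan is to prove the two bounds separately, leaning on the machinery already assembled. For the upper bound I would first invoke Theorem~\ref{thm:ck-char-mutual} to replace the question ``does $G$ allow for \RCKS'' by the equivalent ``does $G$ allow for \RMKS''. By Proposition~\ref{prop:co-buechi}, the plays of $G$ that do \emph{not} allow for \RMKS are exactly the language of a nondeterministic co-B\"uchi automaton $A$ with $m^2$ states, whose states are pairs in $V \times V$ and whose transitions are determined locally by the moves and observations of $G$. Since a game graph allows for \RMKS precisely when every one of its plays does, $G$ allows for \RMKS if, and only if, the language of $A$ is empty. Thus the decision problem reduces to testing emptiness of $A$.

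Next I would argue that this emptiness test lies in \NLOGSPACE. Nonemptiness of a nondeterministic co-B\"uchi automaton is witnessed by a \emph{lasso}: a final state that is reachable from the initial state and that lies on a cycle all of whose states are final (the accepting run then reaches this cycle after finitely many non-final states and loops within final states forever). As the states of $A$ are pairs of game states, each can be stored in $O(\log m)$ space, and each transition can be verified on the fly by consulting $G$; hence the lasso search---a reachability check from the initial state to a guessed final state $p$, followed by a nontrivial reachability check from $p$ back to $p$ within the final states---runs in \NLOGSPACE. The emptiness problem we actually need is the complement of nonemptiness, which is again in \NLOGSPACE by the Immerman--Szelepcs\'enyi theorem, identifying \NLOGSPACE with its complement class. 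This settles membership.

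For hardness I would reduce from the graph reachability problem \textsc{Path}, which is \NLOGSPACE-complete, to the complement of our problem. Given a directed graph $H$ with source $s$ and target $t$ (and, without loss of generality, self-loops on sinks so that every play is infinite), I construct a two-player game graph $G$ in logarithmic space as follows. The \emph{walk region} has one state per vertex of $H$, with Nature's moves copying the edges of $H$ and with observations chosen injectively on this region for both players; consequently every history ending in the walk region has all indistinguishable histories ending at the same vertex, so the players attain \MKS at every such history. From the state~$t$, Nature may enter a \emph{fork gadget}: a state from which it branches to one of two states $a \ne b$ that yield the same observation to one player and each carry a self-loop. The two resulting plays are indistinguishable to that player at every subsequent round yet end at distinct states forever, so from the fork onward the players never attain \MKS. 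By Theorem~\ref{thm:ck-char-mutual} it follows that $t$ is reachable from $s$ in $H$ if, and only if, $G$ admits a play that fails \RMKS, that is, if, and only if, $G$ does \emph{not} allow for \RCKS. Since \NLOGSPACE is closed under complement, this reduction establishes \NLOGSPACE-hardness.

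The conceptual work is carried by the prior results, so the two delicate points are the following. On the upper-bound side, one must confirm that the lasso characterisation of co-B\"uchi nonemptiness remains faithful when the automaton is built on the fly, and that the complementation keeping us within \NLOGSPACE is legitimate. On the lower-bound side, the gadget must be arranged so that \MKS holds \emph{throughout} the walk region (which forces injective observations there) while failing \emph{perpetually} once, and only once, a play has reached~$t$; making the reduction match reachability against the \emph{complement} of the property, rather than the property itself, is the main thing to get right.
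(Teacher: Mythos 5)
Your proposal is correct, and the upper bound is essentially the paper's own argument: reduce \RCKS to \RMKS via Theorem~\ref{thm:ck-char-mutual}, test emptiness of the co-B\"uchi automaton of Proposition~\ref{prop:co-buechi} by guessing a lasso (a reachable cycle lying entirely within final states) with $O(\log m)$ pointers into on-the-fly transitions, and close under complement. You are in fact more careful than the paper on one point: the paper's nondeterministic procedure witnesses \emph{non}-emptiness, i.e.\ decides the complement of the stated problem, and silently relies on $\NLOGSPACE$ being closed under complement; you invoke Immerman--Szelepcs\'enyi explicitly, which is the honest way to finish that step.

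Where you genuinely diverge is the lower bound. The paper reduces from directed-graph acyclicity (Jones): it takes \emph{two disjoint copies} of the input graph, gives all non-terminal nodes the same observation and each terminal node a distinct observation with a self-loop, so that a cycle lets the single player remain forever uncertain about which copy she is in, while acyclicity forces every play into an observable sink where the state is known from then on. You instead reduce from reachability to the \emph{complement} of the property: perfect information throughout the walk region, with ambiguity manufactured only after the target~$t$ by a fork gadget branching to two observationally identical self-looping states. Both reductions are valid logspace reductions and both ultimately lean on $\NLOGSPACE = \textrm{co-}\NLOGSPACE$; the paper's version localises all uncertainty globally (the two copies) and makes the game allow \RCKS exactly on positive instances, whereas yours localises uncertainty at the end of a witnessing path and matches positive instances to \emph{negative} game instances. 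Your gadget arguably makes the mechanism of failure of \MKS more transparent (a single binary fork sustained by self-loops), at the cost of the extra complementation step in the hardness direction, which you correctly identify and justify.
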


\begin{proof}
According to the Characterisation Theorem~\ref{thm:ck-char-mutual},
a game graph~$G$ 
allows for recurring common knowledge of the state if, and only if, 
it allows for recurring mutual knowledge of the state. 
Our problem thus reduces
to checking whether the language recognised by the 
co-B\"uchi automaton~$A$ constructed for $G$ in
Proposition~\ref{prop:co-buechi} is non-empty. 
It is well known that the
non-emptiness test for co-B\"uchi automata 
is in $\NLOGSPACE$ (see, for instance, Vardi and Wolper~\cite{VardiWol94}). 

Concretely, a nondeterministic procedure can 
guess a run of $A$ that 
leads to a cycle included in the set of final states. 
This requires only pointers to three states of the automaton: 
two for the current transition and one for storing a state 
to verify that a cycle is formed. As each state of the automaton is
formed by two states of the game, the overall space requirement 
is logarithmic in the size of the game graph~$G$.
Accordingly,
the problem of determining whether a game graph allows for common
knowledge of state is in $\NLOGSPACE$.

Hardness for $\NLOGSPACE$ follows via a straightforward
reduction from directed graph acyclicity, 
shown to be $\NLOGSPACE$-hard by 
Jones in~\cite{Jones75}: Given a directed graph $G$, we
construct a game graph $G'$ for one player 
by taking two 
disjoint copies of~$G$ and assigning all non-terminal nodes 
with the same observation; each terminal node 
is assigned with a distinct observation and equipped with a self-loop.
Finally, we add a fresh initial state to $G'$, 
with moves to all other states.
Clearly, the game graph $G'$
can be constructed using logarithmic space, and 
the player has recurring (mutual, common) knowledge of the state 
in~$G'$ if, and only if, the directed graph $G$ is acyclic.  

This shows that the problem 
of determining whether a game graph allows for common
knowledge of the state, or equivalently, for mutual knowledge of the
state, is $\NLOGSPACE$-complete.
\end{proof}

\begin{theorem}\label{thm:ck-bounded}
The gap size of any game with $m$ states 
that allows for recurring common
knowledge of the state is bounded by $m^2$.
\end{theorem}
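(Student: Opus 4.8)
The plan is to leverage the co-B\"uchi automaton~$A$ from Proposition~\ref{prop:co-buechi}, whose $m^2$ states are pairs of game states and whose \emph{final} states are exactly the doubletons $(u, u')$ with $u \neq u'$. By the Characterisation Theorem~\ref{thm:ck-char-mutual}, a game that allows for \RCKS also allows for \RMKS, so no play fails \RMKS and hence $L(A) = \emptyset$. The standard non-emptiness criterion for co-B\"uchi automata then tells us that $A$ has no reachable cycle lying entirely within the final states: otherwise the induced lasso\,---\,a reachable prefix followed by the infinitely repeated cycle\,---\,would be an accepting run, contradicting $L(A) = \emptyset$. Consequently, along any run of~$A$ the automaton can remain in final states for at most $|V \times V| = m^2$ consecutive steps, since a longer final segment would repeat a state and thereby exhibit a reachable final cycle.

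First I would translate a long \CKS-knowledge gap into a long run through final states. Suppose, towards a contradiction, that some play contains a \CKS-gap of length exceeding $m^2$, say along the rounds $\ell+1, \dots, \ell+t$ with $t > m^2$. By Corollary~\ref{cor:connect-ck-mk} there is a play~$\pi'$, connected to the original, on which the players fail to attain \MKS throughout these same rounds, so every history~$\pi'_r$ with $\ell < r \le \ell+t$ is ambiguous. Lemma~\ref{lem:complete-fork} supplies a complete fork tree for~$\pi'$; its associated fork sequence induces a run of~$A$ on input~$\pi'$ in which each ambiguous level is a doubleton. In particular, this run passes only through final states during the whole gap, yielding a final segment of length $t > m^2$.

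Combining the two observations gives the contradiction: a segment of length $t > m^2$ through final states forces a repeated automaton state, hence a reachable final cycle, hence $L(A) \neq \emptyset$, contradicting \RMKS. Therefore every \CKS-gap has length at most~$m^2$, and since the gap was arbitrary the gap size of the game is bounded by~$m^2$. The only delicate points I expect are bookkeeping ones: verifying that the fork sequence provided by Lemma~\ref{lem:complete-fork} really yields doubleton (final) states at \emph{all} gap rounds, and checking that the pumped lasso is a genuine infinite play in~$G$\,---\,which holds because each transition of~$A$ mirrors a legal move of the move-complete game graph in both coordinates, so the cycle can be iterated indefinitely while remaining a valid run.
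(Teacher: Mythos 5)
Your proposal is correct and follows essentially the same route as the paper's proof: both reduce a long \CKS-gap to a long \MKS-gap via Corollary~\ref{cor:connect-ck-mk}, use the complete fork tree of Lemma~\ref{lem:complete-fork} to obtain a run of the automaton of Proposition~\ref{prop:co-buechi} that stays in final states throughout the gap, and derive the contradiction by pigeonhole and pumping. The only difference is presentational: you package the pumping step as the standard co-B\"uchi non-emptiness criterion (no reachable final cycle), whereas the paper constructs the pumped play and its accepting run explicitly.
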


\begin{proof}
Consider a game~$G$ with $m$ states that allows for \RCKS.
Towards a contradiction, suppose that in $G$ there exists a
 play with gap size greater than
$m^2$, that is, the players do not attain \CKS along
a sequence of consecutive rounds $r, \dots, r + m^2$, for some $r$. 
Due to Corollary~\ref{cor:connect-ck-mk}, 
there also exists a play $\pi$ in $G$ 
such that the players do not attain 
\MKS in $\pi$ along
these rounds.
Let $T$ be a complete fork tree for $\pi$, according to
Lemma~\ref{lem:complete-fork}, and let $\tau$ be 
the associated fork sequence. 

As~$G$ allows for \RMKS, 
the automaton $A$ constructed in
Proposition~\ref{prop:co-buechi} recognises the empty language, 
in particular it rejects the run on $\pi$
described by $(\pi, \tau)$. 
But $A$ has at most $m^2$ states, so there must 
be a cycle in the transition graph that is visited by this run, 
say from position $\ell \ge r$ to $t \le r+m^2$.
Along the interval $\igap{\ell}{t}$,    
the players do not attain \MKS in $\pi$, therefore 
the corresponding levels
in the fork tree $T$ are doubletons, 
and the states on the cycle visited 
in the run $(\pi, \tau)$ from position $\ell$ to $t$
are final.

Consider now the sequences $\pi'$, and $\tau'$
that follow $\pi$ and $\tau$, respectively, 
until position $t$ and then loop from
$\ell$ to $t$ forever. 
Then, the pair $(\pi', \tau')$ describes
a run in~$A$ that eventually cycles through final states, 
hence, the input $\pi'$ is accepted. 
But this means that $\pi'$ is a play in $G$ that 
does not allow for~\RMKS, 
in contradiction to our assumption that all plays in $G$ 
allow for \RCKS.
\end{proof}

We observe that 
the quadratic bound on the gap size
is tight. 
Consider, for instance, the game graph $G_m$ 
for one player with two observations, black and white, 
depicted in Figure~\ref{fig:quadratic}, 
for an arbitrary number~$m > 1$. 
There is only one bit of uncertainty induced by the choice 
of Nature at the
initial state, where it can either move up, 
into the cycle with $m-1$ white states followed by
a black one, or down, 
to the path consisting of~$m$ white states with selfloops, 
each followed by a black state, except for the last
one which leads to the black state on the cycle. 
Consider the play~$\pi$ where Nature moves into the cycle 
(and stays there forever). 
Along~$\pi$, every nontrivial history up to 
round $m^2$ is indistinguishable from the one where Nature
moves initially down to the path and loops on each white state
precisely $m-1$ times.
For the first $m^2$ rounds in~$\pi$, the player does therefore 
not know the current state, which means that
the gap size of the game is at least~$m^2$. 
On the other hand, notice that 
all histories that are distinguishable from $\pi$
are non-ambiguous, and that from round $m^2 + 1$ onwards, any
history that is indistinguishable from $\pi$ 
leads to the same state as~$\pi$ itself. 
Accordingly the game graph $G_m$ with $3m$ states 
allows for \RCKS  and its gap size is $m^2$.

\begin{figure}[t]
\begin{center}
      \begin{tikzpicture}
        \node[state] (0)  {$~$};
        \node[state] (A1) [above right of=0] {$~$};
        \node[state, draw = none] (A1x) [right of=A1] {$~$};
        \node[state] (A2) [right of=A1x] {$~$};
        \node[state, draw = none] (A2x) [right of=A2] {$~$};
        \node[state, draw=none] (Adots0) [right of=A2x] {}; 
        \node[state, draw=none] (Adots1) [right of=Adots0] {};
        \node[state] (An) [right of=Adots1] {$~$};
        \node[state, draw=none] (Anx) [right of=An] {$~$};
        \node[state] (B1) [below right of=0] {$~$};
        \node[state] (B1x) [right of=B1] {$\gob$};
        \node[state] (B2) [right of=B1x] {$~$};
        \node[state] (B2x) [right of=B2] {$\gob$};
        \node[state, draw=none] (Bdots0) [right of=B2x] {$$};
        \node[state, draw=none] (Bdots1) [right of=Bdots0] {$$};
        \node[state] (Bn) [right of=Bdots1] {$~$};
        \node[state] (z) [above right of=Bn] {$\gob$};
        
        \path (0) edge (A1) edge (B1);
        \path (A1) edge (A2);
        \path (A2) edge (Adots0);
        \path (Adots0) edge[dotted,very thick,shorten >=.3em,shorten <=.3em,-] (Adots1);
       \path (Adots1) edge (An);
        \path (An) edge (z);
        \path (z) edge[bend left=10] (A1);
        \path (B1) edge[draw =none, bend left=5]
        node[above,pos=.7,-] {path of length $2m-1$}  (Bn);
        \path (B1) edge (B1x) edge [loop above] (B1);
        \path (B1x) edge (B2);
        \path (B2) edge (B2x) edge [loop above] (B2);
        \path (B2x) edge (Bdots0);
        \path (Bdots0) edge[dotted,very thick,shorten >=.3em,shorten <=.3em,-] (Bdots1);
           \path (Bdots1) edge (Bn);
           \path (Bn) edge (z) edge [loop above] (Bn);
           \path (A1) edge[draw =none, bend left=5]
           node[above,pos=.7,-] {cycle of length $m$}  (An);

      \end{tikzpicture}
  \caption{A game graph with 3m states and gap size $m^2$}
  \label{fig:quadratic}
  \end{center}
\end{figure}
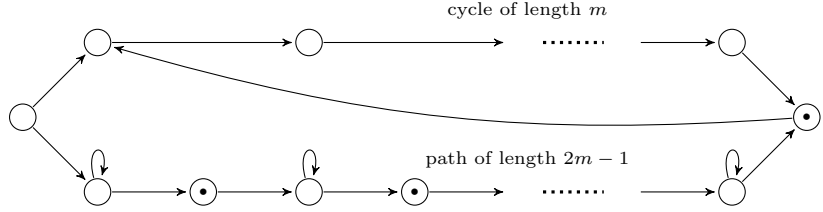

One consequence of Theorem~\ref{thm:ck-bounded} is that 
the knowledge hierarchy 
for any game of size~$m$ that allows for \RCKS collapses to
the level~$|B|^{m^2}$: 
By the Reindexing Lemma~\ref{lem:reindexing}, 
whenever two histories are connected, they are connected via 
a sequence of histories that differ only on the last knowledge gap. 
As the gap size is bounded by~$m^2$, 
there are at most 
$|B|^{m^2}$ different observation histories along a gap. 
Hence, the players attain common knowledge about an event 
$F \subseteq \Omega$ at a history~$\pi$ if, and only if, they attain \MKS of 
order~$|B|^{m^2}$ about~$F$. In particular, this holds for the event 
$\State( \pi )$. 

For any game,  
an automaton that recognises the set of histories 
at which the players attain
\MKS of a fixed order can be constructed  
as in the proof of~Proposition~\ref{prop:co-buechi}.
Accordingly, for any game that allows for \RCKS, 
the set of histories at which the players attain \CKS is regular.
This is in contrast with the general situation of games 
that may not allow for \RCKS where, 
by Corollary~\ref{cor:ck-not-reg},
the set of histories at which the players attain \CKS 
form arbitrary context-sensitive sets.

\section{Strategy synthesis}\label{sec:synthesis}

We are now ready to establish complexity 
bounds for the basic algorithmic questions on 
games with recurring common knowledge of the state. 
Our analysis focuses on the canonical case of 
parity condition. At the end of the section, we 
explain how the results apply to  
observable $\omega$-regular conditions.

\begin{theorem}\label{thm:ck-complexity}
For games that allow for recurring common knowledge of the
state, with parity winning conditions,
\begin{enumerate}[(i)]
\item \label{itm:decision-complexity}
the problem of deciding whether there exists a joint winning strategy 
is $\NEXPTIME$-complete;
\item \label{itm:construction-complexity}
if joint winning strategies exist, 
there also exists a winning profile of
finite-state strategies of at most exponential size, which 
can be synthesised in $2\textrm{-}\EXPTIME$. 
\end{enumerate}
\end{theorem}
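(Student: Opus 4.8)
The plan is to handle both items through the perfect-information \emph{tracking} of the game from~\cite{BKP11}, exploiting the uniform gap bound of Theorem~\ref{thm:ck-bounded}. I would first establish the existence and synthesis of exponential-size strategies, i.e.\ item~(\ref{itm:construction-complexity}), then use it to obtain $\NEXPTIME$ membership in item~(\ref{itm:decision-complexity}) by a guess-and-verify argument, and finally prove $\NEXPTIME$-hardness by reducing from \textsc{Exp-Square Tiling} (Theorem~\ref{expsquare-complexity}).

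For item~(\ref{itm:construction-complexity}), recall from the proof of Theorem~\ref{thm:synthesis-decidable} that the tracking of~$\calG$ is a two-player parity game~$\calG'$ with perfect information whose positions are epistemic models, that the priorities are inherited from~$\calG$ (hence polynomially many), and that every history attaining \CKS corresponds to a singleton model. Since the gap size is at most~$m^2$ by Theorem~\ref{thm:ck-bounded}, and each update enlarges a model by at most a factor of~$m$, between two consecutive \CKS points the models have size at most~$m^{m^2}=2^{O(m^2\log m)}$; folding under homomorphic equivalence therefore yields a finite arena~$\calG'$ of at most doubly-exponential size, on which a memoryless winning strategy can be computed in $2\textrm{-}\EXPTIME$. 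The crucial observation for bounding the \emph{strategy} size is that the current epistemic model is exactly the $\sim$-connected component of the actual history, so any two histories that are $\sim^i$-indistinguishable to player~$i$ lie in the \emph{same} model; consequently the information set of player~$i$ determines the model, and the memoryless $\calG'$-strategy prescribes~$i$'s action as a function of that information set alone. Reindexing at each \CKS point (Lemma~\ref{lem:reindexing}), the information set of player~$i$ within a gap is fixed by the last~$\le m^2$ observations, so it ranges over at most~$|B^i|^{m^2}=2^{O(m^2)}$ values. This gives a finite-state strategy~$s^i$ of exponential size, and the profile~$(s^i)_i$ is the desired joint winning strategy, read off from the solved game in $2\textrm{-}\EXPTIME$.

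For the upper bound in item~(\ref{itm:decision-complexity}), I would avoid building~$\calG'$ explicitly. Instead, using item~(\ref{itm:construction-complexity}), a nondeterministic machine guesses a profile of finite-state strategies of exponential size and verifies that it is winning: it forms the product of the game graph with the guessed automata\,---\,an object of exponential size in which only Nature still moves\,---\,and checks that every infinite play satisfies the parity condition, that is, that Nature cannot violate it. The latter is a one-player parity problem, solvable in time polynomial in the (exponential) product, so the whole procedure runs in $\NEXPTIME$.

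The matching lower bound is the main obstacle, since the reduction must produce a game that \emph{provably} allows for \RCKS. I would reduce from \textsc{Exp-Square Tiling}: given~$(\calD,\ell)$ with~$\ell$ in binary, I build a consensus-style game in the spirit of Proposition~\ref{prop:ck-game} and the domino construction of Proposition~\ref{prop:PSPACE}, in which the grand coalition is forced to commit, cell by cell, to a tiling of the exponential square~$Z(\ell,\ell)$, with observations encoding the chosen dominoes and the winning condition enforcing the horizontal, vertical, and border constraints. A binary counter of polynomially many bits addresses the $2^{|\ell|}$ rows and columns, so that a joint winning strategy exists if, and only if, a correct tiling does. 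The construction must be arranged so that common knowledge of the state recurs\,---\,for instance by inserting observable checkpoint states at which the coalition resynchronises, keeping all knowledge gaps of bounded length\,---\,so that the game genuinely satisfies the hypothesis; verifying this \RCKS property while preserving the exponential encoding is where the bulk of the work lies. Finally, the extension from parity to arbitrary observable $\omega$-regular conditions follows from the priority-normalisation reduction noted after the definition of the winning condition.
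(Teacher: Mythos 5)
The genuine gap is in your \NEXPTIME-hardness argument, and it is not a detail that the sketch could easily be repaired to cover. A polynomial-time reduction produces a game graph with polynomially many states, so a ``binary counter of polynomially many bits'' addressing the $2^{|\ell|}$ rows and columns cannot live in the game state (it has exponentially many values); but if the counter is not in the state, the winning condition has no way to enforce the horizontal, vertical, and border constraints of the particular cell currently being committed. Worse, forcing the coalition to commit cell by cell makes plays exponentially long, and the observable checkpoints you insert to keep knowledge gaps bounded make the committed dominoes essentially common knowledge\,---\,at which point the game collapses towards perfect information, and there is no reason to expect \NEXPTIME-hardness to survive, since the hardness of this problem comes precisely from private information. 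The paper's reduction from \textsc{Exp-Square Tiling} avoids all of these obstacles by not building the tiling into the play at all: Nature deals to each player \emph{privately} a pair of coordinates, bit by bit over $\ell$ rounds, with the promise that the second player's cell equals the first player's cell or is its right or upper neighbour (a relation checkable with constant memory while the bits are dealt); then each player answers with a single domino, and the play reaches an observable sink, winning if and only if the two dominoes are compatible with the relative position. The game has finite horizon, hence trivially allows for \RCKS, and a joint winning strategy is literally a map from cells to dominoes, i.e., a correct tiling: the exponential object sits in the strategies, not in the arena (this follows \cite{BernsteinetAl02} and \cite{AzharPetRei01}). Your sketch openly defers exactly this issue (``where the bulk of the work lies''), and that deferred work is the actual content of the lower bound.

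Your treatment of the upper bounds, by contrast, is a legitimate alternative to the paper's. The paper deliberately avoids the tracking of \cite{BKP11} for complexity purposes (it is doubly exponential) and instead builds a singly exponential \emph{abridged game}: a perfect-information parity game whose positions are pairs $(v,c)$ of states and priorities, and whose moves are the outcome sets of joint strategies on the trees of the unravelling up to common knowledge; \NEXPTIME membership comes from guessing this game together with a positional winning strategy, and item~(ii) is read off the positional strategy. Your route\,---\,prove~(ii) first via the folded tracking, using Theorem~\ref{thm:ck-bounded} to cap model growth at $2^{\mathcal{O}(m^2\log m)}$ per gap and the fact that \CKS models fold to singletons, then obtain \NEXPTIME by guessing an exponential-size strategy profile and verifying it on the product of the arena with the guessed automata, a one-player parity check\,---\,is sound and arguably more modular. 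One imprecision should be fixed: it is not true that any two $\sim^i$-indistinguishable histories lie in the same epistemic model, since histories that deviate from the joint strategy belong to no model at all. What holds, and what your strategy-size bound actually needs, is that the commonly known strategy together with player~$i$'s own observation history since the last \CKS point determines the current model; this is also how the paper implements its exponential-size strategy automata in Proposition~\ref{prop:estimate}.
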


The lower bound
for the decision problem~(\ref{itm:decision-complexity})
follows from $\NEXPTIME$-hardness 
of the corresponding problem 
for two-player reachability or safety games of finite horizon.
These are games where the underlying graph is acyclic except for having 
self-loops at observable sinks\,---\,hence, the simplest examples of
games that allow for \CKS.  
The original proof, due to 
Azhar, Peterson, and Reif~\cite[Section 5]{AzharPetRei01}, 
is by reduction from the time-bounded halting problem 
via a variant of QBF with dependency quantifiers.

For the sake of completeness, we outline a direct reduction from the
\textsc{Exp-Square Tiling} problem to synthesis problem for safety
games with finite horizon, 
similar to the one described by
Bernstein, Zilberstein, and Immerman 
in~\cite{BernsteinetAl02} for
decentralised planning in partially
observable Markov decision processes of finite horizon.
 
Given a domino system~$\calD$ and the logarithm~$\ell$ of the 
square size, 
we construct a two-player game with the following scenario:
First, Nature sends 
to each player~$i$ privately a pair 
$(x^i,y^i) \in Z( 2^\ell, 2^\ell)$ 
of coordinates in binary encoding over $\ell$ bits, such that either
\begin{enumerate}[(i)]
\item\label{itm:equal} $(x^2, y^2) = (x^1, y^1)$, or 
\item\label{itm:incx} $(x^2, y^2) = (x^1 + 1, y^1)$, or 
\item\label{itm:incy} $(x^2, y^2) = (x^1, y^1 + 1)$.
\end{enumerate}
Then, each player~$i$ produces a domino~$d^i$.  
The play is winning if the produced dominoes
are consistent with the relative position of the recieved 
coordinates, that is, if $d^1 = d^2$ in case~(\ref{itm:equal})~, 
$(d^1, d^2) \in E_H$ in case~(\ref{itm:incx}), and
~$(d^1, d^2) \in E_V$ in case~(\ref{itm:incy}). This can be formulated
either as an observable reachability or safety condition: reach~$\oplus$ or
avoid $\ominus$.
If a tiling of the exponential square exists, then the 
strategy to produce the domino placed at
the received coordinates guarantees a joint win.
Conversely, any winning strategy can be turned into
a correct tiling of the exponential square.  

The game is of finite horizon: After~$\ell + 1$ rounds ($\ell$ for observing
the coordinates and one for producing the domino), 
each play reaches either the safe sink~$\oplus$ 
or the unsafe sink~$\ominus$, which are observable to both players. 
Hence, the game trivially allows for \RCKS.
The construction can be done in time $O(\ell + |\calD|)$.

In summary, we have a linear-time reduction from 
\textsc{Exp-Square Tiling}, 
to the problem of deciding 
whether there exists a joint winning strategy in a game that allows for \RCKS.
According to~Theorem~\ref{thm:corridor-complexity}, this shows that 
our problem is \NEXPTIME-hard.
 
\medskip
For the upper bound and the strategy-construction procedure, 
it would be inconvenient to rely on 
the tracking construction used in the decidability proof of 
Theorem~\ref{thm:synthesis-decidable},
as the number of epistemic structures (over histories of quadratic 
length that are relevant here) 
is already doubly exponential in the size of the game graph.
Instead, 
we introduce an auxiliary representation of the game
which retains the histories at which players attain \CKS and 
is only simply exponential in the size  of the 
input game graph.

\subsection{The abridged game}

For the proof in the reminder of the section, 
let us fix a coordination 
game $\calG = (G, \gamma)$ for $n$ players over a  
game graph~$G$ that allows for \RCKS, with a parity
condition over a set of priorities $C = \{1, \dots, |C| \}$ 
described by the colouring function $\gamma: V \to C$. 
Recall that a play is winning under the parity condition if the least
priority seen infinitely often along a play is even. 
The assumption that the players can observe the
the prioriy coloring is inessential for parity conditions; 
it is only needed for $\omega$-regular conditions that will be
discussed at the end of the section.

The \emph{abridged} game $\widehat{\calG}$ of $\calG$
is a game with perfect information 
for one player 
against Nature. Intuitively, $\widehat{\calG}$ is obtained by
contracting knowledge gaps and recording
only the most significant priority seen 
between two consecutive histories where the players attain 
\CKS.

Concretely, the states of the abridged game graph $\widehat{G}$ 
are pairs $(v, c)$ 
of states $v \in V$ and priorities $c \in C$; for convenience, we also
include a sink~$\ominus$. We shall
refer to the states of $\widehat{G}$
as \emph{positions}, 
to avoid confusion with the ones of $\calG$.
The initial position $(v_0, |C|)$  
corresponds to the initial state of $\calG$
labelled with the least significant priority.
The set of actions consists of all
nonempty subsets $U \subseteq V \times C$ of positions. 
The player has perfect information, 
so the observation function is the identity on $V \times C$.

To define the moves,
we look at the
unravelling $\tck{G}$ up to common knowledge 
of the game graph $G$, as constructed in the proof of 
Proposition~\ref{prop:ck-bounded}. 
Recall that $\tck{G}$ is built from a 
disjoint collection of trees~$(T_v)_{v \in V}$, which are
then connected by identifying 
all leaves with the corresponding roots.
For every state $v \in V$ and 
any joint strategy $t$ over the tree component $T_v$ of
$\tck{G}$, we define the set $\outcome_v( t )$ 
of pairs $(u, d) \in V \times C$, 
for which
there exists a history $\tau$ in~$T_v$ that follows~$t$, 
such that $\tau$ ends at $u$, and the most
significant priority that occurs along $\tau$ is~$d$.
Now, the set of available moves is defined as follows.
For an action $U \subseteq V \times C$ 
there are moves from a position $(v, c)$ 
to every position $(u, d) \in U$ if
there exists a joint strategy $t$ in  $T_v$ 
with $\outcome_v(t) = U$. 
Otherwise, the action leads to the~$\ominus$-sink.
Notice that the moves depend only on the first 
component of the position, that is, on the state and 
not on the priority.

At last, we define a parity condition on $\calG$,
by assigning to every position $(v, c) \in V \times C$ 
the priority~$c$.

The plays of~$G$ and $\widehat{G}$ are related via their summaries. 
Intuitively, this is 
the sequence  
of states reached when
the players attain \CKS in a play,   
together with the most significant priority 
seen along the last knowledge gap.
More precisely, 
for a play $\pi = v_0 \, a_1 v_1\, \dots$ in $G$, 
we look at the subsequence 
of rounds $t_0, t_1, t_2, \dots$
such that, for all $\ell \ge 0$, the players attain 
\CKS at round~$t_\ell$ in $\pi$,
but not at any round~$t$ in between 
$t_{\ell} < t < t_{\ell+1}$.
Next, we associate to each index $\ell > 0$, 
the most significant colour 
that occurred in the gap between $t_\ell$ and $t_{\ell+1}$, setting
$c_{\ell+1} := \min \{ \gamma( v_t ) ~:~ t_{\ell} < t \le t_{\ell+1}\}$.
Now, the \emph{summary} of $\pi$ is the
sequence $[\pi] := v_0, (v_{t_1}, c_1), (v_{t_2}, c_2) \dots$
Notice that for every play $\pi$ in~$G$, 
the summary~$[\pi]$ corresponds to a sequence of states
in~$\widehat{G}$, which is infinite, since  
we assume that~$\pi$ allows for \RCKS.

The notion of summary is defined analogously for histories, 
and it also applies to plays $\hat{\pi}$ in $\widehat{G}$. 
Indeed, $[\hat{\pi}]$ is 
obtained simply by dropping the actions in $\hat{\pi}$.
We say that a play $\pi$ in $G$ \emph{matches} 
a play $\hat{\pi}$ in $\widehat{G}$
if they have the same summary: $[\hat{\pi}] = [\pi]$.

The winning or losing status is preserved among matching
plays. 

\begin{lemma}\label{lem:abridged-play}
If a play $\pi$ of $\calG$ matches a play 
$\hat{\pi}$ of $\hat{\calG}$, then $\pi$ 
is winning if, and only if, $\hat{\pi}$ is winning.
\end{lemma}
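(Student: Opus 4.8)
The plan is to reduce the statement to a purely combinatorial claim comparing two priority sequences, and then to settle that claim by a block-minimum argument that exploits the finiteness of knowledge gaps granted by \RCKS. First I would make the two winning conditions explicit. The play $\pi = v_0\,a_1 v_1\,\dots$ is winning in $\calG$ exactly when the least priority occurring infinitely often in the sequence $(\gamma(v_\ell))_{\ell \ge 0}$ is even. Since $\pi$ matches $\hat\pi$, we have $[\hat\pi] = [\pi] = v_0,\,(v_{t_1}, c_1),\,(v_{t_2}, c_2),\,\dots$, and the parity value of $\hat\pi$ is determined by the priorities of the positions it visits, namely the sequence $|C|,\,c_1,\,c_2,\,\dots$; thus $\hat\pi$ is winning if, and only if, the least value occurring infinitely often in $(c_\ell)_{\ell \ge 1}$ is even, the single initial term $|C|$ being irrelevant to the infinitely-often criterion. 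Consequently it suffices to prove the key claim that \emph{the least priority occurring infinitely often in $(\gamma(v_\ell))_{\ell \ge 0}$ equals the least value occurring infinitely often in $(c_\ell)_{\ell \ge 1}$.}

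For the claim I would set up the block structure induced by the \CKS rounds. Let $t_0 < t_1 < t_2 < \dots$ enumerate the rounds at which the players attain \CKS along $\pi$; since $\calG$ allows for \RCKS this sequence is infinite, so each block $B_\ell := \{\, t : t_{\ell-1} < t \le t_\ell \,\}$ is finite, and by definition $c_\ell = \min\{\, \gamma(v_t) : t \in B_\ell \,\}$ is the block minimum. Write $p$ for the least priority occurring infinitely often in $(\gamma(v_\ell))_\ell$. On the one hand, only finitely many rounds carry a priority strictly smaller than $p$, so there is a round $N$ beyond which every $\gamma(v_t) \ge p$; hence $c_\ell \ge p$ for every block $B_\ell$ lying entirely beyond $N$, and in particular all but finitely many terms of $(c_\ell)$ are $\ge p$. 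On the other hand, priority $p$ is attained at infinitely many rounds, and since each block is finite these rounds cannot be confined to finitely many blocks; therefore infinitely many blocks beyond $N$ contain a round with priority $p$, and each such block has $c_\ell = p$. Combining the two observations, the least value occurring infinitely often in $(c_\ell)_\ell$ is exactly $p$, which proves the claim and hence the lemma.

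The delicate point—and the place where the hypothesis is indispensable—is the interaction between ``infinitely often'' in the fine-grained sequence $(\gamma(v_\ell))$ and in the coarse sequence of block minima $(c_\ell)$. The argument goes through only because every block is finite: a priority appearing at infinitely many rounds must then be spread over infinitely many blocks, so it survives as a block minimum. This is precisely the content of \RCKS, which guarantees that every knowledge gap, and thus every block, is finite; Theorem~\ref{thm:ck-bounded} even supplies a uniform bound $m^2$ on block length, although finiteness of each individual gap already suffices for the present lemma.
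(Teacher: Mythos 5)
Your proof is correct and follows essentially the same route as the paper's: both compare the least priority occurring infinitely often in $\pi$ with the least value occurring infinitely often among the recorded gap-minima in the summary $[\pi]=[\hat\pi]$, using finiteness of knowledge gaps to argue that a priority recurring infinitely often in $\pi$ must be spread over infinitely many gaps and hence survives as a block minimum. Your write-up merely makes explicit (via the threshold round $N$) a step the paper leaves implicit, namely that beyond some point no priority smaller than $p$ occurs, so blocks containing $p$ actually record $p$ as their minimum.
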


\begin{proof}
Let~$c$
be the least priority that appears infinitely often
in~$\pi$. As each knowledge gap in $\pi$ is finite, $c$
appears in infinitely many knowledge gaps in~$\pi$, 
hence it is 
recorded infinitely often in the summary~$[\pi]$. 
Conversely, all priorities that appear
infinitely often in the summary $[\pi]$, also appear infinitely often
in~$\pi$, so $c$ is minimal among them.
In conclusion, the least priority appearing infinitely often 
in the summaries $[\pi] = [\hat{\pi}]$ 
is the same as in the plays $\pi$ and $\hat{\pi}$.  
\end{proof}

\subsection{Reduction to parity games with perfect information}

To use results from the standard literature on parity games, 
it is convenient to view
the abridged game~$\widehat{\calG}$ formally as a 
turn-based game between two players, \emph{Coordinator} and
\emph{Nature}. In contrast to before,
we shall hence regard Nature as 
an actual player with 
proper positions, moves, and strategies.

Towards this, we view the game graph $\widehat{G}$ as a 
bipartite graph, with one partition
$V \times C$ controlled by Coordinator, and a second one formed 
by the nonempty subsets of $V \times C$, controlled by Nature.
The initial position $(v_0, |C|)$ is unchanged.
Coordinator can move from every position
$(v, c) \in V \times C$ 
to a position $U \subseteq V \times C$, 
if $U = \outcome_v(t)$  for some joint strategy $t$ on 
$T_v$, whereas Nature can move 
from every position $U \subseteq V \times C$ 
to any element $(u, d) \in U$. 
The new positions from
$U \subseteq V \times C$ receive the 
least significant priority $|C|$, whereas 
position $(v, c) \in V \times C$ 
have priority~$c$, as before. 

A fundamental result about parity games is that they enjoy 
\emph{positional determinacy}.
A strategy is positional if the choice prescribed 
at a history $\pi$ depends only on the last position in~$\pi$. 
The following theorem was first proved by Emerson and 
Jutla~\cite{EmersonJ91}, a comprehensive 
exposition can be found in the survey of 
Zielonka in~\cite{Zielonka98}.
\begin{theorem}[\cite{EmersonJ91}]
For every parity game with perfect information, one of the two players
has a positional winning strategy.
\end{theorem}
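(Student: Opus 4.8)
The plan is to prove the stronger statement that the position set splits into two \emph{winning regions} $W_0$ and $W_1$ such that, for each player, a single positional strategy wins from every position in that player's region; positional determinacy from the designated initial position is then immediate, since that position lies in one of the two regions. I would follow the inductive attractor-based argument (as in Zielonka's exposition), arguing by induction on the number of positions, and assuming as usual that every position has an outgoing move so that plays are infinite. The base case, in which only one priority occurs, is immediate: the least priority seen infinitely often equals that single value on every play, so one player wins from everywhere and any move-respecting positional strategy is winning.

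For the induction step I would single out the least priority $p$ occurring in the game and let $\sigma$ denote the player who wins a play whose least priority seen infinitely often is $p$ (Even if $p$ is even, Odd otherwise). Writing $V_p$ for the positions coloured $p$, the central tool is the \emph{$\sigma$-attractor} $A=\mathrm{Attr}_\sigma(V_p)$, the set of positions from which $\sigma$ can force reaching $V_p$; it is computed as a monotone fixed point and carries a \emph{positional} attractor strategy. Removing it leaves a subgame $G'=G\setminus A$ (the complement of a $\sigma$-attractor is closed under play, so every position retains a successor), to which the induction hypothesis applies, yielding regions $W'_0,W'_1$ with positional strategies. I would then split on whether the opponent's region $W'_{\bar\sigma}$ in $G'$ is empty.

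If $W'_{\bar\sigma}=\emptyset$, then $\sigma$ wins all of $G$: combine the strategy winning $G'$ with the attractor strategy on $A$. On any resulting play, either it eventually stays in $G'$ and is won by the induction hypothesis, or it enters $A$ infinitely often, in which case the attractor strategy drives it to $V_p$ each time, so $p$ is seen infinitely often; being the global minimum, $p$ is then the least priority seen infinitely often, a win for $\sigma$. If instead $W'_{\bar\sigma}\neq\emptyset$, I would form $B=\mathrm{Attr}_{\bar\sigma}(W'_{\bar\sigma})$ in the full game, recurse on the strictly smaller subgame $G\setminus B$ to obtain regions $W''_0,W''_1$, and set $W_{\bar\sigma}=W''_{\bar\sigma}\cup B$ and $W_\sigma=W''_\sigma$. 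The winning strategies are assembled region by region: the attractor strategies on $A$ and $B$ and the inductive strategies on the subgames are all positional and defined on disjoint position sets, so their union is positional.

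The step I expect to be the main obstacle is verifying that $\bar\sigma$ genuinely wins from $B$, which hinges on a trap-closure fact: a $\sigma$-position lying \emph{outside} $A=\mathrm{Attr}_\sigma(V_p)$ can have no successor inside $A$, as otherwise it would itself have been attracted. Consequently every $\sigma$-position of $W'_{\bar\sigma}$ has \emph{all} its successors in the full game inside $W'_{\bar\sigma}$, so once $\bar\sigma$'s attractor strategy has brought the play into $W'_{\bar\sigma}$, player $\sigma$ cannot escape it; the play then stays inside a subgame of $G'$ that omits $V_p$ entirely, where $\bar\sigma$'s inductive strategy wins, and the finite attraction prefix does not affect the infinitary condition. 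The remaining bookkeeping---that the constructed regions partition $V$, and that $\sigma$ cannot be forced out of $W''_\sigma$ while $\bar\sigma$ cannot be forced out of $W''_{\bar\sigma}\cup B$ except into territory already conceded---is routine once this closure property is in hand.
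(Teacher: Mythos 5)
The paper gives no proof of this theorem: it is quoted as a known result of Emerson and Jutla, with Zielonka's survey cited for a comprehensive exposition, and that exposition is precisely the attractor-based inductive argument you present. Your proposal is a correct rendering of that standard proof---including the decisive trap-closure property that positions of player $\sigma$ outside $\mathrm{Attr}_\sigma(V_p)$ have no successor inside it---so it matches the argument the paper relies on.
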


For our setting, positional determinacy means that
in the abridged game~$\widehat{\calG}$, either Coordinator or Nature 
has a winning strategy defined on the set of positions.
This yields witnesses of manageable size for determining 
which player wins the abridged game.

In the following, we argue that positional strategies 
for the abridged game~$\widehat{\calG}$ 
can be translated effectively into strategies on $\calG$, such
that the resulting plays match
in the sense of Lemma~\ref{lem:abridged-play}.

\begin{prop}\label{transfer-abridged}
Let $\calG$ be a game that allows for \RCKS, 
and let $\widehat{G}$ be the abridged game. 
\begin{enumerate}[(i)]
\item For every positional Coordinator strategy $\hat{s}$
in~$\widehat{\calG}$, we can
effectively construct a strategy profile $s$ for the 
grand coalition in~$\calG$ such that, for every
play~$\pi$ that follows~$s$, there exists 
a matching play~$\hat{\pi}$ that follows~$\hat{s}$.
\item For every positional Nature strategy $\hat{r}$
in~$\widehat{\calG}$, 
and every strategy profile~$s$ for the 
coalition in~$\calG$, there exists a play $\pi$ in~$\calG$ 
that follows~$s$ with a matching play~$\hat{\pi}$ 
that follows $\hat{r}$. 
\end{enumerate}
\end{prop}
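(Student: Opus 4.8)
The plan is to exploit the precise correspondence between moves in $\widehat{G}$ and joint strategies on the tree components $T_v$ of $\tck{G}$. The key observation is that each move of Coordinator from a position $(v,c)$ to $U = \outcome_v(t)$ is \emph{certified} by a concrete joint strategy $t$ on the tree $T_v$, and conversely every history in $T_v$ that follows $t$ corresponds, via the bijection $h$ from the proof of Proposition~\ref{prop:ck-bounded}, to a knowledge gap in $G$ terminated by a history at which the players attain \CKS. I would first fix, for every position $(v,c)$ and every action $U$ reachable from it under $\hat s$ (resp.\ relevant to $\hat r$), a witnessing joint strategy $t_{v}$ on $T_v$ with $\outcome_v(t_v) = U$. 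These tree-strategies are the local building blocks from which the global strategy profile on $\calG$ will be assembled.

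\textbf{Part (i): from Coordinator to the coalition.}
For a positional $\hat s$, I would define the coalition profile $s$ on $\calG$ by \emph{concatenating} the local tree-strategies along the decomposition of every play into knowledge gaps. Concretely, given a history $\pi$ in $\calG$, decompose it at the rounds where \CKS is attained; by the Reindexing Lemma~\ref{lem:reindexing}, the final segment sits inside a single copy of some tree $T_v$, where $v$ is the last commonly-known state. I then let $s$ play according to the tree-strategy $t_v$ that witnesses the move $\hat s$ prescribes at the corresponding abridged position $(v,c)$. Since the players attain \CKS at the start of each gap, the state $v$ \emph{and} the abridged position are common knowledge there, so this prescription is consistent with each player's information and genuinely defines a distributed strategy profile. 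The verification then shows: any play $\pi$ following $s$ decomposes into gaps, each gap lands (by construction of $\outcome_v$) on some $(u,d) \in U = \outcome_v(t_v)$, and the sequence of these landing pairs is exactly a play $\hat\pi$ following $\hat s$ in $\widehat{\calG}$ with $[\pi] = [\hat\pi]$.

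\textbf{Part (ii): from Nature to a matching play.}
Here the roles reverse. Given a positional Nature strategy $\hat r$ and an \emph{arbitrary} coalition profile $s$ on $\calG$, I must produce a single play $\pi$ following $s$ whose summary follows $\hat r$. I would build $\pi$ gap by gap: at a commonly-known state $v$, the profile $s$ restricted to the current copy of $T_v$ \emph{is} a joint strategy on $T_v$, hence realises some outcome set $U = \outcome_v(s\!\cut{T_v})$, which is a legal Coordinator move to $U$ in $\widehat{G}$; Nature's strategy $\hat r$ then selects a target pair $(u,d) \in U$. By definition of $\outcome_v$, some history $\tau$ in $T_v$ following $s$ reaches $u$ with most-significant priority $d$, and I let $\pi$ proceed along the $G$-image of $\tau$. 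Iterating yields an infinite play $\pi$ following $s$ whose summary is precisely the play $\hat\pi$ produced by $\hat r$ against these Coordinator moves.

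\textbf{Main obstacle.}
The delicate point is \emph{well-definedness of $s$ as a distributed strategy} in part (i): each player must be able to compute his action from his own observation history alone, yet the prescription refers to the abridged position $(v,c)$ and the tree-local strategy $t_v$. This is sound only because, at the start of each gap, $\State$ is common knowledge, so $v$ is known to every player; and because the winning/abridged status and the recorded priority are determined by commonly available data. I would lean on Lemma~\ref{lem:reindexing} to argue that knowledge \emph{within} a gap in $G$ coincides with knowledge in the freshly-restarted game $G,v$, so that the local tree-strategy $t_v$ — which is a genuine distributed object on $T_v$ — transfers without violating information constraints. Checking that the concatenation respects each $\sim^i$ across gap boundaries, using that gap endpoints are \CKS histories, is the technical heart of the argument.
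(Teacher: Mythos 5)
Your part (ii) is essentially the paper's own argument: an induction over knowledge gaps in which the restriction of $s$ to the current tree component $T_v$ determines a Coordinator move $U$, Nature's positional $\hat{r}$ selects a pair $(u,d) \in U$, and the definition of $\outcome_v$ supplies a witnessing segment $\tau$ following $s$ along which $\pi$ is prolonged. That half is sound.

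Part (i), however, has a genuine gap, and it sits exactly at the point you call the technical heart. You justify distributedness of $s$ by claiming that at the start of each gap ``the state $v$ \emph{and} the abridged position are common knowledge.'' The state $v$ is indeed common knowledge, but the abridged position is $(v,c)$, where $c$ is the most significant priority recorded along the preceding gap, and this priority is in general not known to the players: in this section the priority colouring is explicitly \emph{not} assumed observable (the paper remarks that observability is inessential for parity conditions, precisely because of how this proof is arranged). Two histories indistinguishable to player~$i$ can traverse different states inside a gap and hence record different priorities $c \neq c'$; if $\hat{s}(v,c) \neq \hat{s}(v,c')$, your prescription selects different tree strategies at indistinguishable histories, violating the requirement $s^i(\pi) = s^i(\pi')$ for $\pi \sim^i \pi'$. (Fixing one arbitrary $c$ per state instead does yield a valid distributed strategy, but then the resulting summary need not follow $\hat{s}$ when $\hat{s}$ genuinely depends on the priority, so the matching claim fails.) The paper closes this hole with a preliminary normalisation: since all positions $(v,c)$ have the same set of successors, one may route them through an auxiliary position $z_v$ and invoke positional determinacy to assume, without loss of generality for the intended application to winning strategies, that $\hat{s}(v,c)$ is independent of $c$. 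With that uniformisation in place, your construction\,---\,one tree strategy $t_v$ per state, concatenated across gaps via Lemma~\ref{lem:reindexing}\,---\,coincides with the paper's, and the rest of your verification goes through.
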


\begin{proof}

(i) Let $\hat{s} : V \times C \to 2^{V \times C}$ be a positional
strategy for Coordinator in the abridged game $\widehat{\calG}$.
We construct a strategy $s$ for the unravelling $\tck{\calG}$ of 
$\calG$ up to
common knowledge. 
As the two game graphs have the same unravelling, $s$ is also a
strategy for $\cal{G}$.

We can assume that 
the strategy~$\hat{s}$ prescribes for every state~$v \in V$ 
the same choice at all positions $( v, c )$  
independently of the priority.
This is without loss of generality:
Recall that all positions $(v, c)$ in $\widehat{\calG}$ have the same
set of successors~$U$.
If we add a fresh position~$z_v$, of least significant priority, 
from which Coordinator can move to every position in~$U$, 
and replace the outgoing moves from each position 
$(v, c)$ with a move to $z_v$,
the game remains essentially unchanged.  
Whenever Coordinator has a winning strategy for the original game, 
he has one for the modified game. Then, due to positional determinacy,
he also has a positional winning strategy
and its choice at the new position $z_v$ can be transferred 
as a uniform choice to all positions $(v, c)$
in the original game, still yielding a winning strategy.

To transfer the given strategy from~$\widehat{\calG}$ to 
$\tck{\calG}$, we consider for each state $v \in V$ 
the tree component $T_v$ of 
$\tck{\calG}$ separately.
For an arbitrarily chosen colour~$c$, 
we look at the set
$U := \hat{s}( v, c )$ and pick a joint strategy 
$t_v$ on $T_v$ with $\outcome_v( t_v ) = U$.
Now, for every history $\pi$ that ends in $T_v$, we
take the suffix~$\pi_v$ 
contained in $T_v$, that is, we forget the prefix history 
up to entering the tree, and set 
$s( \pi ) = t_v( \pi_v )$. This is a valid strategy profile,
due to the reindexing argument for 
private knowledge undelying Lemma~\ref{lem:reindexing}. 

With~$s$ constructed this way, every play $\pi$ in $G$ 
that follows~$s$
has the same summary  $[\pi] = v_0 \, (v_1, c_1) (v_2, c_2) \dots$
as the play 
$\hat{\pi} = v_0 \, a_1 (v_1, c_1)\, a_2 (v_2, c_2) \dots$ in
$\widehat{G}$  
with actions $a_\ell = \hat{s} (v_\ell, c_\ell)$. 
Hence, $\hat{\pi}$ follows $\hat{s}$ and matches $\pi$,
as required.

\medskip
\noindent
(ii) For the converse, let $\hat{r}: 2^{V \times C} \to V \times C$ 
be a positional strategy for
Nature in $\widehat{G}$ and let $s$ be an arbitrary strategy for
Coordinator in $G$. 
We construct a pair of plays $\pi$ in $G$, and 
$\hat{\pi}$ in $\widehat{\calG}$ with the desired properties.

The construction is by 
induction on the number of knowledge gaps in $\pi$: 
For every $\ell$, we construct a history 
$\pi_\ell$ in $G$ 
with $\ell$ knowledge gaps that follows~$s$ and ends at some
state~$v$, where the players attain \CKS. 
At the same time, 
we construct a matching history $\hat{\pi}_\ell$ 
that follows $\hat{r}$ and ends at 
a position $(v, c)$ in~$\widehat{G}$, associated with the same
state~$v$. 

For the base case, 
both histories $\pi_0$ and $\hat{\pi}_0$ are set to $v_0$. 
For the induction step, suppose that the two histories
$\pi_\ell$ and $\hat{\pi}_\ell$ satisfy the hypothesis, 
and that they end at state $v$ 
and position $(v, c)$, respectively.
We construct a prolongation $\pi_{\ell+1}$ 
that follows $s$ over the $\ell+1$st knowledge gap and 
matches a one-round 
prolongation $\hat{\pi}_{\ell+1}$ of $\hat{\pi}_{\ell}$.
Towards this, we
consider the strategy $t_v$ 
induced by $s$ in the set of histories $\pi_\ell T_v$, that is, the
prolongations of $\pi_\ell$ into the tree component $T_v$ of 
$\tck{G}$. For $U := \outcome_{v}( t_v )$ and
$(u, d) := \hat{r}( U )$, 
there exists a history $\tau$ in $T_v$ that ends at $u$
and has $d$ as most significant priority
after the initial state $v$. 
Now, we update $\pi_{\ell+1} := \pi_{\ell}\tau$ 
and $\hat{\pi}_{\ell+1} := \hat{\pi}_{\ell} U (u, d)$. 
This way, $\pi_{\ell + 1}$ follows $s$ and the players attain \CKS, 
and $\hat{\pi}_{\ell+1}$ follows $\hat{r}$. Moreover,  
the two plays have the same
summary, and $\hat{\pi}_{\ell+1}$ ends at a position corresponding to
the last state of $\pi_{\ell+1}$.

For the infinite plays $\pi$ and $\hat{\pi}$ 
obtained at the limit, we have:
$\pi$ follows $s$ and matches $\hat{\pi}$ which follows $\hat{r}$, as required.

\end{proof}

The 
correspondence between strategies in the
abridged game and in the original game allows us to draw the
following conclusion.

\begin{prop}\label{prop:estimate} 
  Let $\calG$ be a coordination game that
  allows for recurring common knowledge of the state,  
  with $m$ states and a parity winning condition 
  over $d$ priorities. 
  \begin{enumerate}[(i)]
    \item 
      The grand coalition has a joint winning strategy for~$\calG$
      if, and only if, Coordinator has a positional winning strategy 
      for the abridged game~$\widehat{\calG}$, that is a
      perfect-information parity game with
      $md + 2^{md}$ positions and $d$ priorities.  
    \item
      If the grand coalition has a joint winning strategy in $\calG$, 
      then there exists a winning profile of finite-state strategies
      with $2^{\mathcal{O}(m^2\log m)}$ states.
\end{enumerate}
\end{prop}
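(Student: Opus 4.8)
The plan is to derive both parts from the transfer correspondence of Proposition~\ref{transfer-abridged} together with positional determinacy of parity games. For part~(i), I would first invoke the Emerson--Jutla theorem to conclude that in the perfect-information parity game $\widehat{\calG}$ exactly one of the two players, Coordinator or Nature, has a positional winning strategy. If Coordinator has a positional winning strategy $\hat{s}$, then Proposition~\ref{transfer-abridged}(i) yields a strategy profile $s$ in $\calG$ such that every play $\pi$ following $s$ has a matching play $\hat{\pi}$ following $\hat{s}$; since $\hat{\pi}$ is winning and matching plays share their winning status by Lemma~\ref{lem:abridged-play}, every such $\pi$ is winning, so $s$ is a joint winning strategy. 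Conversely, if Nature has a positional winning strategy $\hat{r}$, then for an arbitrary profile $s$ Proposition~\ref{transfer-abridged}(ii) produces a play $\pi$ following $s$ with a matching play $\hat{\pi}$ following $\hat{r}$; as $\hat{\pi}$ is losing for Coordinator, $\pi$ is losing in $\calG$, so no profile can be winning. By determinacy the two cases are exhaustive and mutually exclusive, which gives the stated equivalence.

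For the parameters of $\widehat{\calG}$ viewed as the turn-based game, the $md$ positions $(v,c) \in V \times C$ belong to Coordinator, while the nonempty action sets $U \subseteq V \times C$ together with the sink $\ominus$, at most $2^{md}$ in number, belong to Nature; this yields $md + 2^{md}$ positions. The priorities are inherited from $C$ (Coordinator positions $(v,c)$ keep priority~$c$ and Nature positions receive the least significant priority), so there are $d$ of them. This is a routine bookkeeping step.

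For part~(ii), I would start from the positional Coordinator strategy $\hat{s}$ guaranteed by part~(i) and implement the profile $s$ of Proposition~\ref{transfer-abridged}(i) as a finite-state machine. The essential observation is that $s$ plays \emph{memorylessly across knowledge gaps}: at each history where the players attain \CKS and the commonly known state is~$v$, it commits to a single joint strategy $t_v$ on the tree component $T_v$ (one realising $\hat{s}(v,\cdot)$), and by the reindexing argument underlying Lemma~\ref{lem:reindexing} the behaviour within $T_v$ depends only on the suffix of the history inside the current gap, not on the prefix leading to~$v$. Hence a finite-state strategy for player~$i$ need only remember which tree $T_v$ it is currently traversing together with its own observation-position inside that tree, resetting this memory whenever a leaf\,---\,the next \CKS point\,---\,is reached.

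It remains to bound the number of memory states, which I expect to be the main obstacle. By Theorem~\ref{thm:ck-bounded} the gap size is at most $m^2$, so each tree $T_v$ has depth at most $m^2 + 1$; since the joint strategy $t_v$ fixes all actions, the only branching is Nature's choice of successor, of which there are at most $m$ at each node. Consequently $T_v$ has at most $m^{m^2+1}$ nodes, and the distinct observation-positions of player~$i$ are no more numerous. Taking the union over the $m$ possible current states~$v$ leaves $m \cdot m^{m^2+1} = 2^{\mathcal{O}(m^2 \log m)}$ memory states, with the output function reading off the action that $t_v$ prescribes at the stored observation-position. This produces a winning profile of finite-state strategies of the claimed size.
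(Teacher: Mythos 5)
Your proposal is correct and follows essentially the same route as the paper: part~(i) combines Proposition~\ref{transfer-abridged}, Lemma~\ref{lem:abridged-play}, and Emerson--Jutla positional determinacy exactly as the paper does (the paper phrases the converse as ``no positional Nature strategy wins, hence Coordinator wins positionally,'' which is the same argument you give via the determinacy case split), and part~(ii) is the paper's finite-state implementation that stores the current commonly-known root~$v$ together with the position inside the gap, bounded via Theorem~\ref{thm:ck-bounded}. Your counting of tree nodes ($m\cdot m^{m^2+1}$) differs trivially from the paper's counting of observation sequences ($m\cdot m^{m^2}$), but both yield the stated $2^{\mathcal{O}(m^2\log m)}$ bound.
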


\begin{proof}
(i) If Coordinator has a positional winning strategy $\hat{s}$ 
in $\widehat{\calG}$, 
then the corresponding profile $s$ according to
Proposition~\ref{transfer-abridged}(i) is
winning in $\calG$, because every play $\pi$ that follows $s$
has a matching play in $\calG$ that follows $\hat{s}$
and is hence winning, which implies that $\pi$ is also
winning, by Lemma~\ref{lem:abridged-play}.

Conversely, assume that there exists a 
joint winning strategy $s$ in $G$.
By Proposition~\ref{transfer-abridged}(ii), 
for any arbitrary positional strategy~$\hat{r}$ of Nature, 
there exists a
play $\hat{\pi}$ that follows~$\hat{r}$ and matches some play~$\pi$ 
in $G$ which follows $s$ and thus wins. Hence, $\hat{\pi}$ is also
winning for Coordinator, by
Lemma~\ref{lem:abridged-play} which means that 
$\hat{r}$ in not winning for Nature.
By positional determinacy, it follows that
Coordinator has a positional winning strategy in $\widehat{G}$.

The state space of the abridged game  
is $V\times C \cup 2^{V \times C}$, it has $md + 2^{md}$ positions; 
the number $d$ of priorities is 
as in $\calG$.

\medskip\noindent
(ii) Let $\hat{s}: V \times C \to 2^{V \times C}$ 
be a winning strategy for Coordinator in the abridged game
$\widehat{\calG}$. 
As in the proof of Proposition~\ref{transfer-abridged}(i),
we assume, without loss of generality, 
that the strategy prescribes 
the same move $U = \hat{s}( v, c )$, at all 
positions corresponding to $v$, independently of the priority $c$; 
for each of the $m$ states $v \in V$,
the move $\hat{s}( v, c )$ is translated into
an imperfect-information strategy $t_v$ 
on the tree component $T_v$ of $\tck{G}$. 
We use these local strategies to
construct a joint winning strategy $s$
for the grand coalition in~$G$ as follows.

For each player~$i$, the component strategy $s^i$ is implemented 
by a reactive procedure that maintains, along the infinite sequence of
input observations, a 
record $(v, \rho^i)$  of the last state~$v$ 
about which the players attained common
knowledge and the observation history $\rho^i$ along the 
subsequent knowledge gap. 
Initially $v$ is set to $v_0$ and $\rho^i$ to $\beta^i( v_0 )$.
In each step, 
the procedure returns the action $a^i := t_v^i( \rho )$, inputs the next
observation $b^i$, and repeats with $\rho^i a^i b^i$ as a new
value for $\rho^i$, unless
this corresponds to a history in $G_v$ 
at which the players attain common knowledge of the current state~$v'$.
In that case, the root $v$ is replaced with $v'$ 
and the new value of $\rho^i$ becomes $b^i = \beta^i( v')$.
Each local strategy~$t_v^i$ can be represented
by a (tree shaped) automaton that outputs actions in response to observation 
sequences along knowledge gaps\,---\,of
length at most~$m^2$, by Theorem~\ref{thm:ck-bounded}. 
Since there are no more observations than game states, 
$m^{m^2}$ automaton states are
sufficent to store these responses, as well as the set of histories at
which the players attain \CKS.   
Globally, the strategy~$s^i$ of each player~$i$ combines $m$ local
strategies~$t_v^i$. 
Hence, we need at most $m \cdot m^{m^2} = 2^{\mathcal{O}(m^2 \log m)}$ 
many states to represent each component of the profile~$s$ 
by a strategy automaton.
\end{proof}

\subsection{Complexity}

A nondeterministic procedure for deciding whether there exists a joint
winning strategy in a game $\calG$ 
with \RCKS, according to Proposition~\ref{prop:estimate},  
can guess the abridged game $\widehat{G}$ and determine whether
Coordinator has a winning strategy in the obtained parity game with
perfect information.
The complexity is dominated by the verification of the transition
relations between Coordinator positions $(v, c)$ and Nature positions
$U \subseteq V \times C$, which involves guessing a witnessing
strategy profile $t_v$ over the tree $T_v$ 
such that $\outcome_v( t_v ) = U$. 
As we pointed out in the proof of Proposition~\ref{prop:estimate}, 
for a game $\calG$ of size $m$, such a
strategy~$t_v$ can be 
represented by a collection of $n$ trees of size 
$2^{\mathcal{O}(m^2\log m)}$, one for every player. Once the local
strategy trees $t^i$ are guessed, the verification that 
$\outcome_v( t ) = U$ is done in time linear in their size. 
Given the abridged game, a winning strategy for Coordinator can 
be guessed and verified in nondeterministic linear time 
with respect to the size 
$md + 2^{md}$ of~$\widehat{\calG}$ where $d$ is the number of priorites.
Overall, the procedure runs in $\NTIME(2^{\mathcal{O}(m^2\log m)})$,
that is, nondeterministic exponential time.

With a deterministic procedure, 
the abridged game can be constructed by exhaustive search over
witnessing strategies over the component trees in $\tck{G}$
in time $2^{2^{\mathcal{O}(m^2\log m)}}$. Once this is done, 
winning strategies for the obtained parity game
$\widehat{\calG}$ can be constructed in time
$\mathcal{O}(2^{md^2})$ using the basic iterative 
algorithm presented by Zielonka in~\cite{Zielonka98}. 
This concludes the proof of 
Theorem~\ref{thm:ck-complexity}.

\subsection{Observable $\omega$-regular conditions}

In view of applying our results to the practice of 
automated verification and design, we
briefly outline a procedure for synthesing distributed winning
strategies in games with winning conditions
expressed by standard specification formalisms rather than by
parity conditions. 

It is well known that 
every $\omega$-regular language of infinite words can be recognised
by a deterministic automaton with parity acceptance
condition~\cite{Thomas90}.
Given a game $\calG = (G, W)$ with an $\omega$-regular
winning condition $W$ represented by a deterministic
automaton~$\calA$ over attributes (colours) of the game states, 
we construct a parity game $\calG'$ 
on the game graph obtained as the 
synchronised product of $G$ with~$\calA$ 
and define a priority colouring that
associates to every state of the product graph 
$G'$ the priority of the automaton state in
its second component. 
Informally, this corresponds to running the automaton along the plays
in $G$ to monitor the $\omega$-regular winning condition over 
the (colouring of) game states by a parity condition 
over the automaton states. 
Then, the synthesis problem for
the original game $\calG$ reduces 
to the synthesis problem for the game $\calG'$ with a parity
condition. This transformation works as in the case of
perfect-information games detailed in~\cite[Chapter 2]{GraedelThoWil02}.

Now, let us assume that the game $G$ at the outset allows for \RCKS 
and that the winning condition $W$ is expressed
by an observable colouring. 
The latter assumption implies that, at every history in the product game $G'$, 
the current state of the automaton monitoring the winning condition is
common knowledge among the players. 
Since along every play in~$G$ the players attain \CKS infinitely often, 
and there are only finitely many automaton states, 
it follows that the players also attain common knowledge of the 
product state in $G'$ infinitely often.
Hence, the product game graph $G'$ allows
for \RCKS.
In conclusion, the synthesis problem for games that allow for \RCKS
and with observable $\omega$-regular conditions represented by deterministic
automata can be solved with the same generic complexity as games with
parity conditions. 

\begin{corol}\label{cor:reg-complexity}
For games that allow for recurring common knowledge of the
state, with observable~$\omega$-regular winning conditions 
represented by a deterministic parity automata,
\begin{enumerate}[(i)]
\item \label{itm:reg-decision-complexity}
the problem of deciding whether there exists a joint winning strategy 
is $\NEXPTIME$-complete;   
\item \label{itm:reg-construction-complexity}
if joint winning strategies exist, 
there also exists a winning profile of
finite-state strategies of at most exponential size, which 
can be synthesised in $2\textrm{-}\EXPTIME$. 
\end{enumerate}
\end{corol}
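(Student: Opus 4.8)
The plan is to reduce the synthesis problem for an observable $\omega$-regular condition to the parity case already settled in Theorem~\ref{thm:ck-complexity}. Given a game $\calG = (G, W)$ whose winning condition is $\omega$-regular and observable, I would first fix a deterministic parity automaton $\calA$ recognising $W$ over the colour alphabet, which exists by the theorem of~\cite{Thomas90}. The reduction then forms the synchronised product $G' = G \times \calA$, feeding the colour of each visited state into $\calA$ and assigning to every product state the priority of its automaton component. This is exactly the transformation used in the perfect-information setting~\cite[Chapter 2]{GraedelThoWil02}, and it carries over unchanged to imperfect information because the product leaves the indistinguishability relations on the $G$-component untouched; hence joint strategies in $\calG$ and in the resulting parity game $\calG'$ correspond one-to-one, with identical winning and losing status.

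The crux is to verify that the product game $\calG'$ itself allows for \RCKS, so that Theorem~\ref{thm:ck-complexity} becomes applicable. Here I would exploit the observability assumption: since distinct colours induce distinct observations for every player, the state of the deterministic monitor $\calA$ after any history is a function of the observed colour sequence, and is therefore common knowledge at every history. Concretely, I would argue that the automaton component of the product state is identical across all connected, pairwise-indistinguishable histories, so it never contributes to ambiguity. Combining this with the hypothesis that $G$ allows for \RCKS\,---\,the players attain \CKS of the $G$-state infinitely often along every play\,---\,and with the finiteness of $\calA$, I would conclude that the players attain common knowledge of the full product state infinitely often, i.e. $G'$ allows for \RCKS.

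With this invariant in place, the complexity bounds follow by applying Theorem~\ref{thm:ck-complexity} to $\calG'$ and checking that the blow-up is harmless. The product $G'$ has size polynomial in $|G|$ and $|\calA|$, so the $\NEXPTIME$ decision bound and the $2\textrm{-}\EXPTIME$ construction bound for $\calG'$ transfer verbatim to $\calG$, and the finite-state strategies of at most exponential size obtained for $\calG'$ project back onto finite-state strategies for $\calG$ of the same order. For the matching lower bound in~(\ref{itm:reg-decision-complexity}) no new work is needed: the $\NEXPTIME$-hardness established in the proof of Theorem~\ref{thm:ck-complexity} already uses only an observable safety condition, which is a special case of an observable $\omega$-regular condition represented by a deterministic parity automaton.

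I expect the main obstacle to be the middle step, namely rigorously showing that the product game inherits \RCKS. The delicate point is that common knowledge is a higher-order notion, so it does not suffice to note that each player individually tracks $\calA$; I must verify that the automaton component agrees along entire chains of pairwise-indistinguishable histories\,---\,the twins and connected histories of Section~\ref{sec:recurring}\,---\,which is precisely where observability of the colouring is essential. Once that invariant is secured, the passage from \CKS of the $G$-state to \CKS of the product state, and hence the preservation of the recurring property, is routine.
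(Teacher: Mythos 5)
Your proposal is correct and follows essentially the same route as the paper: reduce to the parity case of Theorem~\ref{thm:ck-complexity} via the synchronised product with the deterministic parity automaton, and use observability of the colouring to argue that the automaton component is common knowledge at every history, so the product game inherits \RCKS. The point you flag as delicate\,---\,that the monitor state agrees along entire chains of indistinguishable histories\,---\,is exactly the paper's justification, and your remark that the hardness construction already uses an observable safety condition matches how the paper obtains completeness.
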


To obtain precise upper bounds, we need to take
into account that the product construction
 increases the size of the game graph by a factor
corresponding to the size of the deterministic automaton.
More generally, for 
winning conditions specified in common verification formalisms, 
e.g., $\omega$-regular expressions, PDL, LTL, or nondeterministic automata,
we can apply the standard techniques for transforming the
specification into deterministic parity automaton 
to establish the complexity of the
synthesis problem for games with \RCKS.

\section{Conclusion}

We identified a new class of games with imperfect information 
for which the distributed synthesis problem 
can be solved effectively: 
It is decidable whether distributed winning strategies exist and, 
if so, a profile of finite-state winning strategies can be computed.  
Our procedure for solving the distributed synthesis problem
for infinite games with~\RCKS under 
parity winning condition matches the lower complexity bounds
for solving the particular case of  
multi-player safety games of finite horizon.
Whether a game belongs to the class is decidable efficiently.

Known decidable classes from the distributed-systems literature
rely on decomposing the global synthesis problem into  
separate instances, each involving only one player and the environment 
(Nature), that can be solved by automata-theoretic techniques 
for zero-sum games. 
The approach proved successful in several cases
where the dependencies between the behaviour of players are restricted,
typically by a hierarchical information order among them. 
Prominent examples are weakly-ordered architectures~\cite{KupfermanVar01} 
and doubly-flanked pipelines~\cite{MadhusudanT01}, both subsumed by 
Coordination Logic~\cite{FinkbeinerSch10}, or, 
in the asynchronous setting, 
well-connected architectures \cite{GastinSZ09}. 
As the synthesis procedures 
rely on solving nested instances for all players, 
these classes generally display nonelementary complexity. 
A class of more moderate \NEXPTIME complexity was recently proposed by 
Chatterjee et al.~\cite{CHOP13}. 
Here, the winning conditions are restricted to 
ensure an even higher degree of
independence: Essentially, 
each player can achieve her part of the global objective independently 
of the others. 

Our approach is orthogonal to the idea of
decomposing games into two-player zero-sum instances.
Instead of restricting the order of information 
or the game objective, our
decidability condition requires that players attain common knowledge 
of the game state infinitely often along every play. 
Intuitively, this allows to decompose the game tree 
into a sequence of time slices (the gaps) 
that can be solved independently, 
rather than reducing it to parallel zero-sum instances 
for each individual players. 
As a most simple case, our class subsumes 
repeated safety games 
of finite horizon with imperfect information, 
where the initial state (re-entered at each repetition) is observable.
Since safety games of finite horizon are
already \NEXPTIME hard to solve, this justifies the lower bound for our
solution procedure. Nevertheless, it is rewarding to see that
the synthesis problem for arbitrary games with \RCKS 
has a matching \NEXPTIME upper bound, 
in spite of covering much more general examples of games.

For instance, the class 
captures the interaction scenarios that proceed in
phases where imperfect information can arise and evolve in any form, provided
that at the outcome of each phase the participants synchronise 
in some state that will become common knowledge among them. 
This may be guaranteed explicitely, by restrictin to phase
games on acyclic graphs with observable exit states, or
implicitely, by ensuring that the 
players attain \CKS due to the structure of game graph.
We believe that designs of distributed systems 
tend to follow such patterns naturally, as developers 
introduce breakpoints or synchronisation barriers for monitoring
their system. One challenge is to develop concrete
communication schemes for distributed systems that yield 
games with~\RCKS. As an application to decentralised control, 
it will be interesting to identify
conditions under which common knowledge of an event can be
approximated by (a finite degree of) mutual knowledge.  

Apart from direct applications, we hope that our contribution may
help to demystify the subject of imperfect information in multiplayer
games haunted by the discouraging complexity results for the
general case. As pointed out by Muscholl and Walukiewicz in their
concise survey on  
distributed synthesis~\cite{MuschollWal14}, currently
we have no evidence that the constructions causing
undecidability in general game models 
would arise in real-world systems, but
we have no systematic justification for ruling them out either.
We may paraphrase the insights of the present
article by concluding
that, if imperfect information is admitted only as a temporary perturbation of
perfect information, then we can rule out undecidable situations.
It remains to investigate whether this intuition applies to
further relevant forms of 
perturbation in the information structure of games as, for
instance, communication delays or incomplete but perfect information.

\noindent\paragraph{Acknowledgements} 
This work was supported by the 
European Union Seventh Framework Programme under Grant
Agreement 601148 (CASSTING) and by the Indo-French Formal Methods
Lab (LIA Informel).
A preliminary report~\cite{BerwangerMat14} was presented at the
2014 Workshop on Strategic Reasoning.

\bibliographystyle{elsarticle-num}

\bibliography{all}

\end{document}